\newtheorem{thm}{Theorem}[section]
\newtheorem{cor}[thm]{Corollary}
\newtheorem{lem}[thm]{Lemma}
\newtheorem{prop}[thm]{Proposition}
\theoremstyle{remark}\newtheorem{remark}{Remark} }
\newtheorem{definition}{Definition}
\newcommand{\Z}{\mathbb Z}
\newcommand{\Q}{\mathbb Q}
\newcommand{\R}{\mathbb R}
\newcommand{\hZ}{\hat{\mathbb{Z}} }
\def \d{\mbox{\(\,\mathrm{d}\)}}
\begin{document}

\title[Short version of title]{ Bost-Connes-Marcolli system for the Siegel modular variety}
\author{Ismail Abouamal}
\email{abouamal@caltech.edu}
\maketitle

\begin{abstract}
We  present a generalization of the Connes-Marcolli $GL_{\mathbb Q,2}$-system by constructing a quantum statistical mechanical system. Specifically, we introduce the Connes-Marcolli system associated with the Siegel modular variety of degree $2$. We investigate the system's $\textmd{KMS}\beta$-states for various inverse temperatures $\beta>0$. Our results reveal a spontaneous phase transition occurring at $\beta=3$. We demonstrate that the system lacks a $\textmd{KMS}_\beta$ state for $\beta<3$ with $\beta \neq 1$, identify the explicit extremal Gibbs states for $\beta>4$, and prove that a unique $\textmd{KMS}_\beta$ state exists for every $\beta>0$ with $3<\beta\leq 4$.
\end{abstract}


\section{Introduction}

By the end of last century, Bost and Connes \cite{bost1995hecke}, motivated by the ideas introduced by B. Julia \cite{julia1990statistical}, constructed a quantum statistical mechanical system $(\mathcal{A},(\sigma_t)_{t\in \mathbb R})$ with unexpected  connection between class field theory of $\mathbb Q$ and the theory of $C^*$-dynamical systems. Bost and Connes defined a dense rational subalgebra $\mathcal{A}_{\mathbb Q} \subset \mathcal{A}$ such that the evaluation of equilibrium states at low temperatures on $\mathcal{A}_{\mathbb Q}$ generate the maximal abelian extension $\mathbb Q^{\textmd{ab}}$ of $\mathbb Q$. As a quantum statistical $C^*$-dynamical system, the Bost-Connes system has also the following interesting  thermodynamical property: it exhibits a phase transition at inverse temperature $\beta =1$. Moreover, this transition happens to be spontaneous in the sense that the symmetry of the system $(\mathcal{A},(\sigma_t)_{t\in \mathbb R})$ changes radically with small changes of temperature. For $\beta >1$, the system admits $\zeta(\beta)$ as its partition function and the extremal $\textmd{KMS}_\beta$ states have type $\textmd{I}$ while for $0 < \beta \leq 1$ it admits a unique $\textmd{KMS}_\beta$ states of type $\textmd{III}_1$.

Over the past several years, several generalizations of the BC system have been studied. The construction of Bost and Connes was first generalized by Connes and Marcolli \cite{connes2004physics} to quadratic number fields. They introduced the so called the $GL_{2,\mathbb Q}$-system and showed that maximal abelian extensions of quadratic number fields are generated by evaluating the ground states on a dense rational arithmetic subalgebra. In a subsequent work by Connes, Marcolli and Ramachandran \cite{connes2006kms}, the connection between the $GL_2$-system and CM-fields was formally studied. The construction of Connes and Marcolli was further generalized to arbitrary number fields by Ha and Paugam \cite{ha2005bost}. The former authors reformulated the $GL_{2,\mathbb Q}$ system in the adelic language, making explicit its relation to the Shimura datum $(GL_2,\mathbb H^{\pm})$ over any given number field. They generalized the construction of Bost and Marcolli to an arbitrary Shimura datum $(G,X)$ and introduced a formal definition of the abstract Bost-Connes-Marcolli system associated to the pair $(G,X)$. It was shown that these systems admit the Dedekind zeta function as the partition function and the group of connected components of the id\`ele class group acts as the symmetry group.

In \cite{laca2007phase}, Laca, Larsen and Neshveyev gave yet another reformulation of the $GL_{2,\mathbb Q}$-system in terms of groupoid $C^*$-algebras. They recovered the classification results of $\textmd{KMS}_\beta$ states obtained by Connes and Marcolli and proved uniqueness in the critical range  $1<\beta \leq 2$. A by-product of the author's work is the development of a general framework for analyzing dynamical systems of the type introduced by Connes and Marocolli. These tools involve the use of Hecke operators, ergodic theory and  equidistribution of Hecke points \cite{clozel2001hecke}.

In this paper we combine the approach of Laca, Larsen and Neshveyev \cite{laca2007phase} together with the work of Ha and Paugan \cite{ha2005bost} and generalize the work of Connes and Marcolli to the explicit case of the Shimura variety $(GSp_{4},\mathbb H_{2}^{\pm})$. There are several complications compared to the $GL_{\mathbb Q,2}$-system. First, the action of the group $Sp_{4}(\mathbb Z)$ on the set $\mathbb H_{2}^+ \times GSp_{4}(\mathbb A_{\mathbb Q,f})$ is not free and one can not easily resolve this issue by excluding the subset $\mathbb H_{2}^{+}\times \{0_4\}$. Our approach consists of first replacing the homogeneous space $\mathbb H_{2}^{+}$ by the quotient $K\backslash PGSp_4^+(\mathbb R)$ (where $K$ is a compact subgroup of $PGSp_{4}(\mathbb R)^+$) and then prove a one-to-one correspondence of the $\textmd{KMS}_\beta$ states between the two systems. We first establish the correspondence between the set of $\textmd{KMS}_\beta$ states and Borel measures, which allows us to study the properties of those measures  instead of working directly with the $\textmd{KMS}_\beta$ states. The second difficulty arises from the structure of the Hecke pair $(GSp_{2n}(\mathbb Q),Sp_{2n}(\mathbb Z))$. As we show in this paper, the case $n=2$ is already computationally demanding and even in this case it is not always possible to directly apply some techniques used in \cite{laca2007phase} (especially in the critical interval $3 < \beta \leq 4$). As a first result we show in Theorem \ref{label 68} that the $GSp_{4}$-system does admit any $\textmd{KMS}_{\beta}$ state for $0 <\beta <3$ and $\beta\notin \{1,2\}$. We next show that the extremal states in the region $\beta > 4$ correspond to Gibbs states and give an explicit construction of these states in Theorem \ref{label 75}. The final main result (Theorem \ref{label 100}) is uniqueness theorem: we show that in the region $3<\beta \leq 4$, the $GSp_{4}$-system admits a unique $\textmd{KMS}_\beta$ state. To show this, we split the proof into two parts. The main ingredient of the first is the convergence of Dirichlet $L$-functions for non-trivial characters. The second part relies on a variant of the technique used in \cite{laca2007phase}. As stated above, the structure of the Hecke pair $(GSp_{2n}(\mathbb Q),Sp_{2n}(\mathbb Z))$ becomes less explicit for $n\geq 2$ and in order to compute the number of right representatives in a double coset one has to work with upper bounds instead of explicit formulas. We achieve this by using the root datum of the group $GSp_{2n}$ and use the equidistribution of Hecke points for the group $GSp_{2n}$ to establish the second ergodicity result.

\subsection*{Acknowledgment}
The author would like to thank his advisor Matilde Marcolli for her guidance throughout this project. The author is also grateful to George Elliott, Serguey Neshveyev and Jean Renault for their valuable comments on this work.

\par This work has been partially supported by the NSERC Postgraduate Scholarship PGSD \textmd{2-535022-2019}.
\subsection*{Notations and conventions}

\hfill\\
We use the common notations $\mathbb N,\mathbb Z,\mathbb Q,\mathbb R,\mathbb C$ together with  $\mathbb R^*_+=(0,+\infty)$ and  $\mathbb Z^+=R^*_+ \cap \mathbb Z$;

If $R$ is a ring, we denote its group of multiplicative units by $R^{\times}$;\\
We use the notation $\textmd{Mat}_{n}(R)$ for the ring of square matrices with entries in $R$. We denote by $E_{ij}$ the usual elementary matrix with $1$ in the $(i, j)$ position and $0$
elsewhere;\\
The group of units in the ring $\textmd{Mat}_{n}(R)$ is denoted by $GL_n(R)$. If $A$ is a square matrix, then $A^t$ stands for its transpose. If $A_1,\dots,A_n$ are square matrices we denote by $\textmd{diag}(A_1,\dots,A_n)$ the square matrix with $A_1,\dots, A_n$ as diagonal blocks and $0$'s otherwise. We use $1_n$ and $0_n$ to denote the $n\times n$ identity matrix and the a rectangular zero matrix;\\
$\abs{F}$ denotes the cardinality of a finite set $F$;\\
the set of prime numbers is denoted by $\mathcal{P}$;\\
given a nonempty finite set of prime numbers $F \subset \mathcal{P}$, we denote by $\mathbb N (F)$ the unital multiplicative subsemigroup of $\mathbb N$ generated by $p\in F$;\\
we write $f(x) = O(g(x))$ if $\abs{f(x)}/g(x)$ is bounded at $+\infty$;\\ 
for two sequences $\{a_n\}$ and $\{b_n\}$, we write $a_n \sim b_n$ if $\lim_{n} (a_n/b_n) =1$ and $$\sum_n a_n \sim \sum_n b_n$$ if the two series are simultaneously divergent or convergent;\\
if $Y$ is subset of $X$, we denote $Y^c = X \backslash Y =\{a\in X: a\notin Y\}$;\\
for a number field $K$, we denote by $\mathbb A_{K}= \mathbb A_{K,f}\times \mathbb A_{K,\infty}$ the ad\`ele ring of $K$, where $\mathbb A_{K,f}$ is the ring of finite ad\`eles and $\mathbb A_{K,\infty}$ the infinite ad\`eles of $K$. The ring of integers of $K$ is denoted by $\mathcal{O}_{K}$.

\section*{Background} 

\subsection{ Operator algebraic formulation of quantum statistical mechanics} 

We briefly review the operator algebraic formulation of quantum statistical mechanics. For a more comprehensive treatment of this material, we refer the reader to \cite{bratellirobinson1}, \cite{bratellirobinson2}.

\par Given an absract $C^*$-algebra $\mathcal{A}$, we know from Gelfand–Naimark theorem that $\mathcal{A}$ is $^*$-isomorphic to a $^*$-subalgebra of the algebra of bounded operators on a Hilbert space. This result, together with the axioms of quantum mechanics, motivates the following operator algebraic formulation of quantum statistical mechanics. 
\begin{definition}
A quantum statistical mechanical system $(\mathcal{A},(\sigma_t)_{t\in \mathbb R})$ is a $C^*$-algebra $\mathcal{A}$ together with a strongly continuous one-parameter group of automorphisms $(\sigma_{t})_{t\in \mathbb R}$; that is, the map $$t \rightarrow \sigma_t(a)$$ is norm continuous for every $a\in \mathcal{A}$.
\end{definition}

 We also say that the pair $(\mathcal{A},(\sigma_t)_{t\in \mathbb R})$ is a $C^*$-dynamical system. One should view $\mathcal{A}$ as the algebra of observables of a quantum system with time evolution implemented by the one-parameter group $(\sigma_t)_{t\in \mathbb R}$. If the algebra $\mathcal{A}$ is unital (with unit element $e$), a \textit{state} on $\mathcal{A}$ is a linear functional $\phi: \mathcal{A} \rightarrow \mathbb C$ satisfying the following normalization and positivity conditions: 
\begin{align}
    \phi(e)=1,\quad
    \phi(a^*a)\geq 0 \nonumber
\end{align} 
 
If $a \in \mathcal{A}$ is self adjoint, we have a decomposition of the form $a=a^+ -a^-$ where $a^+,a^- \in \mathcal A^+$ and one should think of $\phi(a)$ as the expectation value of the observable $a$ in the  physical state $\phi$.

\par When the algebra $\mathcal{A}$ is non-unital we shall always work with \textit{weights} (which we will define shortly) first and then replace the normalization condition $\phi(e)=1$ by
 
 \begin{equation}\label{label 69}
    \norm{\phi}:=\sup_{x\in \mathcal{A},\norm{x}\leq 1} |\phi(x)| =1
\end{equation}
 
to get a state. A weight on $\mathcal{A}$ is a function $\phi: \mathcal{A}^+ \rightarrow [0,\infty]$ (here $\mathcal{A}^+$ is the convex cone of positive elements in $\mathcal{A}$) such that $\phi(\lambda a +b )=\lambda \phi(a)$ and  $\phi(\lambda a )=\lambda \phi(a)$ for $\lambda\in \mathbb R^+$ and all $a,b\in \mathcal{A}^+$. If the algebra  $\mathcal{A} $ is unital, any weight can be written as $\phi= \lambda \omega$, where $\lambda>0$ and $\omega$ is a state. Hence the two notions of states and weights essentially coincide in the unital case. In general, this is not true in the nonunital case. As noted in \cite{christensen2020structure}, one should think of states as probability Borel measures on noncommutative spaces while weights correspond to regular Borel measures in the commutative case.

\par In statistical mechanics we are  interested in the so-called thermal equilibrium states at different temperatures. The Kubo-Martin-Shwinger (\textmd{KMS}) \cite{haag1967equilibrium, LocalQuantumPhycis} condition at inverse temperature $\beta$ was proposed in \textmd{1961} by Haag, Winnik and Hugenholtz as an equilibrium condition in the $C^*$-algebraic setting of statistical mechanics. We recall the notion of a $\textmd{KMS}_\beta$-weight from \cite{combes1971poids}. See and \cite{bratellirobinson1} \cite{bratellirobinson2} for a more detailed discussion on $\textmd{KMS}_\beta$-weights.

\begin{definition} \label{labe 77}
Let $A$ be a $C^*$-algebra, $\phi$ a weight on $A$ and $\sigma_t$ a strongly continuous one-parameter group of automorphisms of $A$. We set  $\mathcal{N}_{\phi} =\{a\in A  \mid \phi(a^*a)< \infty\}$ and let $\beta >0$. We say that $\phi$ is a $\textmd{KMS}_\beta$-weight if:

\begin{enumerate}
    \item $\phi \circ \sigma_t=\phi$ for every $t\in \mathbb R$
    \item For every $a,b \in \mathcal{N}_{\phi} \cap \mathcal{N}_{\phi}^* $, there exists a bounded continuous function $F$ on the closed strip $\Omega=\{x\in \mathbb C \mid 0 \leq  \textmd{im} \leq \beta \}$ and holomorphic on $\Omega^{0}$ such that
    
    \begin{equation*}
        F(t)=\phi(a\sigma_t(b)),\quad F(t+i\beta)=\phi(\sigma_t(b)a)
    \end{equation*}
\end{enumerate}
\end{definition}

Although this was the original definition of the $\textmd{KMS}_\beta$ condition, in practice the following equivalent characterization (See \cite[Theorem 6.36]{kustermans1997kms}) is often used.

\begin{prop}
Let $\phi$ be a weight on a $C^*$-dynamical system $(\mathcal{A},(\sigma_t)_{t\in \mathbb R})$. Then $\phi$ is a $\textmd{KMS}_\beta$ weight if and only if
\begin{enumerate}
    \item $\phi \circ \sigma_t=\phi$ for every $t\in \mathbb R$
    \item For every $\sigma$-analytic element $a$ in $\mathcal{A}$, we have
    
    \begin{equation*}
    \phi(aa^*)=\phi (\sigma_{\frac{i\beta}{2}} (a)^*  \sigma_{\frac{i\beta}{2}} (a)).
\end{equation*}
\end{enumerate}
\end{prop}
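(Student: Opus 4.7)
The plan is to establish both directions via analytic-continuation and approximation arguments, with $\sigma$-analytic elements serving as a bridge between the two formulations. Since condition (1) is literally identical in the two statements, the real content is matching up the strip-KMS condition of Definition \ref{labe 77} with the symmetric identity on analytic elements.

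For the forward direction (Definition \ref{labe 77} $\Rightarrow$ Proposition), I would first handle the case of a $\sigma$-analytic element $a \in \mathcal{N}_\phi \cap \mathcal{N}_\phi^*$; the general case reduces to this by density of analytic elements in $\mathcal{N}_\phi \cap \mathcal{N}_\phi^*$ with respect to the appropriate GNS seminorm. Since $a^*$ is also $\sigma$-analytic (with $\sigma_z(a)^* = \sigma_{\bar z}(a^*)$), the map $z \mapsto \phi(a^* \sigma_z(a))$ extends from the strip to an entire function $F$, and condition (2) of Definition \ref{labe 77} applied to the pair $(a^*, a)$ yields $F(0) = \phi(a^*a)$ and $F(i\beta) = \phi(aa^*)$. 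Extending the invariance $\phi \circ \sigma_t = \phi$ to complex times on analytic elements (which follows from the identity theorem applied to the entire function $z \mapsto \phi(\sigma_z(b))$, constant on the real axis by condition (1)), I can rewrite
\[
F(i\beta) \;=\; \phi\bigl(a^* \sigma_{i\beta}(a)\bigr) \;=\; \phi\bigl(\sigma_{-i\beta/2}(a^*\sigma_{i\beta}(a))\bigr) \;=\; \phi\bigl(\sigma_{-i\beta/2}(a^*)\,\sigma_{i\beta/2}(a)\bigr),
\]
and after relabeling $a \leftrightarrow a^*$ and rewriting $\sigma_{-i\beta/2}(a^*) = \sigma_{i\beta/2}(a)^*$ this is the symmetric identity of the Proposition.

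For the converse, I would start from the symmetric identity and use polarization (substituting $\lambda a + \mu b$ for $a$ and extracting the coefficient of $\lambda\bar\mu$) to obtain the bilinear form $\phi(a b^*) = \phi(\sigma_{i\beta/2}(b)^*\,\sigma_{i\beta/2}(a))$ for all $\sigma$-analytic $a, b$. For arbitrary $x, y \in \mathcal{N}_\phi \cap \mathcal{N}_\phi^*$, I would apply the standard Gaussian smoothing $x_n = \sqrt{n/\pi}\int_{\mathbb R} e^{-nt^2}\sigma_t(x)\,dt$ to produce $\sigma$-analytic approximants, then define $F_n(z) = \phi(x_n \sigma_z(y_n))$, entire in $z$ by analyticity of $y_n$. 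The polarized symmetric identity combined with a Hadamard three-line estimate shows that $|F_n|$ is bounded on the closed strip $0 \leq \mathrm{Im}(z) \leq \beta$ by a product of GNS seminorms, and passing to the limit $n \to \infty$ produces the strip-KMS function $F$ required by Definition \ref{labe 77}, with the two boundary values matching $\phi(x \sigma_t(y))$ and $\phi(\sigma_t(y) x)$ respectively.

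The main obstacle is the topological control on the approximation: one must verify that the Gaussian smoothing preserves membership in $\mathcal{N}_\phi \cap \mathcal{N}_\phi^*$, that the associated GNS seminorms behave well under the limit $n\to\infty$, and that the boundary values of $F_n$ converge to the required expressions uniformly on compact sets in $t$. Since $\phi$ is only a weight and not necessarily bounded, one cannot simply invoke norm continuity; instead, these technical points rest on the detailed machinery of \textmd{KMS} weights on $C^*$-algebras developed in \cite{kustermans1997kms}, whose Theorem 6.36 is precisely the equivalence quoted here.
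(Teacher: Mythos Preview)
The paper does not actually prove this proposition: its entire proof is the single line ``See \cite[Theorem 6.36]{kustermans1997kms}.'' So there is nothing to compare at the level of argument---the paper simply defers to the reference, which is exactly what you yourself acknowledge in your final paragraph.

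Your sketch is a reasonable outline of how the result is established in the Kustermans reference (forward direction via analytic continuation of $z\mapsto \phi(a^*\sigma_z(a))$ together with complex-time invariance; converse via polarization, Gaussian regularization to produce analytic approximants, and a three-line bound to control the strip function). The technical cautions you flag---preservation of $\mathcal N_\phi\cap\mathcal N_\phi^*$ under smoothing, GNS-seminorm control of the limits, uniform convergence on the boundary---are precisely the points that make the weight case harder than the bounded-state case and that occupy most of the work in \cite{kustermans1997kms}. Since the paper treats the proposition as a quoted black box, your write-up already goes well beyond what the paper provides.
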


\subsection{Groupoid algebras and Hecke pairs}

An important class of $C^*$-dynamical systems arises as the algebra of compactly supported functions on (locally compact) topological groupoids. We review, without proofs, the general results related to these systems and derived in the first two sections of \cite{laca2007phase}. For general information about grouopoids and groupoid $C^*$-algebras, we refer the reader to \cite{renault2006groupoid}.

\par Consider a countable group $G$ acting on a locally compact second countable topological space $X$. The transformation groupoid is the space $G\times X$ with unit space $X$ with the source and target maps given by $s(g,x)=x$ and $t(g,x)=gx$

and the composition is defined by

\begin{equation*}
    (g,x)(h,y)=(gh,y) \quad \text{if}\,\,\, x=hy.
\end{equation*}

If $\Gamma$ is a subgroup of $G$ and the action of $\Gamma$ is free and proper, we introduce a new groupoid $\Gamma \backslash G \times_{\Gamma} X$ by taking the quotient of $G\times X$ by the following action of $\Gamma \times \Gamma $:

\begin{equation}
    (\gamma_1,\gamma_2)(g,x):=(\gamma_1 g \gamma_2^{-1},\gamma_2x). \label{label 63}
\end{equation}

In all our settings, the main motivation for taking the quotient by this action is physical. In fact, to obtain a well behaved partition function of the $C^*$-dynamical system we will introduce shortly, one should necessarily take the quotient by the group $\Gamma \times \Gamma$ (See also \cite{connes2004physics} for another motivation based on the $\mathbb Q$-lattice picture).\\
The groupoid $\mathcal{G}=\Gamma \backslash G\times_{\Gamma} X$ is an \textit{\'etale groupoid}, i.e. the source map $s$ (and hence the target map) is a local homeomorphism. In particular, it has discrete fibers \cite[Prop. 2.8]{renault2006groupoid}. In this case, we introduce the algebra $C_c(\mathcal{G})$ of continuous compactly supported functions on the quotient space $\Gamma \backslash G \times_{\Gamma} X$ and we define the convolution of two such functions by

\begin{equation}
        (f_1* f_2)(\omega)= \sum_{\omega_1\omega_2=\omega} f_1(\omega_1)f_2(\omega_2)=\sum_{\omega_1\in \mathcal{G}^{t(\omega)}} f_1(\omega_1)f_2(\omega_1^{-1}\omega). \label{label 61}
\end{equation}
Notice that this is a finite sum since $f_1$ has compact support and the fibers are discrete. If the action of $\Gamma$ is proper but not free, the quotient space $\Gamma \backslash G \times_{\Gamma} X$ is no longer a groupoid (cf. Proposition \ref{label 74}). Under the assumption that $X$ is a homogeneous space of the form $\tilde{X}/H$, where now the action of $G$ on $\tilde{X}$ is free and proper, we can still define a natural convolution algebra from the groupoid algebra $C_c (\Gamma \backslash G \times_{\Gamma} \tilde{X})$ as was done in the $GL_2$-case in \cite{connes2004physics}. More specifically, viewing the elements of $C_c (\Gamma \backslash G \times_{\Gamma} \tilde{X})$ as $\Gamma \times \Gamma$-invariant functions, we rewrite the convolution product \eqref{label 61} as

\begin{equation}
    (f_1* f_2)(g,x)= \sum_{s\in \Gamma \backslash G} f_1(gs^{-1},sx)f_2(s,x). \label{label 62}
\end{equation}

We then define a convolution algebra on the quotient $\Gamma \backslash G \times_{\Gamma} X$ by restricting the convolution product \eqref{label 62} to weight zero functions on $C_c (\Gamma \backslash G \times_{\Gamma} \tilde{X})$, namely functions satisfying 

\begin{equation*}
    f(g, x \alpha)= f(g,x),\quad \forall \alpha \in H.
\end{equation*}

Define an involution on $C_c(\Gamma \backslash G\times_{\Gamma} X)$ by 

\begin{equation*}
    f^*(\omega)=\overline{f(\omega^{-1})}.
\end{equation*}

For each $x\in X$, we have a $*$-representation $\pi_x$ of $C_c(\Gamma \backslash G\times X)$ on the Hilbert space $l^{2}(\Gamma \backslash G)$ defined by

\begin{equation*}
    (\pi_x)(f) \delta_{\Gamma h} = \sum_{g\in \Gamma \backslash G} f(gh^{-1},hx) \delta_{\Gamma g},\quad f\in C_c(\Gamma \backslash G\times_{\Gamma} X).
\end{equation*}

One can show that the operators $\pi_x(f)$ are uniformly bounded \cite{ha2005bost,laca2007phase} and we denote by $\mathcal{B}=C_r^*(\Gamma \backslash G\times X)$ the completion of $C_c(\Gamma \backslash G\times X)$ in the reduced norm 

\begin{equation}
    \norm{f}=\sup_{x\in X}\norm{\pi_{x}(f)}. \label{label 64}
\end{equation}

In fact, it is easy to verify that that \eqref{label 64} defines a $C^*$-seminorm. The fact that we get a norm follows from the identity
\begin{equation*}
    \langle \pi_{x}f \delta_{\Gamma g_1},\delta_{\Gamma g_2} \rangle = f(g_2 g_1^{-1},g_1x).
\end{equation*}

Let $Y$ be any clopen $\Gamma$-invariant subset of $X$ and denote by $\Gamma \backslash G \boxtimes_\Gamma Y$ the quotient of the space 

\begin{equation*}
    \{(g,y) \mid g\in G,\,\,\, y\in Y,\,\,\, gy\in Y\},
\end{equation*}

by the action of $\Gamma \times \Gamma$ defined in \eqref{label 63}. We denote by $C_c(\Gamma \backslash G \boxtimes_\Gamma Y)$ the algebra of compactly supported functions on $\Gamma \backslash G \boxtimes_\Gamma Y$ with the convolution product given by

\begin{equation*}
    (f_1* f_2)(g,y)= \sum_{\substack{s\in \Gamma \backslash G\\
    sy \in Y}} f_1(gs^{-1},sy) f_2(s,y),
\end{equation*}

and involution

\begin{equation*}
    f^*(g,y)=\overline{f(g^{-1},gy)}.
\end{equation*}

We let $\mathcal{A}=C_r^* (\Gamma \backslash G \boxtimes_\Gamma Y)$ be the corner algebra $e\mathcal{B}e$, where $e$ is the $\Gamma \times \Gamma$-invariant function on $ G\times X$ defined by

\begin{equation*}
    e  (g,x)= 
\begin{cases}
1 \quad \text{if}\,\,\, (g,x) \in \Gamma \times Y\\
0\quad \text{otherwise}.
\end{cases}
\end{equation*}

Given $x\in X$, we put 

\begin{equation}\label{label 72}
    G_x=\{g\in G \mid gx \in Y\}.
\end{equation}

Then we have a representation of $C_c(\Gamma \backslash G\boxtimes_\Gamma Y)$ on the Hilbert space $\mathcal{H}_x=l^2(\Gamma \backslash G_x)$ given by

\begin{equation*}
    \pi_{x}(f) \delta_{\Gamma h}= \sum_{g\in \Gamma \backslash G_x} f(gh^{-1},hx) \delta_{\Gamma g},\quad f\in C_c(\Gamma \backslash G\boxtimes_\Gamma Y),
\end{equation*}

and the algebra $\mathcal{A}$ coincides (\cite{laca2007phase}) with the completion of $C_c(\Gamma \backslash G\boxtimes_\Gamma Y)$ in the norm defined by

\begin{equation*}
    \norm{f}=\sup_{y\in Y}\norm{\pi_{y}(f)}.
\end{equation*}

Assume that we are given a homomorphism 
\begin{equation*}
    N: G \longrightarrow \mathbb R _+^{*},
\end{equation*}

such that $\Gamma \subseteq \ker(N)$. We then define a one-parameter group of automorphisms of $\mathcal{B}$ by

\begin{equation*}
    \sigma_t(f)(g,x)= N(g)^{it} f(g,x),\quad \text{for}\,\,\, f\in C_c(\Gamma \backslash G\times X).
\end{equation*}

The operator on $l^2(\Gamma \backslash G)$ given by

\begin{equation*}
    H_x \delta_{\Gamma g} = \log N(g) \cdot \delta_{\Gamma g}
\end{equation*}

is the Hamiltonian and the dynamics $\sigma_t$ is then spatially implemented as

\begin{equation*}
    \pi_{x}(\sigma_t(a))= e^{itH_x} \pi_x(a) e^{-it H_x},\quad \forall x\in X, \forall a\in \mathcal B.
\end{equation*}

The following result will be the starting point of our $\textmd{KMS}_\beta$-analysis of the dynamical system $(\mathcal{A},\sigma_t)$. 

\begin{prop}\label{label 65}
Let $G$, $X$ and $Y$ as described earlier and suppose $\Gamma$ acts freely on $X$. Then for $\beta >0$ there exists a one-to-one correspondence between $\textmd{KMS}_{\beta}$ weights $\phi$ on $\mathcal{A}$ with domain of definition containing $C_c(\Gamma \backslash Y)$ and Radon measures $\mu$ on $Y$ such that

\begin{equation*}
    \mu(gB)=N(g)^{-\beta}\mu(B)
\end{equation*}
\end{prop}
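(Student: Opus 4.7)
The plan is to construct the correspondence in both directions, following the standard translation between $\textmd{KMS}_\beta$-weights on transformation groupoid $C^*$-algebras and quasi-invariant Radon measures, as carried out in \cite{laca2007phase} in the $GL_2$ setting. Given a $\textmd{KMS}_\beta$-weight $\phi$ whose domain of definition contains $C_c(\Gamma\backslash Y)$, I view $C_c(\Gamma\backslash Y)$ as the abelian $C^*$-subalgebra of $\mathcal{A}$ consisting of those functions supported on the unit space of the groupoid, namely on pairs $(\gamma,y)$ with $\gamma\in\Gamma$. Because $\Gamma$ acts freely on $X$, the quotient $\Gamma\backslash Y$ is Hausdorff and locally compact, so the restriction $\phi|_{C_c(\Gamma\backslash Y)}$ is a positive linear functional on a commutative $C^*$-algebra. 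Riesz representation then yields a unique Radon measure on $\Gamma\backslash Y$, which lifts to a $\Gamma$-invariant Radon measure $\mu$ on $Y$.

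To recover the scaling identity $\mu(gB)=N(g)^{-\beta}\mu(B)$, I choose, for each $g\in G$ and each relatively compact Borel set $B\subset Y$ with $gB\subset Y$, a sequence of suitably symmetrized nonnegative test elements $a_n\in C_c(\Gamma\backslash G\boxtimes_\Gamma Y)$ concentrated on the class of $g$ whose spatial part converges to $\mathbb{1}_B$. The convolutions $a_n^**a_n$ and $a_n*a_n^*$ are then both supported on the unit space, so $\phi$ evaluates them by integration against $\mu$, yielding expressions involving $\mu(g^{-1}\,\cdot\,)$ and $\mu(\,\cdot\,)$ respectively. Since $N$ is constant on the coset $\Gamma g$, the dynamics acts on $a_n$ by the scalar $N(g)^{it}$, and applying the $\textmd{KMS}_\beta$ identity via the holomorphic extension in Definition \ref{labe 77} and letting $n\to\infty$ produces precisely the scaling relation.

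Conversely, given a Radon measure $\mu$ on $Y$ with the scaling property, I define
\begin{equation*}
   \phi(f) = \int_{\Gamma\backslash Y} \operatorname{Tr}\bigl(\pi_y(f)\bigr)\,d\mu(y),\qquad f\in C_c(\Gamma\backslash G\boxtimes_\Gamma Y)^+,
\end{equation*}
where the trace is taken with respect to the canonical orthonormal basis $\{\delta_{\Gamma h}\}_{h\in\Gamma\backslash G_y}$ of $\mathcal{H}_y$. The invariance $\phi\circ\sigma_t=\phi$ is immediate since only the diagonal term contributes to the trace on a unit. That $\phi$ satisfies the $\textmd{KMS}_\beta$ condition reduces, on elementary elements of the $a_n$ type, to the scaling identity on $\mu$, a computation essentially dual to the one above. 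Mutual inverseness then follows: restricting the weight built from $\mu$ back to $C_c(\Gamma\backslash Y)$ recovers integration against $\mu$ tautologically, and the measure extracted from a given $\phi$, when plugged into the trace formula, reproduces $\phi$ on a dense subalgebra.

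The main technical obstacle is the construction of the test elements $a_n$ as genuine $\Gamma\times\Gamma$-invariant functions of compact support, with well-controlled convolutions and tractable behavior under the analytic continuation of $\sigma_t$ in the KMS strip; because the $\Gamma\times\Gamma$-action on $G$ is not transparent for a general non-normal $\Gamma$, some care is needed in the symmetrization and in isolating the contribution of a single double coset. Beyond this, verifying that $\phi$ defined from $\mu$ is lower semicontinuous and that its domain of definition is dense enough to determine $\phi$ on all of $\mathcal{A}$ requires standard but nontrivial bookkeeping from the theory of $C^*$-algebraic weights.
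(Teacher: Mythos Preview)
The paper does not give its own proof of this proposition; it simply cites \cite[Proposition~2.1]{laca2007phase}. Your outline of the forward direction---restrict $\phi$ to $C_c(\Gamma\backslash Y)$, apply Riesz to obtain $\mu$, then use the $\textmd{KMS}_\beta$ identity on test elements concentrated on a single coset to extract the scaling relation---is the correct shape of that argument.

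There is, however, a genuine error in your converse construction. You define
\[
   \phi(f)=\int_{\Gamma\backslash Y}\operatorname{Tr}\bigl(\pi_y(f)\bigr)\,d\mu(y),
\]
with the trace taken over all of $\mathcal{H}_y=\ell^2(\Gamma\backslash G_y)$. But for $f$ supported on the unit space one computes $\operatorname{Tr}(\pi_y(f))=\sum_{h\in\Gamma\backslash G_y}f(e,hy)$, not $f(e,y)$: every $h$ with $hy\in Y$ contributes a diagonal term. Integrating this sum over $\Gamma\backslash Y$ over-counts each orbit and will typically diverge (there is no damping factor), so the resulting functional is not the weight corresponding to $\mu$ and your claim that ``restricting the weight built from $\mu$ back to $C_c(\Gamma\backslash Y)$ recovers integration against $\mu$'' fails. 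The correct formula is the one the paper records immediately after the statement as~\eqref{label 70}:
\[
   \phi(f)=\int_{\Gamma\backslash Y} f(e,y)\,d\nu(y),
\]
with $\nu$ the measure on $\Gamma\backslash Y$ induced by $\mu$. The trace expression you wrote is closer to the Gibbs state of Theorem~\ref{label 75}, but there the trace is weighted by $e^{-\beta H_y}$ and evaluated at a single point rather than integrated.
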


for every $g\in G$ and every Borel compact subset $B\subseteq Y$ such that $gB \subseteq Y$. If $\nu$ denotes the induced measure on $\Gamma \backslash Y$, then the corresponding weight $\phi$ is given by

\begin{equation}\label{label 70}
    \phi(f)=\int_{\Gamma \backslash Y} f(e,y) d\nu (y),\quad f\in \mathcal{A}.
\end{equation}
 
\begin{proof}
See \cite[Proposition 2.1]{laca2007phase}
\end{proof}

Recall that if $G$ be a group and $\Gamma$ a subgroup, the pair $(G,\Gamma)$ is called is called a \textit{Hecke pair} if for any $a \in G$
\begin{equation*}
    [\Gamma : \Gamma \cap a^{-1} \Gamma a] < \infty.
\end{equation*}

If $(G,\Gamma)$ is a Hecke pair then every double coset of $\Gamma$ contains finitely many right and left cosets of $\Gamma$:
\begin{equation*}
    \Gamma a \Gamma = \bigsqcup_ {\gamma \in \Gamma \backslash (\Gamma \cap a \Gamma a^{-1}) } \gamma a\Gamma = \bigsqcup_ {\gamma \in \Gamma \backslash (\Gamma \cap a^{-1} \Gamma a) } \Gamma g \gamma,
\end{equation*}
so $\abs{\Gamma \backslash \Gamma a \Gamma} =  [\Gamma : \Gamma \cap a^{-1} \Gamma a]$. We denote the cardinality of this set by $\deg_{\Gamma}(a)$.

Let $\beta \in \mathbb R$ and $S$ is a semisubgroup of $G$ containing $\Gamma$. Then we define 

\begin{equation}
    \zeta_{S,\Gamma}(\beta):= \sum_{s\in \Gamma \backslash S} N(s)^{-\beta}=\sum_{s\in 
    \Gamma \backslash S/\Gamma } N(s)^{-\beta} \deg_{\Gamma}(s).
\end{equation} \label{label 20}

Let $G$ be a group acting on a set $X$ and suppose $(G,\Gamma)$ is a Hecke pair. The \textit{Hecke operator} associated to $g\in G$ is the operator $T_g$ on $\Gamma$-invariant functions on $X$ defined by

\begin{equation}
    (T_gf) (x) = \frac{1}{\deg_\Gamma(g)} \sum_{h\in \Gamma \backslash \Gamma g \Gamma } f(hx). \label{label 24}
\end{equation}

\subsection {Abstract Bost-Connes-Marcolli systems}
\label{Abstract BCM}
In this section, we briefly recall, without proofs, the general properties of abstract Bost-Connes-Marcolli systems introduced in \cite{ha2005bost}.\\
A BCM datum is a tuple $\mathcal{D}=(G,X,V,M)$ with $(G,X)$ a Shimura datum, $(V,\psi)$ a faithful representations of $G$ and $M$ an envelopping semigroup (see definition \ref{label 82}) for $G$ contained in $ \textrm{End}(V)$. A level structure on $\mathcal{D}$ is a triple $\mathcal{L}=(L,K,K_M)$ with $L\subseteq V$ a lattice, $K\subseteq G(\mathbb{A}_f)$ a compact subgroup and $K_M\subseteq M(\mathbb{A}_f)$ a compact open subsemigroup such that
\begin{itemize}
    \item $K_M$ stabilizes $L \otimes_{\mathbb{Z}} \hZ$
    \item $\psi(K)$ is contained in $K_M$.
\end{itemize}
The pair $(\mathcal{D},\mathcal{L})$ is called a BCM pair. We let 
\begin{equation*}
    Y_{\mathcal{D},\mathcal{L}}=K_M \times \textrm{Sh}(G,X),
\end{equation*}
and we denote the points of $ Y_{\mathcal{D},\mathcal{L}}$ by $y=(\rho,[z,l])$. We let $Y^\times_{\mathcal{D},\mathcal{L}}$ be the set of invertible elements $y=(\rho,[z,l])$ in  $Y_{\mathcal{D},\mathcal{L}}$. We have a partially defined action of $G(\mathbb{A}_f)$ on $ Y_{\mathcal{D},\mathcal{L}}$:
\begin{equation*}
    g\cdot y= (gy,[z,lg^{-1}])\quad \text{for}\,\,\, y =(\rho,[z,l]).
\end{equation*}
We consider the subspace 

\begin{equation*}
    \mathcal{U}_{\mathcal{D},\mathcal{L}} \subseteq G(\mathbb{A}_f)\times Y_{\mathcal{D},\mathcal{L}}
\end{equation*}
of pairs $(g,y)$ such that $g \cdot y \in Y_{\mathcal{D},\mathcal{L}}$. This space is a groupoid with source and target maps $s:\mathcal{U}_{\mathcal{D},\mathcal{L}} \rightarrow Y_{\mathcal{D},\mathcal{L}}$ and $t:\mathcal{U}_{\mathcal{D},\mathcal{L}} \rightarrow Y_{\mathcal{D},\mathcal{L}}$ given by $s(g,y)=y$ and $t(g,y)=gy$. The unit space is $Y_{\mathcal{D},\mathcal{L}}$ and composition is defined as 
$$(g_1,y_1)\circ (g_2,y_2)=(g_1g_2,y_2) \quad \text{if}\,\,\, y_1=g_2y_2.$$

There is an action of $K^2$ on the groupoid $\mathcal{U}_{\mathcal{D},\mathcal{L}}$ given by

\begin{equation*}
    (\gamma_1,\gamma_2) \cdot (g,y) := (\gamma_1g\gamma_2^{-1},\gamma_2 y)
\end{equation*}

and the quotient stack $\mathfrak{Z}_{\mathcal{D},\mathcal{L}}=[K^2 \backslash \mathcal{U}_{\mathcal{D},\mathcal{L}}]$ has the structure of a stack-groupoid (see \cite[Appendix A]{ha2005bost} ).\\
From now on we suppose that that $(\mathcal{D},\mathcal{L})$ is a BCM pair such that the Shimura datum $(G,X)$ is classical, i.e

\begin{equation*}
    \textrm{Sh}(G,X)=G(\mathbb{Q}) \backslash  G(\mathbb{\mathbb{A}}_f) \times X.
\end{equation*}
We let $\Gamma= G(\mathbb{Q}) \cap K$ and 
\begin{equation*}
    \mathcal{U}^{\textmd{princ}} :=\{ (g,\rho,z)\in G(\mathbb{Q}) \times K_M \times X \mid g\rho \in K_M)\}.
\end{equation*}
We let $X^{+}$ be a connected component of $X$, $G(\mathbb{Q})^+=G(\mathbb{Q}) \cap G(\mathbb{R})^+$ (where $G(\mathbb{R})^+$ is the identity component of $G(\mathbb{R})$) and $\Gamma_+=G(\mathbb{Q})^+ \cap K$. Consider the groupoid
\begin{equation*}
    \mathcal{U}^+=\{ (g,\rho,z)\in G(\mathbb{Q})^+ \times K_M \times X^+ \mid g\rho \in K_M)\},
\end{equation*}
with the composition given by

\begin{equation}
    (g_1,\rho_1,z_1) \circ (g_2,\rho_2,z_2) = (g_1g_2, \rho_2,z_2) \quad \textmd{if}\, (\rho_1,z_1)=(g_2\rho_2,\rho_2z_2) \label{label 5}
\end{equation}

There is a natural action of $\Gamma^2$ (resp. $\Gamma_+^2$) on $\mathcal{U}^{\textmd{princ}} $ (resp. $ \mathcal{U}^+$) given by

\begin{equation}
(\gamma_1,\gamma_2)\cdot (g,\rho,z):= (\gamma_1g \gamma_2^{-1},\gamma_2\rho,\gamma_2 z)    \label{label 6}
\end{equation}
and the quotient stack $\mathfrak{Z}_{\mathcal{D},\mathcal{L}}^{\textmd{princ}}$ (resp. $\mathfrak{Z}_{\mathcal{D},\mathcal{L}}^{+}$) of $\mathcal{U}^{\textmd{princ}}$ (resp. $\mathcal{U}^{+}$) by $\Gamma^2$ (resp. $\Gamma_+^2$) has again the structure of a stack-groupoid.

For an arbitrary BCM pair $(\mathcal{D},\mathcal{L})$, the relation between the three groupoids  $\mathfrak{Z}_{\mathcal{D},\mathcal{L}}$ and $\mathfrak{Z}_{\mathcal{D},\mathcal{L}}^{princ}$, $\mathfrak{Z}_{\mathcal{D},\mathcal{L}}^{+}$ is unclear a priori. The following important result obtained in \cite[Propositions 5.2, Proposition 5.3]{ha2005connes} provides sufficient conditions for these three stack-groupoids to coincide.

\begin{prop} \label{label 60}
We denote by $h(G,K)$ the cardinality of the finite set $G(\mathbb{Q}) \backslash G(\mathbb{A}_f)/K$. Assume that $h(G,K)=1$ and the natural map $\Gamma \rightarrow G(\mathbb Q) / G(\mathbb Q)^{+}$ is surjective. Then the natural maps

\begin{equation*}
\mathfrak{Z}_{\mathcal{D},\mathcal{L}}^{+} \longrightarrow \mathfrak{Z}_{\mathcal{D},\mathcal{L}}^{princ}, \quad \mathfrak{Z}_{\mathcal{D},\mathcal{L}}^{princ} \longrightarrow \mathfrak{Z}_{\mathcal{D},\mathcal{L}}     
\end{equation*}

are isomorphisms.
\end{prop}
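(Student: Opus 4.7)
The plan is to show that each of the two natural maps in the statement is an equivalence of stack-groupoids by exhibiting normalized representatives and then comparing stabilizers. The two hypotheses enter separately: $h(G,K)=1$ is equivalent to the strong-approximation-type decomposition $G(\mathbb{A}_f)=G(\mathbb{Q})\cdot K$, which will control the second map $\mathfrak{Z}^{\mathrm{princ}}_{\mathcal{D},\mathcal{L}}\to \mathfrak{Z}_{\mathcal{D},\mathcal{L}}$, while the surjectivity of $\Gamma\to G(\mathbb{Q})/G(\mathbb{Q})^+$ means that every connected component of $X$ is a $\Gamma$-translate of $X^+$, and will control the first map $\mathfrak{Z}^{+}_{\mathcal{D},\mathcal{L}}\to \mathfrak{Z}^{\mathrm{princ}}_{\mathcal{D},\mathcal{L}}$.

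For $\mathfrak{Z}^{\mathrm{princ}}_{\mathcal{D},\mathcal{L}}\to \mathfrak{Z}_{\mathcal{D},\mathcal{L}}$, I would first observe that $G(\mathbb{A}_f)=G(\mathbb{Q})\cdot K$ yields a canonical bijection $K\backslash \mathrm{Sh}(G,X)\simeq \Gamma\backslash X$: given a representative $[z,l]\in \mathrm{Sh}(G,X)$, write $l=q\cdot k$ with $q\in G(\mathbb{Q})$ and $k\in K$, so that $[z,l]=[q^{-1}z,k]$, and then the partial $K$-action sends this to a representative of the form $(\rho',[q^{-1}z,1])$ inside $Y_{\mathcal{D},\mathcal{L}}$. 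Applying this normalization at the source and target of a morphism $(g,y)\in \mathcal{U}_{\mathcal{D},\mathcal{L}}$ and absorbing the remaining adelic discrepancy of $g$ into the $K^2$-action, the morphism lands inside $\mathcal{U}^{\mathrm{princ}}$, giving essential surjectivity. Injectivity on isomorphism classes then follows from $G(\mathbb{Q})\cap K=\Gamma$: two normalized representatives are $K^2$-equivalent in $\mathcal{U}_{\mathcal{D},\mathcal{L}}$ iff they are already $\Gamma^2$-equivalent in $\mathcal{U}^{\mathrm{princ}}$.

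For $\mathfrak{Z}^{+}_{\mathcal{D},\mathcal{L}}\to \mathfrak{Z}^{\mathrm{princ}}_{\mathcal{D},\mathcal{L}}$, given $(g,\rho,z)\in \mathcal{U}^{\mathrm{princ}}$ I would pick $\gamma_2\in \Gamma$ with $\gamma_2 z\in X^+$, using that $G(\mathbb{Q})/G(\mathbb{Q})^+$ acts transitively on the set of connected components of $X$ and $\Gamma$ surjects onto this quotient; then I would pick $\gamma_1\in \Gamma$ whose class in $G(\mathbb{Q})/G(\mathbb{Q})^+$ equals that of $g^{-1}$, so that $\gamma_1 g \gamma_2^{-1}$ carries $X^+$ into itself and hence lies in $G(\mathbb{Q})^+$. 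After the $\Gamma^2$-move \eqref{label 6} by $(\gamma_1,\gamma_2)$ the triple lies in $\mathcal{U}^+$. Compatibility with the composition \eqref{label 5} and the involution is inherited from the ambient groupoid, since both maps arise from inclusions of subsets of $G(\mathbb{A}_f)\times Y_{\mathcal{D},\mathcal{L}}$ which are stable under the partial groupoid operations.

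The hardest point will be the faithfulness of the first map: I need to show that two representatives in $\mathcal{U}^+$ that become $\Gamma^2$-equivalent in $\mathcal{U}^{\mathrm{princ}}$ were already $\Gamma_+^2$-equivalent. This reduces to the stabilizer claim that the subgroup of $\Gamma$ fixing $X^+$ is exactly $\Gamma_+$: an element $\gamma\in \Gamma=G(\mathbb{Q})\cap K$ fixing the component $X^+$ must lie in $G(\mathbb{R})^+$, since the connected components of $X$ are permuted simply transitively by $G(\mathbb{R})/G(\mathbb{R})^+$, and hence $\gamma\in G(\mathbb{Q})^+\cap K=\Gamma_+$. Combined with $\gamma_1 g \gamma_2^{-1}\in G(\mathbb{Q})^+$ and $g\in G(\mathbb{Q})^+$, this forces $\gamma_1\in \Gamma_+$ and $\gamma_2\in \Gamma_+$. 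This stabilizer computation is the technical heart of the proof, and an analogous comparison of $K$-stabilizers with $\Gamma$ handles the faithfulness of the second map.
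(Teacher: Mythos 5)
The paper does not actually prove Proposition \ref{label 60}; it is quoted verbatim from Ha and Paugam and the text simply refers the reader to their Propositions 5.2 and 5.3. So there is no in-paper argument to compare against, and your proposal should be judged as a reconstruction of the standard argument. As such it is essentially correct: $h(G,K)=1$ is used exactly as you say, through $G(\mathbb{A}_f)=G(\mathbb{Q})\cdot K$, to normalize the ad\`elic component of $[z,l]\in\mathrm{Sh}(G,X)$ to $1$, and $G(\mathbb{Q})\cap K=\Gamma$ then gives bijectivity on morphisms; while the surjectivity of $\Gamma\to G(\mathbb{Q})/G(\mathbb{Q})^+$ is used to push $z$ into $X^+$ and to correct $g$ into $G(\mathbb{Q})^+$, with $\Gamma_+=\Gamma\cap G(\mathbb{Q})^+$ identified as the stabilizer of $X^+$ in $\Gamma$. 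This is the same route Ha--Paugam take.

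Two details need fixing. First, in the step from $\mathcal{U}^{\mathrm{princ}}$ to $\mathcal{U}^{+}$ you choose $\gamma_2$ with $\gamma_2 z\in X^{+}$ and then ``$\gamma_1$ whose class equals that of $g^{-1}$.'' With that choice $[\gamma_1 g\gamma_2^{-1}]=[\gamma_2^{-1}]$ in $G(\mathbb{Q})/G(\mathbb{Q})^{+}$, which is trivial only when $\gamma_2\in G(\mathbb{Q})^{+}$ — precisely the case you cannot assume. The correct choice is $\gamma_1$ with class $[\gamma_2 g^{-1}]=[\gamma_2][g]^{-1}$, which exists by the surjectivity hypothesis; then $\gamma_1 g\gamma_2^{-1}\in G(\mathbb{Q})\cap G(\mathbb{R})^{+}=G(\mathbb{Q})^{+}$ as you intend. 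Second, you invoke that $G(\mathbb{Q})/G(\mathbb{Q})^{+}$ acts transitively on $\pi_0(X)$. This is true for Shimura data, but it is not formal: $\pi_0(X)$ is a torsor under $G(\mathbb{R})/G(\mathbb{R})^{+}$, and the transitivity of the rational action follows from real approximation for the connected reductive group $G$ over $\mathbb{Q}$ (i.e.\ $G(\mathbb{Q})$ is dense in $G(\mathbb{R})$, hence $G(\mathbb{Q})/G(\mathbb{Q})^{+}\twoheadrightarrow G(\mathbb{R})/G(\mathbb{R})^{+}$). This should be cited rather than asserted. With those two adjustments the argument is sound, including the stabilizer computations you flag as the technical heart.
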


Hence when the condition of Proposition \ref{label 60} are satisfied, it is enough to work with the BCM system associated to one of the three stack-groupoids. Let 

\begin{equation*}
    \mathcal{H}(\mathcal{D},\mathcal{L}) = C_c(Z_{\mathcal{D},\mathcal{L}})
\end{equation*}

be the algebra of continuous compactly supported functions on the coarse quotient $Z_{\mathcal{D},\mathcal{L}}$ of $\mathcal{U}$ by the action of $K^2$. We view its elements as functions on $\mathcal{U}_{\mathcal{D}, \mathcal{L}}$ satisfying the following properties:

\begin{equation*}
    f(\gamma g,y)=f(g,y), \quad f(g\gamma,y)=f(g,\gamma y),\quad \forall \gamma \in K,\,\,\, g\in G(\mathbb A _ f),\,\,\, y \in Y_{\mathcal{D}, \mathcal{L}}
\end{equation*}

The convolution product on $\mathcal{H}(\mathcal{D},\mathcal{L})$ is defined by the expression

\begin{equation}
    (f_1* f_2)(g,y):=\sum_{\substack{h\in K \backslash G(\mathbb A_f)\\ hy \in Y_{\mathcal{D,L}}}} f_1(gh^{-1},hy)f_2(h,y),
\end{equation}

and the involution is given by 

\begin{equation*}
    f^*(g,y):=\overline{f(g^{-1},gy)}.
\end{equation*}

Let $y=(\rho, [z,l]) \in Y_{\mathcal{D},\mathcal{L}}$ and we put $G_y= \{g\in G(\mathbb A_f) \mid g\rho \in K_M\}$. We define a $^*$-representation 
$\pi_y: \mathcal{H}(\mathcal{D}, \mathcal{L}) \rightarrow \mathcal{B}( l^2(K \backslash G_y ))$ by 
\begin{equation*}
    (\pi_y(f)\xi)(g):= \sum_{h\in K \backslash G_y } f(gh^{-1},hy) \xi(h), \quad f\in \mathcal{H}(\mathcal{D},\mathcal{L})
\end{equation*}
where $\xi$ is the standard basis of $l^2(K \backslash G_y )$. The operators $\pi_y(f)$, for $y\in Y_{\mathcal{D},\mathcal{L}}$ and $f \in  \mathcal{H}(\mathcal{D},\mathcal{L}) $ are uniformly bounded \cite[Lemma 4.16]{ha2005bost} and we obtain a $C^*$-algebra $\mathcal{A}$ after completing  $\mathcal{H}(\mathcal{D},\mathcal{L})$ in the norm
\begin{equation*}
    \norm{f}= \sup_{y\in Y_{\mathcal{D},\mathcal{L}}} \norm{\pi_y(f)}.
\end{equation*}

Given a homomorphism 

\begin{equation*}
    N: \textmd{GL}(V) \rightarrow \mathbb R_+^*,
\end{equation*}

we define a time evolution on $\mathcal{H}(\mathcal{D},\mathcal{L})$ by
\begin{equation*}
    \sigma_t(f)(g,y)= N(\psi(g))^{it} f(g,y),
\end{equation*}

so that the operator on $l^2(K \backslash G_y)$ given by 

\begin{equation*}
    (H_y\zeta)(g)=\log N(\psi(g)) \zeta (g)
\end{equation*}
is the Hamiltonian. The resulting $C^*$-dynamical system $(\mathcal{A},\sigma_t)$ is the Bost-Connes-Marcolli system associated to the $\textmd{BCM}$ pair $(\mathcal{D},\mathcal{L})$.\\

The zeta function associated to the \textmd{BCM} pair $(\mathcal{D},\mathcal{L})$ plays an important role in the $\textmd{KMS}_\beta$ analysis of the system $(\mathcal{A},\sigma_t)$. It is defined as the complex-valued series

\begin{equation} \label{label 84}
    \zeta_{\mathcal{D},\mathcal{L}}(\beta):= \sum_{g\in \textmd{Sym}_f^{\times} \backslash \textmd{Sym}_f  } N(\psi(g))^{-\beta},
\end{equation}

where  \textmd{Sym}$_f(\mathcal{D},\mathcal{L}):= \psi^{-1}(K_M)$ and $\textmd{Sym}_f^{\times}(\mathcal{D},\mathcal{L})$ denotes the group of invertible elements in $\textmd{Sym}_f(\mathcal{D},\mathcal{L})$. The pair $(\mathcal{D},\mathcal{L})$ is called summable of there exists $\beta_{0} \in \mathbb R$ such that  $\zeta_{\mathcal{D},\mathcal{L}}(\beta)$ converges in the right plane  $\{\beta \in \mathbb C \mid \textmd{Re}(\beta) > \beta_0\}$ and extends to a meromorphic function on the full complex plane.  By \cite[Proposition 4.19]{ha2005bost} we know that if $y\in Y^\times_{\mathcal{D},\mathcal{L}}$ , then $G_y=\textmd{Sym}_f(\mathcal{D},\mathcal{L})$ and the zeta function $\zeta_{\mathcal{D},\mathcal{L}}(\beta)$ coincides with the partition function 

$$\zeta_{y}(\beta)= \frac{1}{|K \backslash K_0|} Tr (e^{-\beta H_y}), \quad K_0= \psi^{-1}(K_M^{\times}),$$

of the system $(\mathcal{A},\sigma_t)$.

\section{Bost-Connes-Marcolli system for the Siegel Modular Variety}

\subsection{The Symplectic Group}

Let $n\in \mathbb{N}$ and $R$ be a commutative unital ring. The \textit{symplectic group of similtitudes of degree $n$}  is defined by
\begin{equation*}
    GSp_{2n} (R)=\{g\in GL_{2n}(R): \exists\, \lambda(g)\in R^{\times}\,| \, g^t\,\Omega\,g\,=\,\lambda(g)\,\Omega \},
\end{equation*}
where 
\begin{equation*}
    \Omega= \mqty (0& 1_n \\ 
    -1_n & 0 ),\quad \text{ $1_n$ is the $n\times n$ identity matrix}.
\end{equation*}
The function $\lambda: GSp_{2n}(R) \rightarrow R^\times$ is called the multiplier homomorphism. Its kernel is the symplectic group $Sp_{2n}(R)$ and there is an exact sequence
\begin{equation}
    1 \longrightarrow Sp_{2n}(R) \longrightarrow GSp_{2n} (R) \longrightarrow R^{\times} \longrightarrow 1. \label{label1}
\end{equation}
If $g=\mqty(A & B \\ C & D)\in GSp_{2n}(R)$ then the following assertions are equivalent:
\begin{enumerate} [label=(\roman*)]
    \item $\lambda(g)=\lambda(g^t)$
    \item The inverse of the matrix $g$ is given by:
    
    \begin{equation}
    g^{-1}=\lambda(g)^{-1}\mqty(D^t & -B^t \\ -C^t & A^t) \label{label 17}
    \end{equation}
    
    \item The blocks $A,B,C,D$ satisfy the conditions
    \begin{equation}
        A^tC=C^tA,\, B^tD=D^tB,\, A^tD-C^tB=\lambda(g) 1_n \label{label 7}
    \end{equation}
     \item The blocks $A,B,C,D$ satisfy the conditions
    \begin{equation}
        A^tB=B^tA,\, C^tD=D^tC,\, A^tD-B^tC=\lambda(g) 1_n \label{label 8}
    \end{equation}
\end{enumerate}

For $r\in R^\times$, we put

\begin{equation*}
    S_n(r):= \{g\in GSp_{2n}(R)\mid \lambda(g)=r\}.
\end{equation*}

We then obtain an embedding of symplectic groups of different degrees as follows. Given $r\in \mathbb{N}$ and $0 < j < n$, define the map

\begin{align*}
    S_n(q) \times S_{n-j}(q) \rightarrow S_n(q)\\
    (M_1,M_2) \mapsto M_1 \odot M_2,\\
\end{align*}

where

\begin{equation*}
        M_1 \odot M_2 := \mqty(A_1 & 0_{j \times (n-j)} & B_1 & 0_{j \times (n-j)} \\ 
                           0_{(n-j)\times j} & A_2 & 0_{(n-j)\times j} & B_2 \\
                           C_1 & 0_{j \times (n-j)} & D_1 & 0_{j \times (n-j)} \\ 
                           0_{(n-j)\times j} & C_2 & 0_{(n-j)\times j} & D_2),\quad M_1 =\mqty(A_1 & B_1 \\
                           C_1 & D_1),\quad M_2=\mqty(A_2 & B_2 \\ C_2 & D_2).
\end{equation*}
Note that

\begin{equation}
    (M_1 \odot M_2) \cdot (N_1 \odot N_2) = (M_1N_1) \odot (M_2N_2)
\end{equation}

Consider the following elements of $Sp_{2n}(R)$:
\begin{align}
    & \mqty(1_n & \alpha_1 E_{ii} \\ 0_n & 1_n),  
    \mqty(1_n & 0_n  \\ \alpha_2 E_{ii}& 1_n),
    \mqty(1_n & \alpha_3 (E_{ij}+E_{ji})  \\0_n & 1_n),
    \mqty(1_n & 0_n  \\ \alpha_4 (E_{ij}+E_{ji})  & 1_n), \nonumber \\
    & \mqty(1_n +\alpha_5 E_{ij} & 0_n \\ 0_n & 1_n -\alpha_5 E_{ji}), \label{label 27}
\end{align}
where $\alpha_1,\dots,\alpha_5 \in R$. If $F$ is a field, then the group $Sp_{2n}(F)$ is generated \cite{o1978symplectic} by the matrices given in \eqref{label 27}
with $\alpha_1,\dots,\alpha_5 \in F$.

As a connected reductive algebraic group, the center $Z$ of $G=GSp_{2n}$ consists of scalar matrices and the standard maximal torus is
\begin{equation*}
    T= \{\text{diag}(u_1,\dots,u_n,v_1,\dots,v_n)\,:\, u_1v_1=\dots=u_nv_n\neq 0\}.
\end{equation*}

If $t\in T$, we often write 
\begin{equation} \label{label 99}
    t=\textrm{diag}(u_1,\dots,u_n,u_1^{-1}\lambda(t),\dots,u_n^{-1}\lambda(t)),
\end{equation}

We  fix the following characters $e_i\in \textrm{Hom}(T,G_m)$:
\begin{equation*}
    e_i(t)=u_i,\quad i=0,1,\dots,n \,\,\, \textmd{where} \quad u_0:= \lambda(t).
\end{equation*}

and cocharacters $f_i \in \textrm{Hom}(G_m,T)$:
\begin{align*}
    f_0(u)& = \textrm{diag} (\underbrace{1,\dots,1}_n,\underbrace{u,\dots,u}_n),\\
    f_1(u)&= \textrm (\underbrace{u,1,\dots,1}_n,\underbrace{u^{-1},1,\dots,1}_n),\\
    \vdots\\
    f_n(u)&= \textrm (\underbrace{1,\dots,1,u}_n,\underbrace{1,\dots,1,u^{-1}}_n).
\end{align*}

\begin{prop} \label{label 37}

The root datum of $GSp_{2n}$ is described as follows. We set

\begin{align*}
    X &=\mathbb{Z}e_0 \oplus \mathbb{Z}e_1 \oplus \dots \mathbb{Z}e_n.\\
    X^\vee &= \mathbb{Z}f_0 \oplus \mathbb{Z}f_1 \oplus \dots \mathbb{Z}f_n.
\end{align*}

and let $\langle\cdot,\cdot\rangle$ the natural pairing on $X \times X^\vee$:
\begin{equation*}
    \langle e_i,f_j \rangle=\delta_{ij}.
\end{equation*}

Then we have the following set of simple roots:
\begin{equation*}
    \alpha_1(t)= u_{n-1}^{-1}u_n, \quad \dots \quad \alpha_{n-1}(t)=u_1^{-1}u_2,\quad \alpha_n(t)=u_1^2u_0^{-1},
\end{equation*}

where $t$ has the form in \eqref{label 99}. In terms of the basis $e_i,$ $i=0,1,\dots,n$, we have
\begin{equation*}
    \alpha_1=e_n-e_{n-1},\quad \dots \quad \alpha_{n-1}=e_2-e_1, \quad \alpha_n=2e_1-e_0.
\end{equation*}

The corresponding coroots are 
\begin{equation*}
    \alpha_1^\vee=f_n-f_{n-1},\quad \dots \quad \alpha_{n-1}^\vee=f_2-f_1, \quad \alpha_n^\vee=f_1.
\end{equation*}

Let $R=\{\alpha_1,\dots,
\alpha_n\}$ and $R^\vee=\{\alpha_1^\vee,\dots,
\alpha_n^\vee\}$. Then
\begin{equation*}
    (X,R,X^\vee,R^\vee)
\end{equation*}
is the root datum of $GSp_{2n}$. The Cartan matrix is given by
\begin{equation*}
  \langle \alpha_i,\alpha_j^{\vee} \rangle= \mqty (2 & -1 & & & & & \\ -1 & 2 & -1 & & & & \\ & -1 & 2 & -1 & & & \\ &  &  \ddots &  \ddots & \ddots \\ & & & -1 & 2 & -1 \\ & & & & -1 & 2 & -1 \\ & & & &  & -2 & 2).  
\end{equation*}
\end{prop}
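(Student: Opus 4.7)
The statement factors into three verifications: (i) that $\{e_i\}_{i=0}^n$ and $\{f_j\}_{j=0}^n$ are dual bases of $X^*(T)$ and $X_*(T)$; (ii) that the simple roots and coroots for the upper-triangular Borel are as listed; (iii) that the Cartan matrix has the stated form.

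For (i), I would first observe that the defining relations $u_1 v_1 = \cdots = u_n v_n$ cut out $T$ from the diagonal torus of $GL_{2n}$ in codimension $n-1$, so $\dim T = n+1$, and the map $(e_0, e_1, \dots, e_n): T \to \mathbb{G}_m^{n+1}$ is an isomorphism with inverse $(u_0, \dots, u_n) \mapsto \mathrm{diag}(u_1, \dots, u_n, u_0 u_1^{-1}, \dots, u_0 u_n^{-1})$. The duality $\langle e_i, f_j\rangle = \delta_{ij}$ is then a direct check: for $f_0(u) = \mathrm{diag}(1_n, u\cdot 1_n)$ one computes $f_0(u)^t \Omega f_0(u) = u\Omega$, so $\lambda(f_0(u)) = u$ while each $u_i \circ f_0 = 1$; for $i \geq 1$, $f_i(u)$ has trivial multiplier and only the $i$th coordinate non-constant.

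For (ii), I would decompose $\mathfrak{gsp}_{2n}$ under the adjoint $T$-action. A matrix $X = \mqty(A & B \\ C & D)$ lies in $\mathfrak{gsp}_{2n}$ iff $B = B^t$, $C = C^t$, and $D + A^t$ is a scalar matrix. Conjugation by $t \in T$ rescales the $(i,j)$ matrix unit by $t_i t_j^{-1}$; using $t_{n+i} = u_0 u_i^{-1}$, the weights appearing are $\pm(e_i - e_j)$ and $\pm(e_i + e_j - e_0)$ for $i \neq j$, together with $\pm(2e_i - e_0)$, giving the $C_n$ root system embedded in $X^*(T)$. Declaring the Borel of block upper-triangular matrices together with the upper-triangular part of $A$ as the positive system and extracting the indecomposable positive roots yields $\alpha_k = e_{n+1-k} - e_{n-k}$ for $k<n$ and $\alpha_n = 2e_1 - e_0$. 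For each simple root I would exhibit an $\mathfrak{sl}_2$-triple and read off the coroot: for $\alpha_n$, taking $X_{\alpha_n} = \mqty(0_n & E_{11} \\ 0_n & 0_n)$ and $X_{-\alpha_n} = \mqty(0_n & 0_n \\ E_{11} & 0_n)$, the commutator equals $\mathrm{diag}(1, 0, \dots, 0, -1, 0, \dots, 0)$, the tangent vector to $f_1$ at the identity, so $\alpha_n^\vee = f_1$. The analogous computation for $\alpha_k$ ($k<n$) with $A = E_{n+1-k,\,n-k}$ and $D = -A^t$ yields $\alpha_k^\vee = f_{n+1-k} - f_{n-k}$.

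For (iii), the Cartan entries $\langle \alpha_i, \alpha_j^\vee\rangle$ now fall out of the duality. Diagonal entries equal $2$ by construction; the roots $\alpha_1,\dots,\alpha_{n-1}$ form a tridiagonal $A_{n-1}$ block; the only non-symmetric link is $\langle \alpha_n, \alpha_{n-1}^\vee\rangle = \langle 2e_1 - e_0, f_2 - f_1\rangle = -2$ paired against $\langle \alpha_{n-1}, \alpha_n^\vee\rangle = \langle e_2 - e_1, f_1\rangle = -1$, producing the $C_n$ long-short connection. The main bookkeeping subtlety is the reversed indexing convention (long root at position $n$ rather than position $1$) and remembering to subtract the similitude character $e_0$ from the long roots of $Sp_{2n}$ when viewed on $T \subset GSp_{2n}$; once this is set up the remaining verification is mechanical.
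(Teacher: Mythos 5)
Your argument is correct, and it fills in a verification the paper itself delegates entirely to a citation: the paper's ``proof'' of Proposition~\ref{label 37} is the single line ``See \cite[page 134-136]{tadic1994representations}'', with a parenthetical note elsewhere that Tadi\'c uses a different choice of basis. Your route is the standard direct one: identify $T\cong\mathbb{G}_m^{n+1}$ via $(e_0,\dots,e_n)$ with the explicit inverse $(u_0,\dots,u_n)\mapsto\textmd{diag}(u_1,\dots,u_n,u_0u_1^{-1},\dots,u_0u_n^{-1})$ and check $\langle e_i,f_j\rangle=\delta_{ij}$ entrywise; decompose $\mathfrak{gsp}_{2n}$ (the conditions $B=B^t$, $C=C^t$, $D+A^t$ scalar are correct) under $\textmd{Ad}(T)$ to read off the weights $\pm(e_i-e_j)$, $\pm(e_i+e_j-e_0)$, $\pm(2e_i-e_0)$; and then exhibit $\mathfrak{sl}_2$-triples to compute each coroot, e.g.\ $[X_{\alpha_n},X_{-\alpha_n}]=\textmd{diag}(1,0,\dots,0,-1,0,\dots,0)=df_1$ gives $\alpha_n^\vee=f_1$, and the analogous bracket for $A=E_{n+1-k,n-k}$, $D=-A^t$ gives $\alpha_k^\vee=f_{n+1-k}-f_{n-k}$. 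The Cartan computation $\langle\alpha_n,\alpha_{n-1}^\vee\rangle=-2$ versus $\langle\alpha_{n-1},\alpha_n^\vee\rangle=-1$ is also correct and recovers the asymmetric $(n,n-1)$/$(n-1,n)$ pair in the stated matrix. Your note about the indexing is slightly garbled --- in both the paper's convention and the standard $C_n$ convention the long root sits at position $n$; what is reversed is the direction of the $A_{n-1}$ chain (the paper sets $\alpha_1=e_n-e_{n-1}$ rather than $e_1-e_2$) --- but this does not affect the argument. In short, the proposal is a correct, self-contained derivation of a statement the paper merely cites; the tradeoff is length versus reliance on Tadi\'c's tables, and your version has the advantage of making the duality $\langle e_i,f_j\rangle=\delta_{ij}$ and the coroot identifications explicit in exactly the coordinates the rest of the paper uses.
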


\begin{proof}
See \cite[page 134-136]{tadic1994representations}
\end{proof}

\subsection{The Symplectic envelopping semigroup}

As noted in section \ref{Abstract BCM}, the abstract definition of the Bost-Connes-Marcolli system associated to a general Shimura datum $(G,X)$ requires the notion of an enveloping semigroup which plays the role of $\textmd{Mat}_{2,\mathbb Q}$ in the the case of the $GL_{2,\mathbb Q}$-system.

\begin{definition} \label{label 82}
Let $G$ be a reductive group over a field $F$. An enveloping semigroup for $G$ is a multiplicative semigroup $M$ which is irreducible and normal and such that $M^{\times}=G$.
\end{definition}

It is always possible to construct enveloping semigroup (see \cite[Appendix B.2]{ha2005bost}). For the case $G=GSp_{2n}$ we are considering in the paper, we have the following explicit description of $M$. Given a commutative $\mathbb Q$-algebra $R$ we have

\begin{equation}
    M(R):= MSp_{2n}(R)= \{m\in \textmd{Mat}_{2n}(R) \mid \exists\, \lambda(m)\in R, m^t \Omega m = \lambda(m)\Omega\},
\end{equation}

since $m \in MSp_{2n}(R)^{\times}$ if and only if $\lambda(m)\in R^{\times}$.

\subsection{The Siegel modular group}

The group $\Gamma_n=Sp_{2n}(\mathbb{Z})$ is called \textit{the Siegel modular group of degree $n$}. For $m\in \mathbb N$, we denote by $GL_{m}(\mathbb{Z})$ the unimodular group of degree $m$ and note that $\Gamma_n \subseteq  U_{2n}$ with equality if $n=1$. We let $MSp_{2n}^+(\mathbb Z)= \{M\in MSp_{2n}(\mathbb Z) \mid \lambda(M)>0\}$. Then  given any $M\in MSp_{2n}^+(\mathbb{Z})$, we denote by $\Gamma_n M \Gamma_n $ the double coset generated by $M$ and put
\begin{equation*}
    \mathcal{D} (\Gamma_n M \Gamma_n):= \{D\in \textrm{Mat}_{n}(\mathbb{Z})\, \textmd{such that there exists}\, \mqty (A & B \\ 0 & D)\in \Gamma_n M \Gamma_n \}.
\end{equation*}

\noindent For each $D\in \mathcal{D} (\Gamma M \Gamma)$, we set
\begin{equation*}
    \mathcal{B}(D,\Gamma_n M \Gamma_n):= \{B\in \textrm{Mat}_n(\mathbb{Z})\, \textmd{such that there exists} \mqty (A & B \\ 0 & D)\in \Gamma_n M \Gamma_n \}.
\end{equation*}
We define the following equivalence relation on $\mathcal{B}(D)$:
\begin{equation}
    B \sim B' \Leftrightarrow (B-B')D^{-1} \in \textrm{Sym}_{n}(\mathbb{Z}), \label{label 19}
\end{equation}
and we write $B \equiv B' \mod D$ if $B \sim B'$. For $r\in \mathbb{N}$, we put
\begin{equation*}
    S_n(r):= \{g\in GSp_{2n}(\mathbb{Z}) \mid \lambda(g)=r\}.
\end{equation*}

Let $M \in \textrm{Mat}_m(\mathbb{Z})$ with  $\det(M)> 0$ and $N\in S_{n}(r)$. Then by the Elementary Divisor Theorems (See \cite[Theorem 2.2, Chapter V]{krieg1990hecke}) the double cosets $U_m M U_m$ and $\Gamma_n N \Gamma_n$ contain  unique representatives of the form
\begin{align}
   \textmd{Elm}(M)= \textrm{diag} (a_1,a_2,\dots,a_m), \quad a_1,a_2,\dots,a_n \in \mathbb{N}, \label{label 39}\\
\end{align}
with $a_1 \mid a_2 \mid \dots \mid a_m$ and 

\begin{equation}
\textmd{Elm}(N)=\textrm{diag}(a_1,\dots,a_n,d_1,\dots,d_n), \quad a_1,\dots,a_n,d_1,\dots,d_n \in \mathbb{N},
\end{equation}

such that $a_id_i=r$, $i=1,\dots,n$ and $a_1\mid a_2\mid \dots a_n \mid d_n \mid \dots \mid d_{n-1} \mid d_1$.

\begin{thm}
Let $M\in MSp_{2n}^+(\mathbb{Z})$. Then a set of representatives of the right cosets relative to $\Gamma_n$ in $\Gamma_n M \Gamma_n$ is given by the matrices
$$\mqty(A & B \\ 0 & D), \quad A=\lambda(M) (D^t)^{-1} $$
where
\begin{enumerate}
    \item $D$ runs through a set of representatives of $GL_{n}(\mathbb{Z})\backslash \mathcal{D}(\Gamma_n M \Gamma_n);$
    \item $B$ runs through a set of representatives of $\mod D$ incogruent matrices in $\mathcal{B}(D,\Gamma_n M \Gamma_n).$
\end{enumerate}
\end{thm}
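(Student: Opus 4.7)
The plan is to establish two things: first, every right coset in $\Gamma_n \backslash \Gamma_n M \Gamma_n$ contains a representative of block upper-triangular form $\begin{pmatrix} A & B \\ 0 & D \end{pmatrix}$ satisfying the stated constraints; second, two such representatives $\begin{pmatrix} A & B \\ 0 & D \end{pmatrix}$ and $\begin{pmatrix} A' & B' \\ 0 & D' \end{pmatrix}$ lie in the same right coset precisely when $D$ and $D'$ differ by left multiplication by an element of $GL_n(\mathbb Z)$ and, once $D = D'$ is fixed, $B - B'$ satisfies the congruence \eqref{label 19}.

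For the existence step, I would start from arbitrary $g = \begin{pmatrix} A & B \\ C & D \end{pmatrix} \in \Gamma_n M \Gamma_n$ and show that there exists $\gamma \in \Gamma_n$ whose bottom half $(\gamma_1 \mid \delta)$ kills the lower-left block of $\gamma g$, i.e.\ $\gamma_1 A + \delta C = 0$. The inputs are: the relation $C D^t = D C^t$ (from \eqref{label 7}) which expresses that the bottom half of $g$ is a symplectic pair; the rank-$n$ condition on $(C \mid D) \in \textmd{Mat}_{n,2n}(\mathbb Z)$ coming from invertibility of $g$ over $\mathbb Q$; and the classical fact that a primitive integer symplectic pair extends to a full element of $\Gamma_n$. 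A symplectic version of Hermite reduction then produces the required $\gamma$. This existence step is what I expect to be the main obstacle; the cleanest route is to reduce modulo the Siegel parabolic using the classification of integer symplectic pairs, as in \cite{krieg1990hecke}.

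Once $C = 0$, the similitude relations \eqref{label 7} collapse dramatically: the third relation $A^t D - C^t B = \lambda(g) 1_n$ becomes $A^t D = \lambda(M) 1_n$, forcing $A = \lambda(M)(D^t)^{-1}$, while the second relation $B^t D = D^t B$ says exactly that $BD^{-1}$ is symmetric, which is the defining condition for $B \in \mathcal B(D, \Gamma_n M \Gamma_n)$. By construction $D \in \mathcal D(\Gamma_n M \Gamma_n)$. Note that integrality of $A$ forces $D$ to divide $\lambda(M) 1_n$ on the right, so the set $\mathcal D$ is effectively constrained.

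For the uniqueness step, I would suppose $\begin{pmatrix} A & B \\ 0 & D \end{pmatrix} = \gamma \begin{pmatrix} A' & B' \\ 0 & D' \end{pmatrix}$ with $\gamma = \begin{pmatrix} P & Q \\ R & S \end{pmatrix} \in \Gamma_n$ and expand the product. Comparing lower-left blocks gives $R A' = 0$, and since $A'$ is invertible over $\mathbb Q$ we conclude $R = 0$. Now $\gamma$ is block upper-triangular, and \eqref{label 7} applied to $\gamma$ with $\lambda(\gamma) = 1$ forces $P^t S = 1_n$ (so $S \in GL_n(\mathbb Z)$ and $P = (S^t)^{-1}$) and $P^t Q$ symmetric. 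The bottom-right block equation becomes $D = S D'$, which establishes the $GL_n(\mathbb Z)$-equivalence on the $\mathcal D$-side. Fixing $D = D'$ (so $S = P = 1_n$), the upper-right block equation yields $B - B' = Q D$ with $Q = P^t Q$ symmetric, which is exactly the relation $(B - B') D^{-1} \in \textmd{Sym}_n(\mathbb Z)$ of \eqref{label 19}. Conversely, any pair $(S, Q) \in GL_n(\mathbb Z) \times \textmd{Sym}_n(\mathbb Z)$ can be assembled into such a $\gamma \in \Gamma_n$, which gives the reverse implication and completes the bijection.
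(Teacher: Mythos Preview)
Your outline is correct and is the standard argument. The paper itself does not give a proof of this theorem; it simply cites \cite[Theorem 3.4, Chapter VI]{krieg1990hecke}. What you have sketched is essentially Krieg's proof: reduce to the Siegel parabolic by completing a primitive symplectic pair to an element of $\Gamma_n$ (your existence step), then analyze when two block upper-triangular representatives lie in the same right coset using the parametrization of block upper-triangular elements of $\Gamma_n$ by $GL_n(\mathbb Z) \times \textmd{Sym}_n(\mathbb Z)$ (your uniqueness step).

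It is worth noting that your uniqueness computation is almost identical to what the paper does carry out explicitly in the proof of the subsequent Proposition (the special case $M \in S_n(p^l)$): there the paper writes down the decomposition of block upper-triangular elements of $\Gamma_n$ as products $\begin{pmatrix} U^t & 0 \\ 0 & U^{-1} \end{pmatrix}\begin{pmatrix} 1_n & S \\ 0 & 1_n \end{pmatrix}$ with $U \in GL_n(\mathbb Z)$ and $S \in \textmd{Sym}_n(\mathbb Z)$, and reads off $D' = U^{-1}D$ and $B - B' = SD$ exactly as you do. So your uniqueness argument is not merely correct but matches the paper's own style in the adjacent result.

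One small presentational remark on your existence step: you phrase the goal as finding $(\gamma_1 \mid \delta)$ with $\gamma_1 A + \delta C = 0$, which is the condition on the bottom row of $\gamma$ for $\gamma g$ to be block upper-triangular. The usual packaging is the equivalent statement that for any $g \in GSp_{2n}^+(\mathbb Q)$ there exists $\gamma \in \Gamma_n$ with $\gamma g$ block upper-triangular; you correctly flag this as the only nontrivial input and correctly point to the symplectic-pair completion lemma in Krieg as its source.
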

\begin{proof}
See \cite[Theorem 3.4, Chapter VI]{krieg1990hecke}
\end{proof}

\begin{prop} \label{label 21}
Let $p$ be a prime number and $l\in \mathbb{N}$. Then the set $S_n(p^l)$ decomposes into finitely many right cosets relative to $\Gamma_n$. A set of representatives is given by 
\begin{equation}
 \mqty(p^l (D^t)^{-1} & B \\ 0 & D) \label{label 9}
\end{equation}
where $D$ runs through a set of representatives of
\begin{equation*}
   GL_{n}(\mathbb{Z}) \backslash \Big\{D\in \textrm{Mat}_n (\mathbb Z)\mid \textrm{Elm}(D)=\textrm{diag}(d_1,d_2,\dots,d_n)\,\,\, \textmd{and}\,\,\, d_i \mid p^l \,\,\, \textmd{for all}\,\,\, i=1,2,\dots,n \Big\}
\end{equation*}
and $B$ runs through a set of representatives of $\mod D$ incongruent matrices in 
\begin{equation}
    B(D):= \{B\in \textrm{Mat}_n(\mathbb{Z}) \mid B^t D  = D^t B\} \label{label 18}
\end{equation} 
\end{prop}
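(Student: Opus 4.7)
The plan is to deduce this proposition directly from the preceding theorem by specializing $M$ to an element of $S_n(p^l)$ and then taking the union over all double cosets contained in $S_n(p^l)$.

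First I would observe that $S_n(p^l)$ is $\Gamma_n$-bi-invariant (since $\lambda$ is a homomorphism and $\Gamma_n \subseteq \ker \lambda$), so it decomposes as a disjoint union of double cosets $\Gamma_n M \Gamma_n$. By the Elementary Divisor Theorem, each such double coset is uniquely determined by its representative $\mathrm{Elm}(M) = \mathrm{diag}(a_1,\dots,a_n,d_1,\dots,d_n)$ with $a_i d_i = p^l$ and $a_1 \mid \dots \mid a_n \mid d_n \mid \dots \mid d_1$. Since $p$ is prime, each $a_i$ and $d_i$ is a power of $p$ between $p^0$ and $p^l$, so there are only finitely many such tuples, hence finitely many double cosets in $S_n(p^l)$.

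Next I would apply the preceding theorem to each double coset $\Gamma_n M \Gamma_n \subseteq S_n(p^l)$. Since $\lambda(M) = p^l$, the upper-left block of every block-upper-triangular representative is $A = p^l (D^t)^{-1}$. To combine the decompositions coming from different double cosets, I would identify the union
\[
\bigcup_{M \in S_n(p^l)/\!\!\sim} GL_n(\mathbb{Z}) \backslash \mathcal{D}(\Gamma_n M \Gamma_n)
\]
with $GL_n(\mathbb{Z}) \backslash \{D \in \mathrm{Mat}_n(\mathbb{Z}) : \mathrm{Elm}(D) = \mathrm{diag}(d_1,\dots,d_n),\ d_i \mid p^l\}$. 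The forward inclusion uses that the Smith normal form of $D$ has diagonal entries matching the $d_i$'s coming from $\mathrm{Elm}(M)$, which all divide $p^l$. The reverse inclusion uses that given such a $D$, the matrix $A := p^l(D^t)^{-1}$ is integral precisely because $d_i \mid p^l$, so $\mqty(A & 0 \\ 0 & D)$ lies in $S_n(p^l)$ and its bottom-right block is $D$.

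To handle the $B$-parameter I would verify that $\mathcal{B}(D,\Gamma_n M \Gamma_n)$ coincides with the set $B(D)$ in \eqref{label 18}, independently of $M$. With $C = 0$ and $A = p^l(D^t)^{-1}$, the three symplectic relations in \eqref{label 7} reduce to: $A^t C = C^t A$ (trivial), $A^t D - C^t B = p^l 1_n$ (automatic from $A = p^l(D^t)^{-1}$), and the single nontrivial condition $B^t D = D^t B$. Thus any integral $B$ satisfying $B^t D = D^t B$ yields an element of $\Gamma_n M \Gamma_n$ for the appropriate $M$, and conversely. The $\bmod\ D$ equivalence in the previous theorem matches the relation $(B - B')D^{-1} \in \mathrm{Sym}_n(\mathbb{Z})$ from \eqref{label 19}, and the quotient $B(D)/\!\!\sim$ is finite because it embeds into $\mathrm{Mat}_n(\mathbb{Z}/p^l\mathbb{Z})$.

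The main obstacle I expect is the bookkeeping in the third step: one has to verify that as $M$ ranges over $S_n(p^l)$ and $D$ over $\mathcal{D}(\Gamma_n M \Gamma_n)$, the double-indexed union produces each $GL_n(\mathbb{Z})$-orbit of admissible $D$ exactly once, so that distinct pairs $(D, B)$ coming from different double cosets do not collapse to the same right coset. This follows once one notes that the Smith normal form of $D$ alone determines the double coset of $M$ via $\mathrm{Elm}(M) = \mathrm{diag}(p^l/d_n,\dots,p^l/d_1, d_1,\dots,d_n)$, so the decomposition across double cosets is consistent and the representatives remain distinct after taking the union.
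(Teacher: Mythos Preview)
Your approach is correct but genuinely different from the paper's. You deduce the proposition from the preceding theorem by decomposing $S_n(p^l)$ into its finitely many double cosets, applying that theorem to each, and then reassembling the resulting right-coset representatives into a single list. The bookkeeping you flag---that the Smith normal form of $D$ alone determines which double coset $\Gamma_n M \Gamma_n$ the block-upper-triangular matrix falls into, and that $\mathcal{B}(D,\Gamma_n M\Gamma_n)$ coincides with $B(D)$ independently of $M$---is the real content of your argument, and it goes through (though your explicit formula for $\mathrm{Elm}(M)$ in the last line is not quite in the canonical order; what matters is only that the multiset $\{d_i, p^l/d_i\}$ is determined by $D$).

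The paper, by contrast, does \emph{not} invoke the preceding theorem. It argues directly: it asserts (without detailed justification) that every right coset in $S_n(p^l)$ has a representative of the stated block-upper-triangular form, and then proves distinctness by an explicit computation with the parabolic subgroup $\{\mqty(U^t & 0 \\ 0 & U^{-1})\mqty(1_n & S \\ 0 & 1_n)\}\subset\Gamma_n$, showing that if two such representatives lie in the same right coset then their $D$-blocks agree up to $GL_n(\mathbb{Z})$ and their $B$-blocks are congruent $\bmod\ D$. Your route is cleaner in that existence is inherited for free from the previous theorem, at the cost of the double-coset reassembly; the paper's route avoids that reassembly but leaves the existence step implicit and redoes the uniqueness calculation from scratch.
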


\begin{proof}
Every right coset $\Gamma_n M$ contains a representatives of the form in \eqref{label 9}. Suppose we have two representatives $N$ and $M$ of this form:
\begin{equation*}
    N= \mqty(A & B \\ 0 & D), \quad M= \mqty(A' & B' \\ 0 & D'),
\end{equation*}
with $\Gamma_n N  =  \Gamma_n N $. Then since 

\begin{equation}\label{label 10} 
    \Big \{\mqty(A & B \\ 0 & D) \in \Gamma_n\Big\}=\Big\{\mqty(U^t & 0 \\ 0 & U^{-1}) \mqty ( 1_n & S \\ 0 & 1_n ) \mid U\in GL_{n}(\mathbb{Z})\,, S\in \textrm{Sym}_n(\mathbb{Z})\Big \},
\end{equation}

we obtain from the conditions in \eqref{label 7}-\eqref{label 8} that there exists $U \in GL_{n}(\mathbb{Z})$ such that $ D(A')^t = \lambda(M)U^{-1}$ and hence
\begin{equation*}
    D (D')^{-1} = U^{-1}.
\end{equation*}
This shows that $D=D'$ and consequently $A=A'$. Moreover from the equality $\eqref{label 10}$ we know that there exists $S\in \textrm{Sym}_n(\mathbb{Z})$ such that
\begin{align*}
    -A(B')^t + B (A)^t 
    & = -A(B')^t + AB^t, \\
    & = \lambda(M)S.
\end{align*}
Since $A^t D = \lambda(M)=\lambda(N)$ we obtain that $B-B'=SD$, i.e $B \equiv B' \mod D$.
\end{proof}

\begin{lem} \label{label 22}
Let $p$ be a prime number, $l \in \mathbb{N}$, $D \in  \textrm{Mat}_n (\mathbb{Z})$ such that $\textrm{Elm}(D)=\textrm{diag}(d_1,d_2,\dots,d_n)$ with $d_i | p^l$ for $i=1,2,\dots,n$ 
and $B(D)$ is as in equation \eqref{label 18}. Given $U,V \in GL_{n} (\mathbb Z)$, then

\begin{enumerate}
    \item $ |B(D) \mod D|=|B(UDV) \mod UDV |$,
    \item $|B(D) \mod D|=d_1^{n}d_2^{n-1}\dots d_n$.
\end{enumerate}
\end{lem}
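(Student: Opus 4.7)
The plan is to prove part (1) by constructing an explicit equivariant bijection between $B(D)$ and $B(UDV)$, and then use part (1) to reduce part (2) to the case when $D$ is already in Smith normal form, where the index becomes a direct calculation.

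For part (1), I would define $\Phi \colon B(D) \to B(UDV)$ by $\Phi(B) := (U^{t})^{-1} B V$. Since $U, V \in GL_n(\mathbb Z)$, the matrix $\Phi(B)$ has integer entries, and a short computation using $B^t D = D^t B$ gives $\Phi(B)^t (UDV) = V^t D^t B V = (UDV)^t \Phi(B)$, so $\Phi$ lands in $B(UDV)$; it is clearly a $\mathbb Z$-linear bijection, with inverse $B' \mapsto U^t B' V^{-1}$. Compatibility with $\sim$ follows from the identity
\[
(\Phi(B_1)-\Phi(B_2))(UDV)^{-1} \;=\; (U^{-1})^t \bigl[(B_1-B_2)D^{-1}\bigr] U^{-1},
\]
together with the observation that the map $S \mapsto (U^{-1})^t S U^{-1}$ is an automorphism of the lattice $\textrm{Sym}_n(\mathbb Z)$ inside $\textrm{Sym}_n(\mathbb Q)$. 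Hence $\Phi$ descends to a bijection on equivalence classes.

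For part (2), the elementary divisor theorem produces $U, V \in GL_n(\mathbb Z)$ with $UDV = \textrm{diag}(d_1, \dots, d_n)$ and $d_1 \mid d_2 \mid \cdots \mid d_n$, so by part (1) we may assume $D$ is already of this diagonal form. I would then exploit the fact that for $B \in \textrm{Mat}_n(\mathbb Z)$ the condition $B^t D = D^t B$ is equivalent to $BD^{-1} \in \textrm{Sym}_n(\mathbb Q)$; this identifies $B(D)$ $\mathbb Z$-linearly with the overlattice
\[
L(D) \;:=\; \{S \in \textrm{Sym}_n(\mathbb Q) \,:\, SD \in \textrm{Mat}_n(\mathbb Z)\}
\]
of $\textrm{Sym}_n(\mathbb Z)$, sending $\sim$ to translation by $\textrm{Sym}_n(\mathbb Z)$. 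Thus $|B(D) \bmod D| = [L(D) : \textrm{Sym}_n(\mathbb Z)]$.

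For diagonal $D$, the condition $SD \in \textrm{Mat}_n(\mathbb Z)$ asks that $S_{ij}d_j \in \mathbb Z$ for all $i,j$; combined with $S_{ij}=S_{ji}$ and $d_i \mid d_j$ for $i \leq j$, this is equivalent to $S_{ij} \in d_{\min(i,j)}^{-1}\mathbb Z$. Parametrizing $S$ by the independent entries $S_{ij}$ with $i \leq j$ then decomposes $L(D)/\textrm{Sym}_n(\mathbb Z)$ as a direct product of cyclic groups, one factor $\mathbb Z/d_i\mathbb Z$ for each pair $i \leq j$. Since the pairs with fixed first coordinate $i$ number $n-i+1$, we obtain
\[
[L(D) : \textrm{Sym}_n(\mathbb Z)] \;=\; \prod_{i=1}^n d_i^{\,n-i+1} \;=\; d_1^{n} d_2^{n-1} \cdots d_n.
\]
I expect the only slightly delicate point to be verifying preservation of $\sim$ in part (1), because the equivalence is defined in terms of the a priori nonintegral matrix $(B-B')D^{-1}$; once that is settled, part (2) reduces to a routine index computation in lattice theory.
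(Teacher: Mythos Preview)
Your proof is correct and follows essentially the same route as the paper: the bijection $B\mapsto (U^t)^{-1}BV$ for part~(1) is exactly the one used there, and your reduction to diagonal $D$ with an entry-by-entry count in part~(2) matches the paper's argument, only rephrased via the lattice $L(D)$ rather than directly parametrizing the entries of $B$. If anything, your treatment is more complete, since you verify explicitly that $\Phi$ respects the equivalence relation $\sim$, a step the paper omits.
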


\begin{proof}
Since $U,V \in \mathcal{U}_n (\mathbb{Z})$, we have the following bijection
\begin{align*}
    B(D) & \rightarrow B(UDV)\\
     M & \mapsto (U^t)^{-1} MV
\end{align*} 
This proves the first claim. Hence we can suppose that $D=\textrm{Elm}(D)$ to prove the second assertion. Since $d_1 | \dots | d_n$, we can write
    \begin{equation*}
        B(D)=\{M={(b_{jk}) \mid b_{jk}\in \mathbb{Z}}, b_{jk}= b_{kj} \frac{d_k}{d_l}\,\,\, \textmd{for}\,\,\, j\leq k\}.
    \end{equation*}
    By definition of the relation in $\eqref{label 19}$ the entries $b_{jk}$ may be reduced $\mod d_k$. This shows that $B(D)$ consists exactly of $d_1^{n}d_2^{n-1}\dots d_n$ equivalence classes $\mod D$.
\end{proof}

Let $r\in \mathbb{N}$ and define

\begin{equation*}
    R_{\Gamma_n} (r):= \sum_{g \in \Gamma_{n} \backslash S_n (r) / \Gamma_{n}   } \deg_{\Gamma_n}(g).
\end{equation*}

\begin{prop} \label{label 54}
The function $ R_{\Gamma_n}: \mathbb{N} \rightarrow \mathbb{N}$ is a multiplicative function, i.e given relatively prime numbers $q,r\in \mathbb{N}$, we have
\begin{equation*}
     R_{\Gamma_n}(qr)= R_{\Gamma_n}(q) R_{\Gamma_n}(r).
\end{equation*}
\end{prop}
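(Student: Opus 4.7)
The plan is to reduce the statement to a purely combinatorial fact about sublattices of $\mathbb Z^n$. First I note that the proof of Proposition~\ref{label 21} carries over verbatim from $r = p^l$ to arbitrary $r \in \mathbb N$: its only inputs are the symplectic relations \eqref{label 7}-\eqref{label 8} (which give $A^tD = \lambda(M)\cdot 1_n$) and the structure of $\Gamma_n$-cosets in \eqref{label 10}, neither of which uses the prime-power assumption. Combining this with the fact that $\Gamma_n M \Gamma_n$ is a disjoint union of $\deg_{\Gamma_n}(M)$ right cosets, I obtain
\begin{equation*}
    R_{\Gamma_n}(r) \;=\; |\Gamma_n \backslash S_n(r)| \;=\; \sum_{D \in GL_n(\mathbb Z)\backslash \mathcal D_r}\; d_1^n\, d_2^{n-1}\cdots d_n,
\end{equation*}
where $\mathcal D_r = \{D \in \textrm{Mat}_n(\mathbb Z) : \textmd{Elm}(D) = \textrm{diag}(d_1,\ldots,d_n),\, d_i \mid r\}$ and the weights come from Lemma~\ref{label 22}.

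The map $D \mapsto D^t\mathbb Z^n$ descends to a bijection between $GL_n(\mathbb Z)\backslash \mathcal D_r$ and the set $\mathcal L_r$ of sublattices $L \subseteq \mathbb Z^n$ such that $r \cdot (\mathbb Z^n/L) = 0$, preserving elementary divisor data: those of $\mathbb Z^n/L$ agree with those of $D$. Assume now $\gcd(q,r)=1$. For $L \in \mathcal L_{qr}$ the primary decomposition of the torsion $\mathbb Z$-module $\mathbb Z^n/L$ gives a canonical splitting $\mathbb Z^n/L = (\mathbb Z^n/L)_q \oplus (\mathbb Z^n/L)_r$; letting $L_q$ and $L_r$ denote the preimages in $\mathbb Z^n$ of the $r$-primary and $q$-primary parts respectively, one verifies that $L_q \in \mathcal L_q$, $L_r \in \mathcal L_r$, $L = L_q \cap L_r$, and $L_q + L_r = \mathbb Z^n$. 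The assignment $L \mapsto (L_q,L_r)$ is a bijection $\mathcal L_{qr} \cong \mathcal L_q \times \mathcal L_r$ with inverse $(L_q,L_r) \mapsto L_q \cap L_r$.

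The Chinese Remainder Theorem for modules yields $\mathbb Z^n/L \cong (\mathbb Z^n/L_q) \oplus (\mathbb Z^n/L_r)$, and because the two summands have coprime annihilators their elementary divisors assemble index-by-index: if $(a_i)$ and $(b_i)$ are the elementary divisors of $\mathbb Z^n/L_q$ and $\mathbb Z^n/L_r$ respectively, then $\mathbb Z^n/L$ has elementary divisors $(d_i = a_i b_i)$. Consequently
\begin{equation*}
    d_1^n d_2^{n-1}\cdots d_n = (a_1^n a_2^{n-1}\cdots a_n)(b_1^n b_2^{n-1}\cdots b_n),
\end{equation*}
and summing the counting formula through the bijection $\mathcal L_{qr} \cong \mathcal L_q \times \mathcal L_r$ factors as a product, yielding $R_{\Gamma_n}(qr) = R_{\Gamma_n}(q)\, R_{\Gamma_n}(r)$.

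The main technical point is the compatibility invoked in the last paragraph: one must verify that the elementary divisors of a direct sum of $\mathbb Z$-modules with coprime annihilators align index-by-index as $d_i = a_i b_i$, rather than through a more complicated permutation. This follows from the uniqueness of the Smith normal form and the observation that $\mathbb Z/a_i\mathbb Z \oplus \mathbb Z/b_i\mathbb Z \cong \mathbb Z/a_ib_i\mathbb Z$ when $\gcd(a_i,b_i)=1$, but the bookkeeping deserves explicit care. All other steps—the extension of Proposition~\ref{label 21} to general $r$ and the lattice reformulation—are routine.
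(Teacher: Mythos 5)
Your argument is correct, but it proceeds along a genuinely different route than the paper's. The paper works purely at the level of Hecke double cosets: it first notes that the elementary-divisor form of $g\in S_n(qr)$ factors as $\textmd{Elm}(g_1)\textmd{Elm}(g_2)$ with $g_i\in S_n(q), S_n(r)$, so the double-coset sum factors, and then reduces the whole statement to the single multiplicativity identity $\deg_{\Gamma_n}(g_1g_2)=\deg_{\Gamma_n}(g_1)\deg_{\Gamma_n}(g_2)$. That identity is proved by showing that the products $\Gamma_n Q_iR_j$ of right-coset representatives are pairwise distinct (using coprimality of $q$ and $r$ to force an intermediate matrix into $\Gamma_n$) and exhaust $\Gamma_n g_1g_2\Gamma_n$. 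You, by contrast, bypass the double-coset structure entirely: you first make explicit the identity $R_{\Gamma_n}(r)=|\Gamma_n\backslash S_n(r)|=\sum_{D}d_1^nd_2^{n-1}\cdots d_n$ via the (correctly noted) extension of Proposition~\ref{label 21} and Lemma~\ref{label 22}, then recast the sum as a count over sublattices of $\mathbb Z^n$, and factor it by the primary/CRT decomposition of $\mathbb Z^n/L$. Each approach buys something: the paper's version is independent of the precise weight $d_1^n\cdots d_n$ and proves a stronger fact (multiplicativity of $\deg_{\Gamma_n}$ on coprime arguments) that is of independent Hecke-algebraic interest; yours leans on the explicit coset-count formula, is more elementary (no disjointness argument in $\Gamma_n\backslash \Gamma_n g_1 g_2\Gamma_n$), and makes the origin of the multiplicativity transparent as a fact about coprime finite abelian groups. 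One small point worth tightening: after writing $R_{\Gamma_n}(r)=|\Gamma_n\backslash S_n(r)|$, you should note explicitly that all of $S_n(r)$ is covered by the triangular representatives with $A=r(D^t)^{-1}$, and that the union of $\mathcal D(\Gamma_n M\Gamma_n)$ over all symplectic elementary forms $M$ with $\lambda(M)=r$ is exactly $\{D:d_i\mid r\}$; this is what licenses the passage from Proposition~\ref{label 21} (one double coset) to the sum over all of $S_n(r)$. Your closing observation about index-by-index alignment of elementary divisors is the right technical point to flag, and the argument you sketch (uniqueness of Smith form plus $\mathbb Z/a_i\mathbb Z\oplus\mathbb Z/b_i\mathbb Z\cong\mathbb Z/a_ib_i\mathbb Z$) is the standard and correct one.
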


\begin{proof}
Observe first that if $g\in S_{n}(qr)$, then $\textrm{Elm} (g)=\textrm{Elm}(g_1)\textrm{Elm}(g_2)$ for some $g_1\in S_n(q)$ and $g_2 \in S_n(r)$. Hence

\begin{equation*}
    R_{\Gamma_n}(qr)=  \sum_{g \in \Gamma_{n} \backslash S_n (qr) / \Gamma_{n}   } \deg_{\Gamma_n}(g) = 
    \sum_{g_2 \in \Gamma_{n} \backslash S_n (q) / \Gamma_{n}   }
    \sum_{g_1 \in \Gamma_{n} \backslash S_n (r) / \Gamma_{n}   } \deg_{\Gamma_n}(g_1g_2),
\end{equation*}
so it is enough to show that $\deg(g_1g_2)=\deg(g_1)\deg(g_2)$. We decompose the double cosets $\Gamma_n \backslash g_1 / \Gamma_n$ and $\Gamma_n \backslash g_2 / \Gamma_n$ into finitely many right cosets $\Gamma_n Q_i, i=1,\dots \deg (g_1)$ and $\Gamma_n R_i, i=1,\dots \deg (g_2)$ and consider the right cosets given by $\Gamma Q_iR_j$. Suppose hat $\Gamma Q_iR_j=\Gamma Q_kR_l$. Then there exists some $\gamma\in \Gamma_n$ and a matrix $M \in GSp_{2n}(\mathbb Q)$ such that

\begin{equation*}
    M= Q_k^{-1}\gamma Q_i=R_l R_j^{-1},\quad \lambda(Q_k)=\lambda(Q_i)=q,\quad 
    \lambda(R_k)=\lambda(R_i)=r.
\end{equation*}
Recall from \eqref{label 17} that $Q_k=\lambda(Q_k)^{-1}\Omega^{-1} Q_k^t \Omega$ and so after writing $M=\{\frac{n_{ij}}{m_{ij}}\}_{ij}$ where $(n_{ij},m_{ij})=1$, we see that the integers $m_{ij}$ divide $\lambda(Q_k)=q$ and similarly $m_{ij}$ divide $\lambda(R_j)=r$. By assumption $(q,r)=1$ so $M\in \Gamma_n$ since $\lambda(A)=1$. This shows that $\Gamma_n Q_k=\Gamma_n Q_i$ and $\Gamma_n R_l=\Gamma_n R_k$, in other words $i=k$ and $l=j$. To conclude we simply observe that the cosets $\Gamma Q_iR_j$ form a partition of $\Gamma_n \backslash g_1 g_2 / \Gamma_n$.
\end{proof}

Let $p$ be a prime and $l\in \mathbb N$. For arbitrary $n$, it is in general not possible to obtain a closed formula for $\deg(a)$ if $a\in S_{p^l}$ is given in its elementary form. On the other hand, an upper bound of $\deg(a)$ will be enough in most of our calculations.
We first suppose that $a$ is given by

\begin{equation*}
    a=\textmd{diag} (p^{k_1},p^{k_2},\dots,p^{k_n},p^{l-k_1},\dots,p^{l-k_n}),\quad [l/2]\leq k_1\leq k_2\leq \dots \leq k_n
\end{equation*}

Note that this is \textit{not} the elementary sympletic form of $a$ since $k_1\geq [l/2]$.  We shall use the root datum of $GSp_{2n}$ given in Proposition \ref{label 37}. The set $\Phi^+$  of positive roots is given by (see \cite[page 167]{tadic1994representations})

\begin{align*}
e_j-e_i,\quad 1 \leq i< j \leq n\\
e_j+e_i -e_0,\quad 1 \leq i< j \leq n\\
2e_i - e_0, \quad 1 \leq i \leq n.
\end{align*}

where we have used our choice of the basis $e_i,i=1,\dots,n$ (the choice of the basis used in \cite{tadic1994representations} is different but the computations are essentially the same). Hence
\begin{equation*}
    2\rho = \sum _{\alpha \in \Phi^+} \alpha =  2 \sum_{i=0}^{i=n-1} (n-i)e_{n-i} - \frac{1}{2}n(n+1)e_0.
\end{equation*}

We set

\begin{equation*}
     \lambda = \sum_{i=1}^{i=n}k_i f_i + l f_0.
\end{equation*}

Observe that  $\langle \lambda,\alpha \rangle \geq 0 $ for all $\alpha \in \Phi^+$. Then using the degree formula in Proposition $7.4$ in \cite{gross1998satake} (see also the proof of Corollary $1.9$ in \cite{clozel2001hecke} for a similar result in the case of $GL_n$) we obtain

\begin{equation}
    \deg (a) = p^{(\sum_{i=0}^{i=n-1}2(n-i)k_{n-i})-\frac{1}{2}n(n+1)l} (1+O(p^{-1})) \label{label 38}
\end{equation}
where the big $O$ depends only on $n$. If $a \in S_{p^l}$ is given in its elementary symplectic form \eqref{label 39} , we apply left and right (symplectic) permutations matrices to $a$ (which leaves invariant the degree) and use the formula \eqref{label 38}.

\subsection{ Structure theorems of the symplectic group}

As seen in the previous section, the class number $h(G,K)$ plays a detrimental role in the definition of the abstract BCM system. The aim of this section is to show that for $G=GSp_{2n}$ we have that $h(G,K)=1$ where $K$ is any open compact subgroup of $GSp_{4}(\hZ)$. The proof relies on the notion of strong approximation in algebraic groups, which we briefly review.  Given a linear reductive group $G$ over a global field $K$ and a non-empty finite set $S$ of places of $K$, we denote by $\mathbb A _{S}$ the ring of $S$-ad\`eles and let $G(\mathbb A_S)$ be the ring of $S$-ad\`eles of $G$

$$G (\mathbb A_{S}):= \{g=(g_{\upsilon}) \in \prod_{ \upsilon \notin S} G(K_\upsilon) \mid g_{\upsilon} \in G(\mathcal{O}_{\upsilon})\,\,\, \textmd{for almost all}\,\,\, \upsilon \notin S\}.$$

For any given $S$, there is a canonical embedding $G(K) \hookrightarrow G(\mathbb A_{S})$. An algebraic group $G$ over a global field $K$ has the strong approximation with respect to $S$ if $G(K)$ is dense in $G(\mathbb A_S)$. It is well known that strong approximation does not hold in general (for example one can take the group $G=GL_{n}$ with $S=\{\infty\}$). The following theorem provides a necessary and sufficient condition for the strong approximation theorem to hold for algebraic groups.

\begin{thm} (See {\cite{kneser1965starke}, \cite{platonov1969problem} in characteristic zero} and \cite{prasad1977strong},\cite{margulis1977cobounded}  \cite{margulis1991discrete} in positive characteristic), \label{label 73}
Let $G$ be an absolutely almost simple simply connected algebraic group over a field $K$ and $S$ a finite nonempty set of places of $K$. Then $G$ has strong approximation with respect to $S$ if and only if the group $G_S=\prod_{\upsilon \in S} G(K_{\upsilon})$ is noncompact.
\end{thm}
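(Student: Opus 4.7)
This is the classical Strong Approximation Theorem (Kneser–Platonov in characteristic zero, extended by Prasad and Margulis to positive characteristic); a full proof is long and case-based, so my plan is only to outline the structure of the argument and indicate where the substantive work is concentrated.

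The necessity direction ($\Rightarrow$) is short. A standard fact for semisimple $K$-groups is that the diagonal embedding makes $G(K)$ discrete in $G(\mathbb{A}_K) = G_S \times G(\mathbb{A}_S)$ (realize $G$ as a closed $K$-subvariety of some $\mathrm{GL}_N$ and use discreteness of $K$ in $\mathbb{A}_K$). If $G_S$ is compact then the projection $p: G(\mathbb{A}_K) \to G(\mathbb{A}_S)$ is proper, hence carries the closed discrete subgroup $G(K)$ onto a closed discrete subgroup $p(G(K)) \subseteq G(\mathbb{A}_S)$. But the open subgroup $\prod_{v \notin S} G(\mathcal{O}_v) \subseteq G(\mathbb{A}_S)$ is infinite, so $G(\mathbb{A}_S)$ itself is not discrete; no closed discrete subset of a non-discrete locally compact group can be dense, contradicting strong approximation.

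For the sufficiency direction ($\Leftarrow$), my plan is to reduce to showing $G(K)\cdot U = G(\mathbb{A}_S)$ for every open subgroup $U \subseteq G(\mathbb{A}_S)$. By absolute almost simplicity and simple-connectedness one may assume $G$ is $K$-simple. Next, one verifies strong approximation for each unipotent radical $U_\alpha$ of a proper parabolic $K$-subgroup: as a $K$-variety $U_\alpha$ is an affine space, so the claim reduces to the classical strong approximation theorem for the additive group of $K$ (i.e.\ density of $K$ in $\mathbb{A}_S$). Combined with the Bruhat decomposition, this yields strong approximation for all of $G$ whenever $G$ is $K$-isotropic, because the $K$-points of the unipotent radicals generate the whole group in that case. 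In the $K$-anisotropic case, noncompactness of $G_S$ forces $G$ to be isotropic over some $K_v$ with $v \in S$, and one must transport unipotent subgroups from $G(K_v)$ back to global rationality via Galois cohomology and Hasse-principle-type inputs.

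The main obstacle is precisely this $K$-anisotropic case: it requires a case-by-case classification following the Killing–Cartan types and invokes deep results around the Kneser–Tits problem to produce enough rational unipotents. In positive characteristic one must additionally control inseparability phenomena, which is the technically heaviest part of the Margulis–Prasad extension and the reason the characteristic $p$ case was settled so much later than the characteristic zero one.
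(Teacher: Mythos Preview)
The paper does not prove this theorem; it is stated with references to the original literature (Kneser, Platonov in characteristic zero; Prasad, Margulis in positive characteristic) and used as a black box. Immediately after stating it, the paper remarks that for the application at hand only the case $G = Sp_{2n}$ over $\mathbb{Q}$ with $S = \{\infty\}$ is needed, and then gives an elementary direct proof of that special case (Theorem~\ref{label 3}) using the explicit generators \eqref{label 27} and the density of $\mathbb{Q}$ in $\mathbb{A}_{\mathbb{Q},f}$, bypassing the general machinery entirely.

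Your outline of the general argument is broadly faithful to the classical approach: the necessity direction is correct, and for the sufficiency direction the reduction to unipotent radicals via Bruhat decomposition in the $K$-isotropic case is the right idea. Your description of the $K$-anisotropic case is vague, however: the phrase ``transport unipotent subgroups from $G(K_v)$ back to global rationality via Galois cohomology'' does not describe an actual mechanism, since a $K$-anisotropic group has no nontrivial $K$-rational unipotent elements at all, and the local isotropy at $v \in S$ cannot be pulled back in this way. The genuine proof in the anisotropic case proceeds quite differently (reduction theory, finiteness of class numbers, and case-by-case classification), as you partially acknowledge. Since the paper simply cites the result rather than proving it, there is no paper-proof to compare against beyond this.
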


We can conclude from Theorem \ref{label 73} that the symplectic group  $G=Sp_{2n}$ (over $\mathbb Q$) has the strong approximation with respect to $S=\{\infty\}$. In the rest of this section, we provide an elementary proof of this result using matrix factorization. For this, we first put

\begin{equation}
    \Gamma_n(N)=\{\gamma \in \Gamma_{n} \mid \gamma^t\, \Omega\, \gamma \equiv \Omega\, \, \mod N \}, \label{label 13}
\end{equation}

for every positive integer $N$

\begin{lem}\label{label2}
Let $\pi_{N}$ be the projection $\pi_N: Sp_{2n}(\mathbb{Z}) \rightarrow Sp_{2n}(\mathbb{Z}/{N\mathbb{Z}})$ defined by $\pi_{N}(\gamma)=\gamma\mod N$. Then the sequence 
\begin{equation*}
    1 \longrightarrow \Gamma_n(N) \longrightarrow \Gamma_n \overset{\pi_N} \longrightarrow Sp_{2n}(\mathbb{Z}/{N\mathbb{Z}}) \longrightarrow 1
\end{equation*}
is exact.
\end{lem}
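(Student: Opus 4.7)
The first two exactness statements are formal: the inclusion $\Gamma_n(N)\hookrightarrow \Gamma_n$ is injective by definition, and exactness at $\Gamma_n$ is the equality $\Gamma_n(N)=\ker\pi_N$, which holds once $\Gamma_n(N)$ in \eqref{label 13} is read as the principal congruence subgroup $\{\gamma\in\Gamma_n \mid \gamma\equiv 1_{2n}\pmod N\}$ (note that the condition $\gamma^t\Omega\gamma\equiv\Omega\pmod N$ as written is automatic on $\Gamma_n$, so the intended meaning must be componentwise congruence to the identity). The substance of the lemma is therefore surjectivity of $\pi_N$.

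My plan is to reduce to prime powers and induct on the exponent. By the Chinese Remainder Theorem, $\mathbb Z/N\mathbb Z\cong \prod_p \mathbb Z/p^{e_p}\mathbb Z$ and $Sp_{2n}(-)$ commutes with products, so it suffices to show that $\pi_{p^k}\colon\Gamma_n\to Sp_{2n}(\mathbb Z/p^k\mathbb Z)$ is surjective for every prime power $p^k$. For the base case $k=1$ I appeal to the generation result for $Sp_{2n}(\mathbb F_p)$ recalled after \eqref{label 27}: each of the five families is a one-parameter matrix with integer-polynomial entries, so lifting the parameter from $\mathbb F_p$ to $\mathbb Z$ produces a preimage in $\Gamma_n$ for every elementary generator, and hence for every element of $Sp_{2n}(\mathbb F_p)$.

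For $k\geq 2$ assume surjectivity modulo $p^{k-1}$. Given $\gamma\in Sp_{2n}(\mathbb Z/p^k\mathbb Z)$, pick $\tilde\gamma\in\Gamma_n$ reducing to the image of $\gamma$ modulo $p^{k-1}$; then $\gamma\,\pi_{p^k}(\tilde\gamma)^{-1}\equiv 1_{2n}+p^{k-1}X\pmod{p^k}$ for some $X\in\mathrm{Mat}_{2n}(\mathbb F_p)$. Expanding the symplectic relation yields
\begin{equation*}
(1_{2n}+p^{k-1}X)^t\,\Omega\,(1_{2n}+p^{k-1}X)\equiv\Omega+p^{k-1}(X^t\Omega+\Omega X)\pmod{p^k},
\end{equation*}
since $p^{2(k-1)}X^t\Omega X\equiv 0\pmod{p^k}$; hence $X\in\mathfrak{sp}_{2n}(\mathbb F_p)$. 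Fixing a $\mathbb Z$-basis $\{X_i\}$ of $\mathfrak{sp}_{2n}(\mathbb Z)$ obtained by differentiating the five families of \eqref{label 27} at $\alpha=0$, I write $X\equiv\sum a_iX_i\pmod p$ and let $\eta\in\Gamma_n$ be the ordered product of the corresponding elementary symplectic generators evaluated at parameters $p^{k-1}a_i$. Each factor equals $1_{2n}+p^{k-1}a_iX_i$ up to terms of $p$-valuation at least $2(k-1)\geq k$, and cross terms between distinct factors also carry two powers of $p^{k-1}$; consequently $\pi_{p^k}(\eta)=1_{2n}+p^{k-1}X$, so $\eta\tilde\gamma\in\Gamma_n$ is the desired lift of $\gamma$.

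The hard part is this final step: one must verify that the tangent vectors extracted from \eqref{label 27} really do span $\mathfrak{sp}_{2n}(\mathbb Z)$ as a $\mathbb Z$-module (not merely $\mathfrak{sp}_{2n}(\mathbb F_p)$ over $\mathbb F_p$), and keep careful track of the noncommutative error terms in the ordered product. This is an elementary incarnation of the smoothness of $Sp_{2n}$ over $\mathbb Z$; the argument is routine but notation-heavy, and in the paper's setting would most likely be spelled out only for $n=2$.
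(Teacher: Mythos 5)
Your observation about the definition of $\Gamma_n(N)$ is on point: as written in \eqref{label 13}, the condition $\gamma^t\Omega\gamma\equiv\Omega\pmod N$ holds for every $\gamma\in\Gamma_n$, so the intended meaning must be the principal congruence subgroup $\{\gamma\in\Gamma_n\mid\gamma\equiv 1_{2n}\pmod N\}$, and with that reading exactness at the first two spots is indeed formal.

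The paper's proof of surjectivity is a single citation to \cite[Theorem 1]{newman1964symplectic}; you instead attempt a self-contained argument by CRT and induction on the exponent of a prime power. This is a reasonable plan, and the base case and the error-term bookkeeping in the inductive step are fine, but the algebraic claim you yourself flag at the end fails: the derivatives at $\alpha=0$ of the five families in \eqref{label 27} do \emph{not} span $\mathfrak{sp}_{2n}(\mathbb Z)$. Families 1--4 give the symmetric upper-right and lower-left blocks (rank $n(n+1)$), and family 5 gives $\begin{pmatrix}E_{ij}&0\\0&-E_{ji}\end{pmatrix}$ only for $i\neq j$ (rank $n(n-1)$) --- the case $i=j$ is excluded since $(1_n+\alpha E_{ii})(1_n-\alpha E_{ii})\neq 1_n$ --- for a total of $2n^2$, which is $n$ short of $\dim\mathfrak{sp}_{2n}=n(2n+1)$. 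The missing directions are exactly the Cartan elements $\begin{pmatrix}E_{ii}&0\\0&-E_{ii}\end{pmatrix}$. Since the kernel of $Sp_{2n}(\mathbb Z/p^k)\to Sp_{2n}(\mathbb Z/p^{k-1})$ is abelian and isomorphic to $\mathfrak{sp}_{2n}(\mathbb F_p)$ via $X\mapsto 1+p^{k-1}X$, the ordered products $\prod_i(1_{2n}+p^{k-1}a_iX_i)$ only realize a proper subgroup of this kernel, so your inductive step does not produce lifts of elements whose reduction modulo $p^{k-1}$ is the identity but which carry a nonzero diagonal perturbation.

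To repair the argument you must produce the diagonal torus directions explicitly. One way is the $SL_2$ identity
\begin{equation*}
\begin{pmatrix}t&0\\0&t^{-1}\end{pmatrix}=\begin{pmatrix}1&0\\-t^{-1}&1\end{pmatrix}\begin{pmatrix}1&t\\0&1\end{pmatrix}\begin{pmatrix}1&0\\-t^{-1}&1\end{pmatrix}\begin{pmatrix}0&-1\\1&0\end{pmatrix},
\end{equation*}
embedded along the $i$-th long-root $SL_2$ inside $Sp_{2n}$ and evaluated at $t=1+p^{k-1}$ with every occurrence of $t^{-1}$ replaced by $1-p^{k-1}$ (which agrees with $t^{-1}$ modulo $p^k$), together with the standard factorization of the Weyl element $\begin{pmatrix}0&-1\\1&0\end{pmatrix}$ into elementary unipotents. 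Alternatively one can bypass the elementary-matrix bookkeeping entirely by invoking smoothness of the group scheme $Sp_{2n}$ over $\mathbb Z$, which yields surjectivity of each reduction $Sp_{2n}(\mathbb Z/p^k)\to Sp_{2n}(\mathbb Z/p^{k-1})$ by formal lifting. Either route closes the gap; as written, your proposal does not.
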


\begin{proof}
 The only nontrivial part is the surjectivity of the map $\pi_N$. This follows directly from \cite[Theorem 1]{newman1964symplectic}
\end{proof}

We denote by $\hZ=\varprojlim_{N>1} \mathbb{Z}/N\mathbb{Z}$ the ring of profinite integers and we write $M_{2n}(\hZ)$ for the ring of $2n\times 2n$-matrices with coefficients in $\hZ$. The profinite compact group $GL_{2n}(\hZ)=\varprojlim_{N>1} GL_{2n}(\mathbb{Z}/N\mathbb{Z})$ is a subset of $M_{2n}(\hZ)$ and consists of invertible matrices. The subgroup $Sp_{2n}(\hZ)\triangleleft GL_{2n}(\hZ)$ is defined by exactness of the sequence \eqref{label1}.

\begin{prop}\label{label 41}
$Sp_{2n}(\Z)$ is dense in $Sp_{2n}(\hZ)$.
\end{prop}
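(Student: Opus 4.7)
The plan is to reduce density in the profinite topology to a statement about finite quotients, and then invoke Lemma \ref{label2}.

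First I would unpack the topology on $Sp_{2n}(\hZ)$. By definition
\[
Sp_{2n}(\hZ) = \varprojlim_{N \geq 1} Sp_{2n}(\mathbb{Z}/N\mathbb{Z}),
\]
carrying the inverse-limit (profinite) topology, a basis of open neighborhoods of the identity being the kernels $\ker\bigl(Sp_{2n}(\hZ) \to Sp_{2n}(\mathbb{Z}/N\mathbb{Z})\bigr)$ as $N$ ranges over positive integers. Consequently, a subgroup $H \subseteq Sp_{2n}(\hZ)$ is dense if and only if its image in every finite quotient $Sp_{2n}(\mathbb{Z}/N\mathbb{Z})$ is all of $Sp_{2n}(\mathbb{Z}/N\mathbb{Z})$.

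Next I would apply this criterion to $H = Sp_{2n}(\mathbb{Z})$, embedded diagonally in $Sp_{2n}(\hZ)$. For each $N \geq 1$ the induced map is precisely the reduction homomorphism $\pi_N : Sp_{2n}(\mathbb{Z}) \to Sp_{2n}(\mathbb{Z}/N\mathbb{Z})$ appearing in Lemma \ref{label2}. That lemma asserts that $\pi_N$ is surjective (its content being exactly the nontrivial lifting statement of Newman--Smart recalled there). Therefore the image of $Sp_{2n}(\mathbb{Z})$ in each finite quotient is the full group, and density follows from the criterion above.

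The only genuine content is the surjectivity of $\pi_N$, which has already been handled in Lemma \ref{label2}; everything else is formal manipulation of the profinite topology, so there is no real obstacle to overcome here. I would present the proof in two short lines: observe the profinite topology description of $Sp_{2n}(\hZ)$, and then cite Lemma \ref{label2} to conclude.
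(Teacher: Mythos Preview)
Your proposal is correct and is essentially the same argument as the paper's: both identify a neighborhood basis of the identity in $Sp_{2n}(\hZ)$ with the kernels of the reduction maps $\pi_N$ and then invoke Lemma~\ref{label2} to produce, for each $x$ and each $N$, an element of $Sp_{2n}(\mathbb{Z})$ lying in $xU_N$. The only difference is cosmetic---you phrase it via the abstract density criterion for profinite groups, while the paper unwinds that criterion by hand---so there is nothing substantive to add.
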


\begin{proof}
Since $M_{2n}(\hZ)=\varprojlim_{N>1} M_{2n}(\mathbb{Z}/N\mathbb{Z})$ is a profinite ring, a system of neighborhoods of the zero matrix is given by  $\{NM_{2n}(\hZ) \mid N\in \mathbb{N}\}$ and thus a system of neighborhood of $\mathbb{1}_{2n}$ in $Sp_{2n}(\hZ)$) is given by $\{U_N \mid N\in \mathbb{N}\}$ where $U_N= \{1+ NM_{2n}(\hZ)\}\cap Sp_{2n}(\hZ)\}$. Given $N\in \mathbb{N}$, consider the projection map $\pi_N: Sp_{2n}(\hZ) \rightarrow Sp_{2n}(\Z/N\mathbb{Z})$ and note that $\ker{\pi_N}=U_N$. By Lemma \ref{label2} the projection $\pi_N$ restricted to ${Sp_{2n}(\Z)}$ is surjective. Hence for any given $x\in Sp_{2n}(\hZ)$ and $N\in \mathbb{N}$, we choose $\gamma_N\in Sp_{2n}(\mathbb{Z})$ such that $\pi_N(\gamma_N)=\pi_N(x)$.Then $x^{-1}\gamma_N \in U_N$, that is $\gamma_N\in xU_N$. Taking $N$ large enough shows that $Sp_{2n}(\Z)$ is dense in $Sp_{2n}(\mathbb{\hZ})$.
\end{proof}

For a prime $p$, we denote by  $\mathbb{Q}_p$ the field of $p$-adic numbers and $\mathbb{Z}_p$ its compact subgring of $p$-adic integers.  We consider $\mathbb{A}_{\mathbb Q,f}$ the ring of finite ad\`eles of $\mathbb{Q} $, that is, the restricted product of the fields $\mathbb{Q}_p$ with respect to $\mathbb{Z}_p$ and we denote by $I_\mathbb Q=\mathbb A_{\mathbb Q,f}^{\times}$ the id\`ele group. 

\begin{thm} \label{label 3}
The algebraic group $G=Sp_{2n}$ (over $\mathbb Q$) has the strong approximation with respect to the infinite place $S=\{\infty\}$. 
\end{thm}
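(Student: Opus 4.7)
My plan is to derive density from the adelic product decomposition
\[
Sp_{2n}(\mathbb{A}_{\mathbb{Q},f}) = Sp_{2n}(\mathbb{Q}) \cdot Sp_{2n}(\hat{\mathbb{Z}}),
\]
which, combined with Proposition \ref{label 41}, gives the theorem. A basis of open neighbourhoods of any $g \in Sp_{2n}(\mathbb{A}_{\mathbb{Q},f})$ consists of translates $gV$ with $V$ open in $Sp_{2n}(\hat{\mathbb{Z}})$; writing $g = \gamma k$ with $\gamma \in Sp_{2n}(\mathbb{Q})$ and $k \in Sp_{2n}(\hat{\mathbb{Z}})$, and using Proposition \ref{label 41} to pick $\tilde{\gamma} \in Sp_{2n}(\mathbb{Z}) \cap kV$, one obtains $\gamma\tilde{\gamma} \in Sp_{2n}(\mathbb{Q}) \cap gV$.

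To prove the decomposition I would pass to the language of integral symplectic lattices. Given $g \in Sp_{2n}(\mathbb{A}_{\mathbb{Q},f})$, set $L = g \cdot \hat{\mathbb{Z}}^{2n}$, a $\hat{\mathbb{Z}}$-lattice in $\mathbb{A}_{\mathbb{Q},f}^{2n}$. Under the standard lattice--adele correspondence, $L$ comes from a $\mathbb{Z}$-lattice $\Lambda = L \cap \mathbb{Q}^{2n}$ of rank $2n$ satisfying $\Lambda \otimes_{\mathbb{Z}} \mathbb{Z}_p = g_p \cdot \mathbb{Z}_p^{2n}$ at every prime $p$. Because $g_p \in Sp_{2n}(\mathbb{Q}_p)$ is an isometry of $(\mathbb{Q}_p^{2n}, \Omega)$, it restricts to an isometry $(\mathbb{Z}_p^{2n}, \Omega) \xrightarrow{\sim} (\Lambda \otimes \mathbb{Z}_p, \Omega)$. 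Hence $\Omega$ is integral and unimodular on $\Lambda \otimes \mathbb{Z}_p$ for every $p$, and the local-global principle promotes $(\Lambda, \Omega|_\Lambda)$ to a unimodular symplectic $\mathbb{Z}$-lattice of rank $2n$.

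The theorem then reduces to the classification statement that every unimodular symplectic $\mathbb{Z}$-lattice is isometric to the standard one $(\mathbb{Z}^{2n}, \Omega)$. I would prove this by induction on $n$. Pick a primitive $e \in \Lambda$; since the pairing $\Lambda \to \textrm{Hom}_{\mathbb{Z}}(\Lambda, \mathbb{Z})$, $x \mapsto \Omega(x,\cdot)$, is an isomorphism by unimodularity, the map $\Omega(e,\cdot)$ is surjective onto $\mathbb{Z}$, producing $f \in \Lambda$ with $\Omega(e,f) = 1$. The projection $x \mapsto x - \Omega(x,f)e + \Omega(x,e)f$ shows $\Lambda = (\mathbb{Z}e \oplus \mathbb{Z}f) \oplus H^\perp$ orthogonally with $H^\perp$ again unimodular symplectic of rank $2(n-1)$, so induction furnishes a symplectic isomorphism $\mathbb{Z}^{2n} \xrightarrow{\sim} \Lambda$. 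Realising this isomorphism as multiplication by some $\gamma \in GL_{2n}(\mathbb{Q})$ yields $\gamma \in Sp_{2n}(\mathbb{Q})$ with $\gamma \hat{\mathbb{Z}}^{2n} = L$, and therefore $\gamma^{-1}g \in Sp_{2n}(\hat{\mathbb{Z}})$.

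The main technical step is the classification of unimodular symplectic $\mathbb{Z}$-lattices above: the whole argument is really an elaboration of this single fact. A mild subtlety is ensuring that the primitive vector $e$ does admit a partner $f$ with $\Omega(e,f) = 1$ rather than just $\Omega(e,f) \in \mathbb{Z}$; this is precisely where the $\mathbb{Z}$-unimodularity of $\Omega|_\Lambda$—as opposed to merely local unimodularity—enters crucially, and is the reason the preceding lattice setup is needed.
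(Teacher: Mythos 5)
Your proof is correct, but the route is genuinely different from the paper's. The paper argues directly: it lets $H$ denote the closure of $Sp_{2n}(\mathbb{Q})$ in $Sp_{2n}(\mathbb{A}_{\mathbb{Q},f})$, uses the dense diagonal embedding $\mathbb{Q} \hookrightarrow \mathbb{A}_{\mathbb{Q},f}$ to see that $H$ contains all the elementary symplectic matrices from \eqref{label 27} with adelic parameters, and then uses the fact that $Sp_{2n}(F)$ is generated by these elementary matrices over any field $F$ (here applied to $F = \mathbb{Q}_p$) to conclude that $H$ contains all elements supported on any finite set of places; since such elements are dense in the restricted product, $H = Sp_{2n}(\mathbb{A}_{\mathbb{Q},f})$. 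Your approach instead first establishes the adelic decomposition $Sp_{2n}(\mathbb{A}_{\mathbb{Q},f}) = Sp_{2n}(\mathbb{Q})\cdot Sp_{2n}(\hat{\mathbb{Z}})$ — in effect, that $Sp_{2n}$ has class number one — by classifying unimodular symplectic $\mathbb{Z}$-lattices via the hyperbolic-plane splitting argument, and then feeds that decomposition into Proposition \ref{label 41} (density of $Sp_{2n}(\mathbb{Z})$ in $Sp_{2n}(\hat{\mathbb{Z}})$) to finish. Both arguments are elementary and self-contained; the paper's is shorter because it piggybacks on the known generation of $Sp_{2n}$ by elementary matrices, while yours proves a stronger structural fact (class number one, which the paper's Corollary \ref{label4} derives only afterwards, for $GSp_{2n}$, \emph{using} this theorem) and thereby also makes the reduction to Proposition \ref{label 41} very transparent. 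There is no circularity in your argument since you prove the $Sp_{2n}$ decomposition directly from the lattice classification without invoking Corollary \ref{label4}. The one spot that deserves a remark is the claim that $\Omega(e,\cdot)$ is surjective for a primitive $e$: this is correct, but the cleanest justification is that the isomorphism $\Lambda \xrightarrow{\sim} \mathrm{Hom}_{\mathbb{Z}}(\Lambda,\mathbb{Z})$ carries the primitive vector $e$ to a primitive functional, and primitive functionals are exactly the surjective ones — you may want to make that explicit rather than attributing surjectivity directly to unimodularity.
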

\begin{proof}
We denote by $H$ the closure of $Sp_{2n}(\mathbb{Q})$ in $Sp_{2n}(\mathbb{A}_{\mathbb Q ,f})$. We have a dense diagonal embedding of $\mathbb Q$ inside $\mathbb A _{\mathbb Q,f}$, whence the subgroup $H$ contains the group generated by the matrices of the form \eqref{label 27} with $\alpha_1,\dots,\alpha_5 \in \mathbb A _{\mathbb Q ,f}$. In particular, given any prime number $p$, the subgroup $H$ contains the set of matrices of the form \eqref{label 27} with $(\alpha_i)_q=1$ for $q\neq p$ and $i=1,\dots,5$. Since $Sp_{4}(\mathbb Q_p)$ is generated by these type of matrices, we see that $H$ contains the elements $M = (M_p)_p\in Sp_{2n}(\mathbb{A}_{\mathbb Q ,f})$, with $M_p\in Sp_{2n}(\mathbb Q_p)$ and $M_q=1$ for $q\neq p$. Hence for any finite set of primes $F$, we have the inclusion

\begin{equation*}
    \{(x)_q \in Sp_{2n}(\mathbb{A}_{\mathbb Q ,f}) \mid \forall q\in F, x_{q} \in Sp_{2n}(\mathbb Q_p) \,\,\, \textmd{and}\,\,\, x_{q}=1  \,\,\,\textmd{if}\,\,\,q \notin F \} \subseteq H.
\end{equation*}
The result follows since the union of these subsets over all finite set of primes $F$ is dense in $Sp_4(\mathbb{A}_{\mathbb Q,f})$.
\end{proof}

\begin{cor}\label{label4}
Let $K$ be an open compact subgroup of $GSp_{2n}(\mathbb A_f)$. Then $\lambda(K)\subseteq \hZ^\times $ and if $\lambda(K)=\hZ^\times $, we have
\begin{equation} \label{label 95}
    GSp_{2n}(\mathbb A_f)= K \cdot GSp_{2n}^+(\mathbb Q)= GSp_{2n}^+(\mathbb Q) \cdot K\,\,\,.
\end{equation}
In particular we get that $h(G,K)=1$ for the maximal open compact subgroup $K=GSp_{2n}(\hZ)$.
\end{cor}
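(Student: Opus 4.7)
The plan is to combine an elementary compactness argument for $\lambda(K)$ with a reduction to $Sp_{2n}$ that is resolved by Theorem \ref{label 3}. The whole point is that once strong approximation is in hand, the only subtlety left is keeping track of multipliers.

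The first claim, $\lambda(K)\subseteq \hZ^\times$, is purely topological: $\lambda\colon GSp_{2n}(\mathbb A_f)\to \mathbb A_f^\times$ is a continuous homomorphism, so $\lambda(K)$ is a compact subgroup of $\mathbb A_f^\times$, and $\hZ^\times$ is the unique maximal compact subgroup of $\mathbb A_f^\times$. For the main decomposition, assume $\lambda(K)=\hZ^\times$ and fix $g\in GSp_{2n}(\mathbb A_f)$. Using the factorization $\mathbb A_f^\times=\mathbb Q_{>0}\cdot\hZ^\times$ (every idele is the product of its rational positive part and a unit), I would write $\lambda(g)=ru$ with $r\in\mathbb Q_{>0}$ and $u\in\hZ^\times$. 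The element $q_1=\textrm{diag}(1,\dots,1,r,\dots,r)\in GSp_{2n}^+(\mathbb Q)$ has multiplier $r$, and by hypothesis there exists $k_1\in K$ with $\lambda(k_1)=u^{-1}$. Then $h:=k_1 g q_1^{-1}$ has multiplier $1$, so $h\in Sp_{2n}(\mathbb A_f)$.

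Now I would invoke Theorem \ref{label 3}: $Sp_{2n}(\mathbb Q)$ is dense in $Sp_{2n}(\mathbb A_f)$, and the subgroup $U:=K\cap Sp_{2n}(\mathbb A_f)$ is open. Applying density to the open neighborhood $U\cdot h$ of $h$ produces $\gamma\in Sp_{2n}(\mathbb Q)$ with $\gamma=k_2 h$ for some $k_2\in U$, i.e., $h=k_2^{-1}\gamma$. Substituting back gives
\[
g \;=\; (k_1^{-1}k_2^{-1})\cdot(\gamma q_1)\;\in\; K\cdot GSp_{2n}^+(\mathbb Q),
\]
where $\gamma q_1\in GSp_{2n}^+(\mathbb Q)$ because $Sp_{2n}(\mathbb Q)\subseteq GSp_{2n}^+(\mathbb Q)$ (multiplier $1$ is positive) and $GSp_{2n}^+(\mathbb Q)$ is a subgroup. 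The reverse factorization $g\in GSp_{2n}^+(\mathbb Q)\cdot K$ is obtained by the symmetric argument, applying density to the right neighborhood $h\cdot U$ to write $h=\gamma' (k_2')^{-1}$ and conjugating the multiplier adjustment to the right (replace $h$ by $q_1^{-1} g k_1^{-1}$, which again has multiplier $1$).

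For the final assertion, it suffices to verify $\lambda(GSp_{2n}(\hZ))=\hZ^\times$: the diagonal $\textrm{diag}(u,1,\dots,1,1,u,\dots,u)$ lies in $GSp_{2n}(\hZ)$ with multiplier $u$, for any $u\in\hZ^\times$, so $\lambda$ is surjective. The decomposition just proved then yields $GSp_{2n}(\mathbb A_f)=GSp_{2n}(\mathbb Q)\cdot K$ (since $GSp_{2n}^+(\mathbb Q)\subseteq GSp_{2n}(\mathbb Q)$), giving $h(G,K)=|GSp_{2n}(\mathbb Q)\backslash GSp_{2n}(\mathbb A_f)/K|=1$. The only genuinely hard input is Theorem \ref{label 3}; everything else is routine bookkeeping of the multiplier homomorphism.
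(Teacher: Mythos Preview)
Your proof is correct and follows essentially the same route as the paper's: adjust the multiplier using $\mathbb A_f^\times=\mathbb Q_{>0}\cdot\hZ^\times$ to reduce to $Sp_{2n}(\mathbb A_f)$, then invoke strong approximation (Theorem~\ref{label 3}) against the open subgroup $K\cap Sp_{2n}(\mathbb A_f)$. Your version is in fact slightly more streamlined, since the paper takes an extra detour through $Sp_{2n}(\hZ)$ and Proposition~\ref{label 41} (density of $Sp_{2n}(\mathbb Z)$ in $Sp_{2n}(\hZ)$) before landing in $U=K\cap Sp_{2n}(\hZ)$, whereas you apply density directly to the open set $U\cdot h$; and for the reverse factorization the paper simply notes that $x\mapsto x^{-1}$ swaps the two decompositions, which is a bit quicker than rerunning the argument.
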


\begin{proof}
The first assertion follows from the fact that the map $\lambda: GSp_{2n}(\mathbb A_f) \rightarrow \mathbb A_f^{\times}$ is continuous and $\mathbb Z_p^\times$ is the unique maximal subgroup of $\mathbb Q_p^\times$. To show \eqref{label 95}, let $g\in GSp_{2n}(\mathbb{A}_f)$ so that $\lambda(g)\in \mathbb{A}_f^{\times}$. Since $\mathbb A_{\mathbb Q,f}^\times = \mathbb Q^{\times} \hZ^{\times}$ , we can write 
\begin{equation*}
    \lambda(g)=\alpha \cdot x,
\end{equation*}
for some $\alpha \in \mathbb{Q}^{\times}$ (we choose $\alpha >0$ if necessary) and $x\in \hZ^{\times}=\lambda(K)$. Thus we can choose $k\in K$ such that $\lambda(k)=x$.  Consider the matrix

\begin{equation*}
    g'=\textrm{diag}(\underbrace{\alpha^{-1},1,\dots, \alpha^{-1},1}_n,\underbrace{1,\alpha^{-1},\dots,1,\alpha^{-1}}_n)\, g  k^{-1}.
\end{equation*}

\par Observe that $g'\in Sp_{2n}(\mathbb{A}_f)$ and by Theorem \ref{label 3} the open set $g'\cdot Sp_{2n}(\hZ) \subseteq Sp_{2n}(\mathbb{A}_f)$ contains $\eta \in Sp_{2n}(\mathbb{Q})$ such that $\eta = g' \cdot h$ for some $h\in Sp_{2n}(\hZ)$. Moreover by Proposition \ref{label 41} the group $Sp_{2n}(\mathbb Z)$ is dense in $Sp_{2n}(\hZ)$, hence we can find  $\gamma \in Sp_{2n}(\mathbb Z)$ such that 
$\gamma \in  g'^{-1}\eta U$, where $U=K\cap Sp_{2n}(\hZ)$. This shows that $g \in K \cdot GSp_{2n}^+(\mathbb{Q})$ as desired. Considering the automorphism $x \mapsto x^{-1}$ for $x\in GSp_{2n}(\mathbb{A}_f)$, we see that $GSp_{2n}(\mathbb{A}_f)= GSp_{2n}^+ (\mathbb{Q}) \cdot K$.
\end{proof}

\subsection{Siegel upper half plane}

\begin{definition}
The Siegel upper half plane of degree $n$ consists of all symmetric complex $n\times n$-matrices whose imaginary part is positive definite:
\begin{equation}
    \mathbb{H}_n^+=\{\tau=\tau_1+i\tau_2 \in \textmd{Mat}_n(\mathbb C)\mid \tau^t=\tau,  \quad \tau_2 > 0 \}
\end{equation}
\end{definition}

Let $M=\mqty(A & B \\ C & D) \in GSp_{2n}^+(\mathbb{R})$ and $\tau \in \mathbb{H}_n^+$. Then the matrix $C \tau+D$ is invertible and if we define 
$$g \cdot \tau := (A\tau+B)(C\tau+D)^{-1},$$

then the map 
\begin{equation*}\label{label 96}
\tau \mapsto g \cdot \tau 
\end{equation*}
is an action of $GSp_{2n}^+(\mathbb{R})$ on $\mathbb{H}_n^+$ \cite{pitale2240siegel} . If we write $\tau =\tau_1+i\tau_2 \in \mathbb{H}_n^+$ and $d\tau=d\tau_1d\tau_2$ is the Euclidean measure, then the element of volume on $\mathbb{H}_n^+$ given by
    \begin{equation} \label{label 98}
        d^*\tau:= \det(\tau_2)^{-(n+1)} d\tau,
    \end{equation}
    is invariant under all transformations of the group $GSp_{2n}^+(\mathbb{R})$, i.e
    \begin{equation*}
        d^* (g \cdot \tau)  = d^* \tau, \quad \textmd{for all}\, g\in GSp_{2n}^+(\mathbb{R}).
    \end{equation*}

Given an element $\tau=\tau_2+i\tau_2 \in \mathbb{H}_n^+$, the relation
\begin{equation*}
    \mqty( 1_n & \tau_1 \\ 0_n & 1_n ) \mqty ( \tau_2^{1/2} & 0_n \\ 0_n & \tau_2^{-1/2} )\cdot i1_n = \tau
\end{equation*}
shows that the action of $GSp_{2n}^+(\mathbb R)$ is transitive. The stabilizer of $i1_n$ is the subgroup
\begin{equation*}
S = \Big\{\mqty(A & B \\ -B & A)\in GL_{2n}(\mathbb{R}) \Big\} \cap GSp_{2n}^+(\mathbb{R}), \label{label 66}
\end{equation*}
Hence the group $Z(\mathbb R)\backslash GSp_{2n}^+(\mathbb R)$ (where $Z(\mathbb R)$ denotes the center of $GSp_{2n}^+(\mathbb R)$ ) acts transitively on $\mathbb H_n^+$ and we have the following identification 
\begin{equation} \label{label 97}
    \mathbb{H}_n^+= PGSp_{2n}^+(\mathbb{R}) / K,
\end{equation}

where $K$ is the compact group $K = Z(\mathbb R)\backslash S\simeq \mathbb U^n /\{\pm 1_{2n}\} $ and $\mathbb U^n = S\cap Sp_{2n}(\mathbb R)$ is isomorphic to the unitary group of order $n$ through the map $$\mqty(A & B \\ -B & A) \mapsto A + i B.$$ The restriction of the action defined in \eqref{label 96} to the arithmetic subgroup $Sp_{2n}(\mathbb Z)$ will be of special importance to us. Note that from the identification \eqref{label 97} we see 
that the action  of $\Gamma_{n}$ on $\mathbb H_n^+$ is proper since $\Gamma_{n} =Sp_{2n}(\mathbb Z)$ is discrete and the subgroup $K$ in \eqref{label 66} is compact. The action of $Sp_{2n}(\mathbb Z)$ on $\mathbb H_{n}^+$ admits a fundamental domain and thanks to a result by Siegel \cite{SymplGeomSiegel}, it has the following concrete description \cite{namikawa2006toroidal}. Let $U_n$ be the subset of matrices $\tau=\tau_1+i\tau_2 \in  \mathbb H_n^+$ satisfying the following conditions:

\begin{enumerate}
    \item $\abs{\det(C\tau+D)}\geq 1$ for every $\mqty(A & B \\ C & D) \in Sp_{2n}(\mathbb Z);$
    \item $\tau_2=(\tau_2)_{ij}$ is Minkowski reduced, i.e $$a^t\tau_2 a \geq (\tau_2)_{kk}, \,\,\, 1 \leq k\leq n \quad \text{for all}\quad  a=\mqty (a_1 \\ \\
    \vdots \\ a_n) \in \mathbb Z^{n}\,\,\, \text{where}\,\,\, (a_1,\dots, a_n)=1;$$
    \item $\abs{(\tau_1)_{ij}} \leq 1/2.$
\end{enumerate}

Then $U_n$ is a fundamental domain of $Sp_{2n}(\mathbb Z)$ on $\mathbb H_{n}^+$ of finite volume with the respect to the element of volume \eqref{label 98}:
$$
\textmd{vol}(U_n)= 2 \prod_{i=1}^{n}\pi^{-k}\Gamma(i)\zeta(2i),
$$
where $\Gamma(s)$ denotes the Gamma functions and $\zeta(s)$ is the Riemann zeta function.

\subsection{Bost-Connes-Marcolli system: the $GSp_{2n}$-case}
We consider the connected Shimura datum $(GSp_{2n}^+, \mathbb{H}^{+}_n)$ together with the BCM pair 

\begin{align*}
& G^+=GSp_{2n}^+,\quad  X^+=\mathbb{H}^{+}_n, \quad V=\mathbb{Q}^{2n}, \quad M=MSp_{2n}(\hZ),\\
& L=\mathbb{Z}^{2n}, \quad K=GSp_{2n}(\hZ), \quad K_M=MSp_{2n}(\hZ).
\end{align*}

Let $M\in GSp_{2n}^+(\mathbb{Q}) \cap GSp_{2n}(\hZ)$. Since $\hZ\, \cap \,\mathbb{Q}=\mathbb{Z}$ it is clear that $M\in MSp_{2n}(\mathbb{Z})$ and there exists $M'\in MSp_{2n}(\mathbb{Z})^+$ such that $MM'=1$ so that $\lambda(M)=1$, that is $\Gamma_+=GSp_{2n}^+(\mathbb{Q}) \cap GSp_{2n}(\hZ)=\Gamma_{n}$. Corollary \ref{label4} tells us that $h(G,K)=1$ so by Proposition \ref{label 60} it is enough to work with 
the space $\mathfrak{Z}_{\mathcal{D},\mathcal{L}}^{\textmd{+}}$. As in the case of the $GL_2$ system, the first difficulty that arises is the presence of points in $\mathbb H_{n}^{+}$ with nontriival stabilizers:

\begin{prop} \label{label 74}
The groupoid structure on $\mathfrak{U}_{\mathcal{D},\mathcal{L}}^{\textmd{+}}$ does not pass to the quotient by the action of $\Gamma_{n} \times \Gamma_{n} $.
\end{prop}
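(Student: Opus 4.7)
The obstruction is that the action of $\Gamma_n \times \Gamma_n$ on $\mathcal{U}^+$ is not free, and this non-freeness in turn reflects the existence of points in $\mathbb{H}_n^+$ with nontrivial $\Gamma_n$-stabilizer. My plan is as follows. First I would observe that the central element $\gamma_0 = -1_{2n} \in \Gamma_n$ acts as the identity on $\mathbb{H}_n^+$, because $(-A\tau - B)(-C\tau - D)^{-1} = (A\tau + B)(C\tau + D)^{-1}$, so every point has at least a $\mathbb{Z}/2\mathbb{Z}$ stabilizer; more substantively, at elliptic points there are non-central stabilizers (for instance $z_0 = i \cdot 1_n$ is fixed by $-\Omega \in \Gamma_n$). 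Choosing $\rho = 0 \in K_M$, which is fixed by left multiplication of every $\gamma \in \Gamma_n$, the point $(1, 0, z_0) \in \mathcal{U}^+$ is stabilized by the entire diagonal subgroup $\{(\gamma, \gamma) : \gamma \in \mathrm{Stab}_{\Gamma_n}(z_0)\}$.

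Next I would show that this non-freeness produces a genuinely multi-valued composition on the coarse quotient. The source and target maps descend without issue to the unit space $\Gamma_n \backslash (K_M \times \mathbb{H}_n^+)$; the failure is in the composition. Given composable $(g_1, \rho_1, z_1)$ and $(g_2, \rho_2, z_2)$ in $\mathcal{U}^+$, the composition is $(g_1 g_2, \rho_2, z_2)$. Applying $(1, \delta) \in \Gamma_n \times \Gamma_n$ to the second factor, for $\delta$ a nontrivial element of $\mathrm{Stab}_{\Gamma_n}(\rho_2, z_2)$, yields the equivalent representative $(g_2 \delta^{-1}, \rho_2, z_2)$, still composable with the first, whose composition is $(g_1 g_2 \delta^{-1}, \rho_2, z_2)$. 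For these two to agree in the quotient I need $(\mu_1, \mu_2) \in \Gamma_n^2$ with $\mu_2 \in \mathrm{Stab}_{\Gamma_n}(\rho_2, z_2)$ and $\mu_1 = (g_1 g_2)\,\delta^{-1} \mu_2\, (g_1 g_2)^{-1}$ lying in $\Gamma_n$; that is, conjugation by $g_1 g_2 \in GSp_{2n}^+(\mathbb{Q})$ must send $\delta^{-1} \mu_2$ back into $\Gamma_n$. To produce an explicit failure I would take $g_1 g_2 = \mathrm{diag}(2, \ldots, 2, 1, \ldots, 1)$ of multiplier $2$ together with a symplectic transvection $\delta$ stabilizing $(0, z_0)$, and then check by a direct matrix computation that $(g_1 g_2)\,\delta\,(g_1 g_2)^{-1}$ has half-integer entries and lies outside $\Gamma_n$, with the (finite) stabilizer of $(\rho_2, z_2)$ too small to absorb the defect.

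The hardest part is the bookkeeping needed to arrange the three conditions simultaneously: $z_0$ elliptic with a known non-central stabilizing element $\delta \in \Gamma_n$, $\rho_2$ such that $\delta$ also stabilizes $\rho_2$ (typically forcing $\rho_2$ non-invertible, e.g.\ $\rho_2 = 0$), and $g_1 g_2 \in GSp_{2n}^+(\mathbb{Q})$ whose conjugation action moves $\delta$ outside $\Gamma_n$. Each is easy in isolation, but the verification that the two resulting compositions lie in distinct $\Gamma_n^2$-orbits --- equivalently, that the finite stabilizer of $(\rho_2, z_2)$ cannot cancel the defect --- is the only nontrivial check. The upshot is that $\Gamma_n^2 \backslash \mathcal{U}^+$ carries only the structure of a stack-groupoid $\mathfrak{Z}^+_{\mathcal{D}, \mathcal{L}}$ (in the sense of \cite[Appendix A]{ha2005bost}), motivating the substitution of $\mathbb{H}_n^+$ by a free homogeneous space in the sequel.
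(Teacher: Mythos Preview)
Your plan is essentially the paper's own argument: both use the elliptic point $z_0 = i\cdot 1_n$, the degenerate $\rho = 0$, and a diagonal element of $GSp_{2n}^+(\mathbb{Q})$ with multiplier $2$ (the paper takes $g=\mathrm{diag}(1_n,\tfrac{1}{2}1_n)$, so your $\mathrm{diag}(2\cdot 1_n,1_n)=2g$ is the same up to the centre), and both exhibit two equivalent composable pairs whose products are inequivalent because conjugation by $g$ pushes a stabilising element out of $\Gamma_n$. The paper perturbs the \emph{first} arrow via the stabilizer of the intermediate unit while you perturb the \emph{second}, but this is a cosmetic difference.

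Two remarks. First, a terminological slip: a \emph{symplectic transvection} is unipotent and therefore cannot fix an interior point of $\mathbb{H}_n^+$; the non-central stabilising element you want is precisely the $-\Omega$ you mention earlier (this is exactly the $\gamma$ the paper uses). Second, you correctly flag the verification that the finite stabilizer cannot absorb the defect as the crux; the paper does this check explicitly. With $\gamma=\Omega$ one has $g\gamma^{-1}g=-\tfrac{1}{2}\gamma$, and the equation $\gamma_1 g^2\gamma_2^{-1}=-\tfrac{1}{2}\gamma$ with $\gamma_2=\bigl(\begin{smallmatrix}A&B\\-B&A\end{smallmatrix}\bigr)\in\Gamma_n$ forces $1_n=4(A'^2+B'^2)$ for integer matrices $A',B'$, which is impossible modulo $4$. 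You would do well to include this computation rather than leave it as a claim.
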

\begin{proof}
Let $g=\mqty(1_{n} & 0_n \\ 0_n & \frac{1}{2} 1_{n}) \in GSp_{2n}^+(\mathbb{Q})$ and assume the groupoid composition is defined when we pass to the quotient. Since $g\cdot \frac{1}{2} i 1_{2n}= i 1_{2n}$ we obtain that
\begin{equation*}
    (g,0,i 1_{2n}) (g,0,\frac{1}{2}i  1_{2n}) = (g^2,0,\frac{1}{2}i  1_{2n}),
\end{equation*}
where the equality holds in the quotient. On the other hand, let $\gamma=\mqty(0_n &- 1_{n} \\  1_{n} & 0_n)$ so that $g\gamma^{-1}g=-\frac{1}{2}\gamma$ and $\gamma \cdot i  1_{2n} = g\cdot \frac{1}{2}i 1_{2n}$. We then have the following equality in the quotient:
\begin{equation*}
    (g^2,0,\frac{1}{2}i 1_{2n})= (g\gamma^{-1},0,\gamma \cdot i 1_{2n}) (g,0,\frac{1}{2}i 1_{2n})=(g\gamma^{-1}g,0,\frac{1}{2}i 1_{2n}).
\end{equation*}
Hence there exist $\gamma_1,\gamma_2 \in \Gamma_n=Sp_{2n}(\mathbb{Z})$ satisfying the following two conditions:
\begin{equation*}
   \gamma_2\cdot i  1_{2n}=i 1_{2n},
\end{equation*}
\begin{equation*}
    \gamma_1 g^2 \gamma_2^{-1}=\frac{1}{2} \mqty(0_n & 1_{n} \\ - 1_{n} & 0_n).
\end{equation*}
The first condition implies that $\gamma_2$ is of the form $\gamma_2=\mqty(A & B \\ -B & A)$ while the second condition gives
\begin{equation*}
    \gamma_1=\mqty(\frac{1}{2}B & -2A \\ \frac{1}{2}A & 2B).
\end{equation*}
Note that since $\gamma_1\in Sp_{2n}(\mathbb{Z})$ we get
\begin{equation*}
    \mqty(\frac{1}{4} (BA-AB) & A^2+B^2 \\ -B^2-A^2 & 4 (BA-AB))=\Omega,
\end{equation*}
that is $I=4(A'^2+B'^2)$ for some $A',B' \in \textrm{Mat}_n(\mathbb{Z})$, which is a contradiction.
\end{proof}

\subsection{The $GSp_{4,\mathbb Q}$-system}

We restrict our attention to the case $n=2$ and fix the following notation. We let  $Y=\mathbb H^+_2\times MSp_{4}(\hZ)$ so that $X=GSp_{4}(\mathbb Q)^+Y= \mathbb H^+_2\times MSp_{4}(\mathbb A_{f,\mathbb Q})$ and $ \Gamma_2=Sp_{4}(\mathbb Z)$. As observed above the action of $\Gamma_2$ on $Y$ is not free and it turns out that the set of points in $Y$ with non-trivial stabilizers strictly contains $\mathbb H_2^+\times \{0_4\}$.
Let $$  F_Y=\{h\in MSp_4(\hZ) \mid \rank_{\Q_p} (h_p) \leq 2\quad \textmd{for all primes $p$} \}. $$

Then the action of $\Gamma_2$ on $\tilde{Y}=Y\backslash (\mathbb{H}_2^+\times F_Y)$ is free. To see this, we suppose that for some $\gamma \in \Gamma_2$ we have $\gamma \cdot \tau = \tau$ and $\gamma h_p=h_p$ for some $\tau \in \mathbb{H}_2^{+}$ and $h_p \in MSp_4(\Q_p)$ with $\rank_{\Q_p}(h_p) >2$ for some prime $p$. Then we can find $T \in GL_4(\Q_p)$ such that
\begin{equation*}
    T \gamma T^{-1}=\mqty(1 & 0 & 0 & x_1 \\ 0 & 1 & 0 & x_2 \\ 0 & 0 & 1 & x_3 \\ 0 & 0 & 0 & x_4),
\end{equation*}
for some $x_1 \dots, x_4\in \Q_p$. Since the entries of $\gamma$ are in $\Z$ we see that $x_4\in \Q$ and thus $C_{\Q,\gamma}(x)=(x-1)^3(x-x_4)$. On the other hand, since $\gamma$ fixes a point in  $\mathbb{H}_2^+$, then there exists $P\in Sp_4(\R)$ such that
\begin{equation*}
    P\gamma P^{-1}= \mqty(a_1 & b_1 \\ -b_1 & a_1) \odot \mqty(a_2 & b_2 \\ -b_2 & a_2) ,\quad a_i,b_i\in \R  \quad a_1^2+b_1^2=a_2^2+b_2^2=1,
\end{equation*}
So $C_{\mathbb{C},\gamma}(x)=(x-\lambda_1)(x-\overline{\lambda}_1)(x-\lambda_2)(x-\overline{\lambda}_2)$ where $\lambda_1=a_1+ib_1$ and $\lambda_2=a_2+ ib_2$. Hence  $\lambda_1=\lambda_2=1$ and $\gamma=1$. It is easy to see that the set of points in $Y$ with non-trivial stabilizers is strictly larger than $\mathbb H_2^+ \times \mathbf \{0\}$.

 The fact that $F_Y$ is not invariant under scalar matrices in $GSp_{4}^+(\mathbb Q)$ creates a new difficulty that was not present in the $GL_2$-system. For this reason, and for the purpose of $\textmd{KMS}_\beta$ analysis, instead of working with the quotient $\mathbb H_2^+= GSp_{4}^+(\mathbb R) / K$, we consider first the quotient $PGSp_{4}^+(\mathbb R)= GSp_{4}^+(\mathbb R) / Z(\mathbb R)$, where $Z(\mathbb R)$ is the center of the group $GSp_4^+(\mathbb R)$. From now on we refer to this system as the $GSp_4$-system and we call the original dynamical system (corresponding to the Shimura datum $(GSp_{4}^+,\mathbb H_2^+)$) the \textit{Connes-Marcolli $GSp_{4}$-system}. We will show later that the two systems have the same thermodynamical properties.  Since now $PGSp_{4}^+(\mathbb R)$ is a group, we get the following:

\begin{prop}\label{label 14}
For $\beta \neq 0$, there exists a correspondence between $\textmd{KMS}_\beta$ states on the $GSp_{4}$-system and $\Gamma_2$-invariant measures $\mu$ on $PGSp_4^+(\mathbb R) \times MSp_{4}(\mathbb A_{f,\mathbb Q})$ such that  
\begin{equation}
    \nu(\Gamma_2 \backslash PGSp_{4}^+(\mathbb R) \times MSp_{4}(\hZ))=1,\quad \mu (gB)= \lambda(g)^{-\beta} \mu(B)\label{label 67}
\end{equation}

for any $g \in GSp_{4}^+(\mathbb Q)$ and Borel compact sublet $B \subset PGSp_{4}^+(\mathbb R) \times MSp_{4}(\mathbb A_{f,\mathbb Q})$. Here $\nu$ denotes the measure on $ \Gamma_2 \backslash PGSp_{4}^+(\mathbb R) \times MSp_4(\mathbb A_{f,\mathbb Q})$ corresponding to $\mu$.
\end{prop}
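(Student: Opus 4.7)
The plan is to reduce the statement to Proposition \ref{label 65} after two preparatory adjustments: (i) the action of $\Gamma_2$ on $X := PGSp_4^+(\mathbb{R})\times MSp_4(\mathbb{A}_{f,\mathbb{Q}})$ must be made (essentially) free, and (ii) Proposition \ref{label 65} naturally produces a Radon measure on $Y := PGSp_4^+(\mathbb{R})\times MSp_4(\hZ)$, not on the larger space $X$ appearing in \eqref{label 67}.

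For (i), since $PGSp_4^+(\mathbb{R}) = GSp_4^+(\mathbb{R})/Z(\mathbb{R})$ is now a group, the left action of $\Gamma_2$ on it has kernel $\Gamma_2 \cap Z(\mathbb{R}) = \{\pm 1_4\}$; as $-1_4$ acts on $MSp_4(\mathbb{A}_{f,\mathbb{Q}})$ by negation, the diagonal action of $\Gamma_2$ on $X$ is free outside the fixed locus $F := PGSp_4^+(\mathbb{R})\times\{0_4\}$. Applying the scaling relation in \eqref{label 67} to $g = 2\cdot 1_4$ (which preserves $F$ and satisfies $\lambda(g)=4$) gives $\mu(F) = 4^{-\beta}\mu(F)$, hence $\mu(F)=0$ since $\beta\neq 0$; the analogous computation on the state side, applied to elements of $C_0(\Gamma_2\backslash F)$, yields the matching vanishing of $\phi$. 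We may therefore pass to $X^{\circ} := X\setminus F$ and apply Proposition \ref{label 65} with $(G,X,Y,\Gamma,N) := (GSp_4^+(\mathbb{Q}), X^{\circ}, Y\cap X^{\circ}, \Gamma_2, \lambda)$: every $\textmd{KMS}_\beta$ state $\phi$ yields a $\Gamma_2$-invariant Radon measure $\mu_Y$ on $Y\cap X^{\circ}$ with $\mu_Y(gB) = \lambda(g)^{-\beta}\mu_Y(B)$ whenever $gB\subseteq Y$, and $\phi$ is recovered from $\mu_Y$ via \eqref{label 70}. The normalization $\nu(\Gamma_2\backslash Y)=1$ encodes that $\phi$ is a state rather than a mere weight.

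For (ii), extend $\mu_Y$ to a measure $\mu$ on $X$ by central scaling. Every $(h,m)\in X$ lies in the $GSp_4^+(\mathbb{Q})$-orbit of a point of $Y$: for $N\in \mathbb{N}$ large enough (depending only on the finitely many primes at which $m$ has a non-trivial denominator) one has $(N\cdot 1_4)\cdot(h,m)\in Y$, since $N\cdot 1_4 \in Z(\mathbb{R})$ acts trivially on $PGSp_4^+(\mathbb{R})$ and by multiplication by $N$ on the finite factor. For a Borel compact $B\subseteq X$, pick such an $N_B$ with $(N_B\cdot 1_4)B\subseteq Y$ and set
\begin{equation*}
    \mu(B) := N_B^{2\beta}\,\mu_Y\bigl((N_B\cdot 1_4)B\bigr).
\end{equation*}
The already-known scaling of $\mu_Y$ on $Y$ makes this definition independent of the choice of $N_B$, and the full relation \eqref{label 67} together with the $\Gamma_2$-invariance of $\mu$ on $X$ follow tautologically (the latter because $\Gamma_2\subseteq \ker\lambda$ commutes with central scaling).

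The converse direction runs the same construction backwards: given a measure $\mu$ on $X$ satisfying \eqref{label 67}, its restriction to $Y$ descends to a probability measure $\nu$ on $\Gamma_2\backslash Y$, and formula \eqref{label 70} defines a state $\phi$ on $\mathcal{A}$; the $\textmd{KMS}_\beta$ property is verified exactly as in the proof of Proposition \ref{label 65}, using the scaling $\mu(gB) = \lambda(g)^{-\beta}\mu(B)$. The main obstacle is the well-definedness check for the extension in (ii), combined with the preliminary matching of the negligibility of $F$ on both sides of the correspondence; once these are in place the remainder reduces to a direct invocation of Proposition \ref{label 65}.
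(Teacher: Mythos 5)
Your proposal is correct in outline and, if anything, more careful than the paper's own argument on the freeness point. The paper simply asserts ``the action of $\Gamma_2$ on $PGSp_4^+(\mathbb{R})$ is free'' and invokes Proposition~\ref{label 65} directly; as you observe, this is imprecise, since $-1_4\in\Gamma_2\cap Z(\mathbb{R})$ acts trivially on $PGSp_4^+(\mathbb{R})$, so $\Gamma_2$ acts on $X$ with nontrivial stabilizers along $F=PGSp_4^+(\mathbb{R})\times\{0_4\}$. Your step~(i) — removing $F$, checking freeness on $X^\circ$, and using $g=2\cdot 1_4$ (acting trivially on $F$ but with $\lambda(g)=4\neq 1$) to force $\mu(F)=0$ when $\beta\neq 0$ — is the right repair and is precisely what legitimizes the paper's appeal to Proposition~\ref{label 65}. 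Your step~(ii) differs in mechanism but not in substance: the paper extends $\mu_Y$ to $X$ by quoting the decomposition $MSp_4(\mathbb{A}_f)=GSp_4^+(\mathbb{Q})\cdot MSp_4(\hZ)$ and \cite[Lemma~2.2]{laca2007phase}, while you realize the same extension explicitly by scaling with central elements $N\cdot 1_4$ (the well-definedness check via $\mathrm{lcm}$ is fine, and $\lambda(N\cdot 1_4)=N^2$ gives the correct exponent $N^{2\beta}$). The trade-off is that the paper's citation is shorter and fits its general-machinery style, whereas your construction is self-contained and makes the role of the center transparent. The one place where your writeup is thinner than it should be is the phrase ``the analogous computation on the state side yields the matching vanishing of $\phi$'': for the state $\to$ measure direction one needs to show directly from the $\textmd{KMS}_\beta$ condition (not from the scaling relation, which is what one is trying to establish) that the induced measure assigns zero to $\Gamma_2\backslash F$, and also to say why Proposition~\ref{label 65} applied to $Y\cap X^\circ$ still speaks about the same corner algebra $\mathcal{A}$; these are small but real points left implicit, though the paper's own proof is no more explicit here.
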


\begin{proof}
Since the action of $\Gamma_2$ on $PGSp_4^+(\mathbb R)$ is free, Proposition \ref{label 65} applied to the group $G=GSp_{4}^+(\mathbb Q)$ and the spaces $X= PGSp_{4}^+(\mathbb R) \times MSp_4(\mathbb A_{f,\mathbb Q})$ and $Y=PGSp_{4}^+(\mathbb R) \times MSp_4(\hZ)$ gives a one-to-one correspondence between the set of $\textmd{KMS}_\beta$ states on $(\mathcal{A},\sigma)$ and $\Gamma_2$-invariant measures $\mu$ on $Y$ such that $\mu (gB)=\lambda(g)^{-\beta} \mu(B)$ if $gZ$ and $Z$ are measurable subsets of $Y$. The equality $MSp_{4}(\mathbb A_f)=GSp_{4}^+(\mathbb Q) MSp_{4}(\hZ)$ allows us to extend (\cite[Lemma 2.2]{laca2007phase}) this measure to a Radon measure on $X$ such that $\mu(gB)=\lambda(g)^{-\beta}\mu(B)$ for every Borel subset $B\subseteq X$. Since the algebra $\mathcal{A}$ is not unital, from the normalization condition \eqref{label 69} and equation \eqref{label 70} we obtain that $\nu$ is a probability measure on $\Gamma_2 \backslash Y$.
\end{proof}
From now on, we let $Y=PGSp_{4}^+(\mathbb R) \times MSp_{4}(\hZ)$ and $X=PGSp_4^+(\mathbb R) \times MSp_{4}(\mathbb A_{f,\mathbb Q})$. For $\beta >0$, we denote by $\mathcal{E}_{\beta}$ the set of Radon measures on $X$ satisfying the properties in Proposition \ref{label 14}. Note that the extremal $\textmd{KMS}_\beta$ states correspond to point mass measures.

\section{$\textmd{KMS}_{\beta}$ states analysis}
\subsection{High temperature region}
\hfill
\\

We begin the $\textmd{KMS}_{\beta}$ of the $GSp_{4}$-system analysis by first considering the high temperature region $0<\beta <3$. Our first goal is to show that the $GSp_{4}$-system constructed above does not admit a $\textmd{KMS}_{\beta}$ state for $0<\beta < 3$ with $\beta \notin \{1,2\}$. We first show some useful lemmas.

\begin{lem} \label{label 49}
Let $F$ be a finite set of prime numbers and $g=(g_p)_{p\in F} \in \prod_{p\in F}MSp_4 {(\mathbb Z_p)} \subset \prod_{p\in F}MSp_4 {(\mathbb Q_p)}$ with $\lambda(g_p)\neq 0$ for all $p\in F$. Then there exist $g_1 \in S_{F}$ and $g_2\in \prod_{p\in F}GSp_4 {(\mathbb Z_p)}$ such that $g=g_1g_2$.
\end{lem}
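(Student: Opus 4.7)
The plan is to reduce the statement to a direct application of Corollary \ref{label4}, after packaging the tuple $(g_p)_{p\in F}$ into a single global adelic element. First, I would extend $g$ to an element $\tilde{g}=(\tilde{g}_p)_p \in GSp_4(\mathbb{A}_{\mathbb{Q},f})$ by setting $\tilde{g}_p = g_p$ for $p\in F$ and $\tilde{g}_p = 1$ for every prime $p \notin F$. The hypothesis $\lambda(g_p)\neq 0$ guarantees that $g_p$ is invertible in $GSp_4(\mathbb{Q}_p)$, so that $\tilde{g}$ genuinely lies in $GSp_4(\mathbb{A}_{\mathbb{Q},f})$, and $\tilde{g}_p = 1 \in GSp_4(\mathbb{Z}_p)$ at almost every prime so the restricted-product condition is automatic.

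Next, because $\lambda\bigl(GSp_4(\hZ)\bigr) = \hZ^\times$, Corollary \ref{label4} produces a decomposition $\tilde{g} = g_1 \cdot k$ with $g_1 \in GSp_4^+(\mathbb{Q})$ and $k = (k_p)_p \in GSp_4(\hZ) = \prod_p GSp_4(\mathbb{Z}_p)$. I would then verify that this $g_1$ already belongs to $S_F$ and that $g_2 := (k_p)_{p\in F}$ is the desired second factor. The integrality step is to check that $g_1 \in MSp_4^+(\mathbb{Z})$: at each prime $p$ one has $(g_1)_p = \tilde{g}_p\, k_p^{-1}$, which lies in $MSp_4(\mathbb{Z}_p)$ both when $p\in F$ (product of two elements of $MSp_4(\mathbb{Z}_p)$) and when $p\notin F$ (since $\tilde{g}_p = 1$ and $k_p^{-1} \in GSp_4(\mathbb{Z}_p)$). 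Combined with $g_1 \in GSp_4^+(\mathbb{Q})$ this gives $g_1 \in MSp_4(\hZ) \cap GSp_4^+(\mathbb{Q}) = MSp_4^+(\mathbb{Z})$, as identified at the start of the $GSp_{2n}$-BCM subsection.

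The final ingredient is the multiplier: by construction $\lambda(g_1)$ is a positive rational, indeed a positive integer by the previous step, and for $p\notin F$ the $p$-component $\lambda((g_1)_p) = \lambda(k_p)^{-1}$ is a unit in $\mathbb{Z}_p$. Hence $v_p(\lambda(g_1)) = 0$ for every $p\notin F$, which forces $\lambda(g_1) \in \mathbb{N}(F)$ and therefore $g_1 \in S_F$. Restricting the equation $\tilde{g} = g_1 k$ to the components indexed by $F$ yields the claimed decomposition $g = g_1 g_2$ in $\prod_{p\in F} MSp_4(\mathbb{Q}_p)$.

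\emph{Main obstacle.} There is no serious analytic obstruction — the lemma is essentially a bookkeeping consequence of Corollary \ref{label4} (itself a strong-approximation statement for $Sp_4$). The one delicate point is ensuring that the multiplier $\lambda(g_1)$ lies in $\mathbb{N}(F)$, which relies crucially on the choice to extend by the identity at primes outside $F$; any other extension could introduce spurious prime factors into $\lambda(g_1)$ and break membership in $S_F$.
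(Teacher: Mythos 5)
Your proposal is correct and takes essentially the same approach as the paper: embed $g$ into $GSp_4(\mathbb{A}_{\mathbb{Q},f})$ by extending with the identity outside $F$, apply Corollary \ref{label4} to obtain $g_1 \in GSp_4^+(\mathbb{Q})$ and $g_2 \in GSp_4(\hZ)$, and then check integrality of $g_1$ at each prime together with $\lambda(g_1)\in\mathbb{N}(F)$. Your write-up is more careful than the paper's (which leaves the adelic extension and the multiplier bookkeeping implicit), but the underlying argument is the same.
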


\begin{proof}
It follows from Corollary \ref{label4} that we can find $g_1\in GSp_{4}^+(\mathbb Q)$ and $g_2\in GSp_{4}(\mathbb Z_p)$ such that $g = g_1 g_2$ with $g_1 \in GSp_{4}(\mathbb{Z}_q)$, $q\neq p$ and $g_1 \in MSp_{4}(\mathbb Z_p)$ and $\lambda(g_1) \in \mathbb N_F$, that is $g_1 \in S_F$.
\end{proof}

For  $k_0,k_1,k_2\in \mathbb{Z}$, we set
\begin{equation*}
   P_{k_0}:=\mqty(0 & 0 \\ 0 & p^{k_0}) \odot 0_2 ,\quad P_{k_1,k_2}:= \mqty(0 & 0 \\ 0 & p^{k_1}) \odot \mqty(0 & 0 \\ 0 & p^{k_2}).
\end{equation*}
\begin{align}
    Z_{k_0}^{(0)} := Sp_4(\mathbb{Z}_p)\,P_{k_0}\, GSp_4(\mathbb{Z}_p), \quad 
    Z_{k_1,k_2}^{(1)} := Sp_4(\mathbb{Z}_p)\, P_{k_1,k_2}\, GSp_4(\mathbb{Z}_p).
     \label{label 40}
\end{align}

\begin{lem} \label{label 46}
The sets $Z_{k_0}^{(0)}$ and $Z_{k_1,k_2}^{(1)}$, $k_0,k_1,k_2 \in \mathbb{Z}$ are pairwise disjoint. Moreover, given any nonzero matrix $a \in MSp_{4}(\mathbb{Q}_p)$ with $\lambda(a)=0$, then $a\in Z_{k_0}^{(0)} \cup  Z_{k_1,k_2}^{(1)} $ for some $k_0,k_1,k_2 \in \mathbb{Z}$.
\end{lem}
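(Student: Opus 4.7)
The lemma has two parts: pairwise disjointness of the double cosets, and the claim that they cover every nonzero $a \in MSp_4(\mathbb{Q}_p)$ with $\lambda(a) = 0$.

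For the disjointness, the key observation is that both the rank of a matrix and its elementary divisors (the valuations appearing in the Smith normal form over $\mathbb{Z}_p$) are invariants under left and right multiplication by $GL_4(\mathbb{Z}_p)$, and hence a fortiori under $Sp_4(\mathbb{Z}_p) \times GSp_4(\mathbb{Z}_p)$. The matrices $P_{k_0}$ have rank $1$ while the $P_{k_1,k_2}$ have rank $2$, so the two families are separated. Within each family the elementary divisors---the single value $p^{k_0}$ in the rank-one case, and the multiset $\{p^{k_1}, p^{k_2}\}$ in the rank-two case---determine the parameters, up to the permutation symmetry $Z^{(1)}_{k_1,k_2} = Z^{(1)}_{k_2,k_1}$ realized by conjugation with the symplectic block-swap in $Sp_4(\mathbb{Z})$.

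For the covering statement, first note that $\lambda(a) = 0$ means $a^{\mathrm{T}} \Omega a = 0$, equivalently that $\mathrm{im}(a)$ is $\Omega$-isotropic. Since any $\Omega$-isotropic subspace of $\mathbb{Q}_p^4$ has dimension at most $2$, a nonzero $a$ with $\lambda(a) = 0$ has rank $1$ or $2$. In the rank-one case I write $a = u v^{\mathrm{T}}$; after absorbing denominators into $v$ one may assume $u \in \mathbb{Z}_p^4$ is primitive, and the transitivity of $Sp_4(\mathbb{Z}_p)$ on primitive vectors of $\mathbb{Z}_p^4$ produces $\gamma_1 \in Sp_4(\mathbb{Z}_p)$ with $\gamma_1 u = e_3$. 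Writing $v = p^{-r} \tilde v$ with $\tilde v$ primitive and using the analogous transitivity of $GSp_4(\mathbb{Z}_p)$ (which contains $Sp_4(\mathbb{Z}_p)$) through the transpose action, one finds $\gamma_2 \in GSp_4(\mathbb{Z}_p)$ realizing $\gamma_1 a \gamma_2 = p^{-r} e_3 e_3^{\mathrm{T}} = P_{-r}$.

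The rank-two case is the main technical step and the main obstacle. The image $L$ of $a$ is a $2$-dimensional Lagrangian subspace of $\mathbb{Q}_p^4$. By the Witt-type transitivity of $Sp_4(\mathbb{Z}_p)$ on primitive Lagrangian sublattices of $\mathbb{Z}_p^4$, one chooses $\gamma_1 \in Sp_4(\mathbb{Z}_p)$ sending $L$ to the standard Lagrangian $\mathrm{span}(e_3, e_4)$; then $\gamma_1 a$ has vanishing first two rows and is determined by a $2 \times 4$ block $B$ of rank $2$. The remaining reduction uses right multiplication by $\gamma_2 \in GSp_4(\mathbb{Z}_p)$ together with a residual left action by the stabilizer of the standard Lagrangian in $Sp_4(\mathbb{Z}_p)$, a Siegel-type parabolic with Levi $\simeq GL_2(\mathbb{Z}_p)$ and abelian unipotent radical $\simeq \mathrm{Sym}_2(\mathbb{Z}_p)$, to bring $B$ into the canonical form $(0 \,|\, \mathrm{diag}(p^{k_1}, p^{k_2}))$. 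The difficulty is that the available symplectic-compatible row and column operations are more restrictive than arbitrary $GL_4$ operations, but the right $GSp_4(\mathbb{Z}_p)$-action is flexible enough to normalize the first two columns of $B$ to zero, after which the Levi $GL_2(\mathbb{Z}_p)$ of the parabolic performs an integer Smith-style diagonalization, giving $\gamma_1 a \gamma_2 = P_{k_1, k_2}$ for suitable $k_1, k_2 \in \mathbb{Z}$.
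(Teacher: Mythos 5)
Your disjointness argument---rank and the $GL_4(\mathbb{Z}_p)$-elementary divisors are invariants of the $GL_4(\mathbb{Z}_p)\times GL_4(\mathbb{Z}_p)$ double cosets, and $Sp_4(\mathbb{Z}_p)\times GSp_4(\mathbb{Z}_p)$ is a subgroup---is correct and more transparent than the paper's, which only asserts uniqueness at the end. Your covering argument also takes a genuinely different route: the paper performs an explicit $p$-adic elimination (clear the first row and column with the matrices $J(U), J(S), J(\tilde S)$, then finish with Smith normal form on a residual $2\times 2$ block), whereas you reorganize the computation around transitivity of $Sp_4(\mathbb{Z}_p)$ on primitive Lagrangian sublattices. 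The rank-one case and the concluding diagonalization are fine.

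However, the rank-two case has a real gap at exactly the step you yourself flag as ``the difficulty.'' Saying that the right $GSp_4(\mathbb{Z}_p)$-action is ``flexible enough to normalize the first two columns of $B$ to zero'' is equivalent to saying that $\ker a$ can be moved by some $\gamma_2\in GSp_4(\mathbb{Z}_p)$ onto the standard Lagrangian spanned by $e_1,e_2$; since $GSp_4$ preserves $\Omega$-isotropy, this requires $\ker a$ to be $\Omega$-isotropic, and that does \emph{not} follow from $a^{t}\Omega a = 0$ alone, which constrains only the image of $a$. Concretely, the matrix
\[
  a = \begin{pmatrix} 0 & 0 & 0 & 0 \\ 0 & 0 & 0 & 0 \\ 1 & 0 & 0 & 0 \\ 0 & 0 & 1 & 0 \end{pmatrix}
\]
satisfies $a^{t}\Omega a = 0$ with Lagrangian image spanned by $e_3,e_4$, yet $\ker a$ is spanned by $e_2,e_4$ with $e_2^{t}\Omega e_4 = 1$, hence not isotropic. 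Since $\ker P_{k_1,k_2}$ is the Lagrangian spanned by $e_1,e_2$ and isotropy of the kernel is preserved along the $Sp_4(\mathbb{Z}_p)\times GSp_4(\mathbb{Z}_p)$ double coset (the kernel transforms by $\gamma_2^{-1}$), this $a$ lies in no $Z^{(1)}_{k_1,k_2}$, so your reduction cannot succeed for it. The missing ingredient is the second relation $a\Omega a^{t} = \lambda(a)\Omega$, which holds on all of $GSp_{2n}$ and therefore on its Zariski closure, i.e.\ on the enveloping semigroup in the abstract sense; for $\lambda(a)=0$ and rank two it forces the row space of $a$ to be isotropic, whence $\ker a$ (its standard orthogonal complement) is Lagrangian, which is precisely what your Witt-transitivity step needs. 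The paper's own proof relies on this silently when it invokes ``the symplectic relations \eqref{label 7} and \eqref{label 8}'' to get the form $\mqty(g_0 & 0 \\ 0 & 0)\odot M$, even though its stated definition of $MSp_{2n}$ imposes only $m^{t}\Omega m = \lambda(m)\Omega$. You should establish $a\Omega a^{t} = 0$ and deduce that $\ker a$ is Lagrangian before invoking transitivity on the kernel side; with that in place the rest of your argument goes through.
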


\begin{proof}
We first fix some notations. For  $1 \leq i,j \leq 4$, let $E_{ij}$ be the elementary matrix with coefficient $1$ at the position $(i,j)$ and $0$ otherwise. For $U \in GL_2(\mathbb{Z}_p)$ and $S \in \textrm{Sym}_2(\mathbb{Z}_p)$, we put 

\begin{equation*}
    J(U)  = \mqty(U^t & 0_2 \\ 0_2 & U^{-1}),\quad J(S)= \mqty(1_2 & S \\ 0_2 & 1_2)
\end{equation*}

Consider $g\in MSp_{4}(\mathbb{Q}_p)$ with $\mu(g)=0$. Let $g_0$ be any entry of $g$ with maximal $p$-adic valuation and we write $g_0=a_0 p^{k_0}$, where $a_0 \in \mathbb{Z}_p^{\times}$. Using the matrices $\Omega$ and $J_1(P)$ (where $P$ is a permutation matrix), we may assume that $g_{11}=g_0$. If $a$ is an entry of the matrix $g$, we set
\begin{equation*}
   U_a=1_2 - g_0^{-1} a E_{21},\quad S_a=-g_0^{-1} a E_{11},\quad \tilde{S}_a= -g_0^{-1} a (E_{12}+ E_{21})
\end{equation*}
Observe that by maximality, these matrices are in $Sp_4(\mathbb{Z}_p)$. We multiply $g$ from the right by
\begin{equation*}
    J(U_{g_{_{12}}}) J(S_{ g_{_{13}}+ g_{_{11}} g_{_{12}}g_{_{14}}}) J(\tilde{S}_{g_{_{14}}}).
\end{equation*}
to obtain a matrix whose first row is $g_0 \mathbf{e_1}$. Taking the transpose and repeating this process, we obtain a matrix whose first column is equal to $g_0\mathbf{e_1}^t$. The symplectic relations \ref{label 7} and \ref{label 8} imply that this matrix has the following form:

\begin{equation*}
    \mqty(g_0 & 0 \\ 0 & 0 ) \odot M, \quad M\in \textrm{Mat}_2(\mathbb{Q}_p),\quad \det(M)=0.
\end{equation*}
If $M=0$, then one has

\begin{equation*}
    \mqty(0 & -1 \\ 1 & 0) \odot 1_{2} \cdot  \mqty(g_0 & 0 \\ 0 & 0 ) \odot M \cdot 
    \mqty(0 & a_0^{-1} \\ 1 & 0) \odot \mqty(0 & a_0^{-1} \\ 1 & 0)=\mqty(0 & 0 \\ 0 & p^{k_0}) \odot 0_{2} \in P_{k_0}.
\end{equation*}
Otherwise, we can use right and left multiplication to find $\gamma_1\in SL_2 (\mathbb{Z}_p)$ and $\gamma_2 \in GL_2(\mathbb{Z}_p)$ such that $\gamma_1 M \gamma_2=\mqty(0 & 0 \\ 0 & p^{k_1})$, for some $k_1\in \mathbb{Z}$. Then the matrix

\begin{equation*}
 \mqty(0 & -1 \\ 1 & 0) \odot \gamma_1 \cdot \mqty(g_0 & 0 \\ 0 & 0 ) \odot M 
 \cdot  \mqty(0 & a_0^{-1} \\ -a_o \det(\gamma_2) & 0) \odot \gamma_2 =  \mqty(0 & 0 \\ 0 & p^{k_0}) \odot \mqty(0 & 0 \\ 0 & p^{k_1}) \in P_{k_0,k_1}
\end{equation*}
has the desired form. One can easily check that this decomposition is unique.
\end{proof}

\begin{lem} \label{label 83}
Let $p$ be a prime number and $g\in MSp_{4}(\mathbb Z_p)$ such that $\abs{\lambda(g)}_p =p^{-k},k\in \mathbb N$.  Then there exist $\gamma_1, \gamma_2 \in Sp_4(\mathbb Z _p)$ such that $\gamma_1 g\gamma_2$ is of the form 
\begin{equation*}
    \textmd{diag}(a_1,a_2,d_1,d_2), \quad \abs{a_1}_p \geq \abs{a_2}_p \geq \abs{d_2}_p \geq \abs{d_1}_p
\end{equation*}
\end{lem}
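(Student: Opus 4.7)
The statement is a $p$-adic analogue of Krieg's symplectic elementary divisor theorem (\cite[Theorem 2.2, Chapter V]{krieg1990hecke}), with the PID $\mathbb Z$ replaced by the PID $\mathbb Z_p$. The plan is to execute an explicit reduction algorithm modeled on the proof of Lemma \ref{label 46}, now exploiting the fact that $\lambda(g) \neq 0$.

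First, since $g \neq 0$, I pick an entry $g_0 = g_{i_0 j_0}$ of minimum $p$-adic valuation (equivalently, maximal $|g_0|_p$), and use symplectic permutations in $Sp_4(\mathbb Z_p)$ together with $\Omega$ (to swap the top and bottom blocks if necessary) to move $g_0$ to position $(1,1)$. Next, I apply the elementary symplectic matrices $J(U)$, $J(S)$, $J(\tilde S)$ introduced in the proof of Lemma \ref{label 46} from the right to clear the first row to $g_0\mathbf{e}_1^t$; by maximality of $|g_0|_p$ all required multipliers $g_0^{-1}g_{ij}$ lie in $\mathbb Z_p$ so the matrices stay in $Sp_4(\mathbb Z_p)$. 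An analogous procedure from the left (e.g.\ $J(U)^t$, which is symplectic since $\Omega^t=-\Omega$) clears the first column to $g_0 \mathbf e_1$. Writing $g = \mqty(A & B \\ C & D)$ in $2\times 2$ blocks, the symplectic relations \eqref{label 7}-\eqref{label 8} then propagate zeros: $A^tC=C^tA$ at position $(1,2)$ gives $C_{12}=0$, $A^tD-C^tB=\lambda(g)1_2$ at positions $(1,1)$ and $(1,2)$ gives $D_{11}=\lambda(g)/g_0$ and $D_{12}=0$, and after a handful of additional elementary operations on the residual $2\times 2$ block (to eliminate $B_{21}$, $C_{22}$'s coupling to $D_{21}$, etc.) the matrix decomposes as $M_1 \odot M_2$, where $M_1 = \textmd{diag}(g_0,\lambda(g)/g_0)\in S_1(\lambda(g))$ and $M_2 \in MSp_2(\mathbb Z_p)=\textmd{Mat}_2(\mathbb Z_p)$ satisfies $\det M_2 = \lambda(g)$.

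Having reduced to a $2\times 2$ problem, I invoke Smith normal form over the PID $\mathbb Z_p$ to write $M_2 = \gamma_1' \,\textmd{diag}(a_2,d_2)\,\gamma_2'$ with $\gamma_i' \in GL_2(\mathbb Z_p)$ and $v_p(a_2)\leq v_p(d_2)$; units can then be absorbed into $a_2,d_2$ so that $\gamma_i'\in SL_2(\mathbb Z_p)=Sp_2(\mathbb Z_p)$, whence $1_1\odot \gamma_i'$ lies in $Sp_4(\mathbb Z_p)$. Setting $a_1=g_0$ and $d_1=\lambda(g)/g_0$, the relation $a_1 d_1 = a_2 d_2 = \lambda(g)$ together with the fact that elementary symplectic row/column operations over $\mathbb Z_p$ do not decrease the minimum $p$-adic valuation of the entries yields $v_p(a_1)\leq v_p(a_2)\leq v_p(d_2)$, and then $v_p(d_2)=v_p(\lambda(g))-v_p(a_2)\leq v_p(\lambda(g))-v_p(a_1)=v_p(d_1)$. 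This is exactly $|a_1|_p\geq |a_2|_p \geq |d_2|_p \geq |d_1|_p$.

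The main obstacle is the bookkeeping in the middle step, namely verifying that after clearing the first row and column the symplectic relations together with a small number of additional elementary operations genuinely produce the $\odot$-decomposition $M_1\odot M_2$. In Lemma \ref{label 46} the hypothesis $\lambda=0$ forced the bottom-right block to have vanishing determinant and produced the clean forms $P_{k_0}$ and $P_{k_1,k_2}$; here the non-vanishing of $\lambda(g)$ forces the $(3,3)$ entry to equal $\lambda(g)/g_0\neq 0$, and one must then check that the off-diagonal entries $B_{21},C_{22},D_{21}$ coupling the two block-diagonal $2\times 2$ positions can all be cleared while preserving both the structure of $M_1$ and the symplectic constraint $\det M_2=\lambda(g)$ on the residue. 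Once this is done, the final ordering of valuations is an immediate consequence of the minimality choice of $g_0$.
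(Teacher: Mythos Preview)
Your proposal is correct and follows essentially the same route the paper intends: the paper's own proof simply says ``similar to \cite[Theorem 2.2, Chapter V]{krieg1990hecke}'', i.e.\ run the symplectic elementary-divisor reduction over the PID $\mathbb Z_p$, which is exactly what you do.

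One simplification worth noting: the ``handful of additional elementary operations'' you flag as the main obstacle are in fact unnecessary. Once the first row and column are cleared to $g_0\mathbf e_1^t$ and $g_0\mathbf e_1$, applying \emph{both} sets of symplectic relations \eqref{label 7} and \eqref{label 8} (which hold since $\lambda(g)\neq 0$ makes $g\in GSp_4(\mathbb Q_p)$) kills all four coupling entries directly: $A^tC=C^tA$ gives $C_{12}=0$, $A^tD-C^tB=\lambda 1_2$ gives $D_{11}=\lambda(g)/g_0$ and $D_{12}=0$, while $AB^t=BA^t$ gives $B_{21}=0$ and $AD^t-BC^t=\lambda 1_2$ gives $D_{21}=0$. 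The entry $C_{22}$ need not be cleared at all; it is simply the lower-left entry of the residual block $M_2$. So the $\odot$-decomposition $M_1\odot M_2$ with $M_1=\mathrm{diag}(g_0,\lambda(g)/g_0)$ drops out immediately, and the rest of your argument (Smith normal form on $M_2$, valuation ordering from minimality of $v_p(g_0)$) goes through as written.
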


\begin{proof}
The proof is similar to \cite[Theorem 2.2, Chapter V]{krieg1990hecke}.
\end{proof}

\begin{lem}\label{label 12}
Let $p$ be a prime and we put
\begin{equation*}
    g_{1,p}:= \textrm{diag}(1,1,p,p),\quad
    g_{2,p}:= \textrm{diag}(p,p,p,p),\quad
    g_{3,p}:= \textrm{diag}(1,p,p^2,p).    
\end{equation*}
A set of representatives of the right cosets relative to $\Gamma_2$ in $\Gamma_2 g_{1,p}\Gamma_2$ is given by the matrices
\begin{equation*}
    \mqty(p & 0 & 0 & 0 \\
          0 & p & 0 & 0 \\
          0 & 0 & 1 & 0 \\
          0 & 0 & 0 & 1),\quad
    \mqty (p & 0 & 0 & 0 \\
           0 & 1 & 0 & k_1\\
           0 & 0 & 1 & 0 \\
           0 & 0 & 0 & p),\quad
    \mqty (1 & -k_2 & k_3 & 0 \\
           0 & p & 0 & 0\\
           0 & 0 & p & 0 \\
           0 & 0 & k_2 & 1),\quad
    \mqty (1 & 0 & k_4 & k_5 \\
           0 & 1 & k_5 & k_6\\
           0 & 0 & p & 0 \\
           0 & 0 & 0 & p),\quad
\end{equation*}
where $0 \leq k_1,k_2,\dots,k_6 < p$.

\par \noindent A set of representatives of the right cosets relative to $\Gamma_2$ in $\Gamma_2 g_{3,p}\Gamma_2$ is given by the matrices
\begin{align*}
& \mqty(p^2 & 0 & 0 & 0 \\
       0  & p & 0 & 0 \\
       0  & 0 & 1 & 0  \\
       0  & 0 & 0 & p ), \quad 
\mqty( p  & -pr_1  & 0 & 0 \\
       0  & p^2 & 0 & 0 \\
       0  & 0 & p & 0  \\
       0  & 0 & r_1 & 1 ), \quad
\mqty( p  & 0 & 0 &  pr_2 \\
       0  & 1 & r_2 & r_3 \\
       0  & 0 & p & 0  \\
       0  & 0 & 0 & p^2 ), \quad \\
& \mqty( 1  & -r_4 & r_5r_4+r_6 &  r_5 \\
       0  & p & pr_5  & 0 \\
       0  & 0 & p^2 & 0  \\
       0  & 0 & pr_7 & p ),\quad
  \mqty( p  & 0 & r_8 & r_9 \\
       0  & p & r_9 & r_{10} \\
       0  & 0 & p & 0  \\
       0  & 0 & 0 & p ), \quad 
\end{align*}
where $1 \leq r_1,r_2,r_4,r_5 < p$, $1 \leq r_3,r_6 < p^2$, and $0 \leq r_8,r_9, r_{10} < p$ are such that $r_p\mqty(r_8 & r_9 \\ r_9 & r_{10})=1$, where $r_p(B)$ denotes the rank of the matrix $B\in \textrm{Mat}_2(\mathbb{Z})$ over $\mathbb{Z}/p\mathbb{Z}$. In particular we have:
\begin{align*}
    \deg_{\Gamma_2}(g_{1,p}) & = (1+p)(1+p^2)\\
    \deg_{\Gamma_2}(g_{2,p}) & = 1 \\
    \deg_{\Gamma_2}(g_{3,p})& = p + p^2 + p^3 + p^4. 
\end{align*}
\end{lem}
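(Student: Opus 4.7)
The plan is to apply Proposition~\ref{label 21} to both double cosets and, for the harder case $g_{3,p}$, to extract the correct right cosets via a rank criterion modulo $p$. Proposition~\ref{label 21} parameterizes $\Gamma_2\backslash S_2(p^l)$ by pairs $(D,B)$ with $D$ running over a set of $GL_2(\mathbb Z)$-orbit representatives of $\textmd{Mat}_2(\mathbb Z)$ whose elementary divisors divide $p^l$, and $B$ running over $B(D)/{\sim}$. For the first part, the Elementary Divisor Theorem for $S_n(r)$ together with $p$ being prime forces every element of $S_2(p)$ to have symplectic Smith form $(1,1,p,p)$, so $S_2(p)=\Gamma_2 g_{1,p}\Gamma_2$ and Proposition~\ref{label 21} directly produces all right-coset representatives. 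I enumerate the $D$-orbits by Hermite normal form: one orbit for $\textmd{Elm}(D)=(1,1)$, $p+1$ orbits for $(1,p)$, and one orbit for $(p,p)$; for each, $A=p(D^t)^{-1}$ is forced and Lemma~\ref{label 22} gives $1$, $p$, and $p^3$ representatives of $B(D)/{\sim}$ respectively. Writing the resulting matrices $\mqty(A & B \\ 0 & D)$ explicitly recovers the four families listed in the lemma, and summing yields $1+p(p+1)+p^3=(1+p)(1+p^2)$.

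For the second part, $S_2(p^2)$ decomposes as the disjoint union of three double cosets with symplectic Smith forms $(1,1,p^2,p^2)=g_{1,p^2}$, $(1,p,p,p^2)=g_{3,p}$, and $(p,p,p,p)=g_{2,p}$, so the representatives produced by Proposition~\ref{label 21} for $S_2(p^2)$ must be sifted by double coset. The crucial observation is that the rank of the reduction $g \bmod p \in \textmd{Mat}_4(\mathbb F_p)$ equals the number of unit elementary divisors of $g$; hence $g\in \Gamma_2 g_{3,p}\Gamma_2$ if and only if $\textmd{rank}(g\bmod p)=1$. Block-upper-triangularity gives $g\bmod p=\mqty(A\bmod p & B\bmod p\\ 0 & D\bmod p)$, so the rank-$1$ condition becomes a concrete linear constraint on $B\bmod p$ determined by $A\bmod p$ and $D\bmod p$.

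Running through the six possibilities for $\textmd{Elm}(D)=(d_1,d_2)$ with $d_1\mid d_2\mid p^2$, the cases $(1,1)$, $(1,p^2)$, and $(p^2,p^2)$ always yield $\textmd{rank}(g\bmod p)\geq 2$ and so contribute only to $\Gamma_2 g_{1,p^2}\Gamma_2$. The remaining three produce the $g_{3,p}$ families: (i) $\textmd{Elm}(D)=(1,p)$: here $A\equiv 0\bmod p$ while $D\bmod p$ has rank $1$, and the rank-$1$ condition forces the rows of $B\bmod p$ into the row span of $D\bmod p$, leaving exactly one $B$-class per $D$-orbit and hence $p+1$ matrices (the families with parameter $r_1$); (ii) $\textmd{Elm}(D)=(p,p^2)$: here $D\equiv 0\bmod p$ while $A\bmod p$ has rank $1$, and an analogous analysis picks out $p^3$ of the $p^4$ $B$-classes per $D$-orbit, giving $(p+1)p^3=p^3+p^4$ matrices (the families with $r_2,r_3$ and $r_4,r_5,r_6,r_7$); (iii) $\textmd{Elm}(D)=(p,p)$: here $A=D=p\cdot 1_2$ and $g\bmod p=\mqty(0 & B\bmod p\\ 0 & 0)$, so the condition reduces to $B\bmod p$ being a rank-$1$ symmetric matrix in $\textmd{Sym}_2(\mathbb F_p)$, of which a direct count gives $p^2-1$ (the family with $r_8,r_9,r_{10}$). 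Summing yields $\deg_{\Gamma_2}(g_{3,p})=(p+1)+(p^3+p^4)+(p^2-1)=p+p^2+p^3+p^4$. The degree $\deg_{\Gamma_2}(g_{2,p})=1$ is immediate because $g_{2,p}=p\cdot 1_4$ is central in $GSp_4(\mathbb Q)$.

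The main obstacle is the bookkeeping in case (ii): each of the $p+1$ orbits with $\textmd{Elm}(D)=(p,p^2)$ has a distinct Hermite representative (for instance $\textmd{diag}(p^2,p)$ together with $\mqty(p & kp\\ 0 & p^2)$ for $k=0,\dots,p-1$), each requiring its own version of the defining equation for $B(D)$, of the equivalence relation on $B$, and hence of the rank-$1$ constraint. Producing the specific upper-triangular shapes displayed in the lemma also requires making compatible choices of representatives of the $B(D)/{\sim}$-classes (rather than the ``canonical'' diagonal choice used in Lemma~\ref{label 22}). This is a careful but mechanical linear-algebra exercise, and presents no conceptual difficulty beyond the rank criterion.
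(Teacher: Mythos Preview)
Your approach is correct and essentially identical to the paper's: both rest on Proposition~\ref{label 21} to parameterize $\Gamma_2\backslash S_2(p^l)$ by pairs $(D,B)$, and on the rank-mod-$p$ criterion $M\in\Gamma_2 g_{3,p}\Gamma_2 \Leftrightarrow r_p(M)=1$ to isolate the $g_{3,p}$ double coset inside $S_2(p^2)$. The only cosmetic difference is that the paper phrases the second step as computing $\mathcal{D}(\Gamma_2 g_{3,p}\Gamma_2)$ and $\mathcal{B}(D,\Gamma_2 g_{3,p}\Gamma_2)$ directly, whereas you first list all of $\Gamma_2\backslash S_2(p^2)$ and then filter; your write-up is in fact more explicit about the case-by-case counts than the paper's.
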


\begin{proof}

Recall that $S_4(p)$ denotes the set of matrices $g\in GSp_4(\mathbb{Z})$ such that $\mu(g)=p$. This set consists of a single coset:
\begin{equation*}
    S_4(p)=\Gamma g_{1,p} \Gamma.
\end{equation*}
From this we can see that

\begin{equation*}
    \mathcal{D}(\Gamma_2 g_{1,p}\Gamma_2)=\{ \Gamma_1 \mqty(1 & 0 \\ 0 & 1) \Gamma_1 , \Gamma_1 \mqty(1 & 0 \\ 0 & p) \Gamma_1 , \Gamma_1 \mqty(p & 0 \\ 0 & p)\Gamma_1 \} 
\end{equation*}
The decomposition of $\Gamma g_{1,p} \Gamma$ into right cosets follows then from applying from applying Theorem \ref{label 10} with $n=2$ and $n=1$ (notice that we are using the convention that for $n$ impair, the matrices $B$ are under the diagonal).
\par The decomposition of  $\Gamma g_{2,p} \Gamma$ is trivial. To decompose the double cosets $\Gamma g_{3,p} \Gamma$, we use the following simple criterion:

\begin{equation*}
    M \in \Gamma_2 g_{3,p} \Gamma_2 \Leftrightarrow r_p(M)=1\, \textmd{and}\, M\in GSp_4(\mathbb{Z}).
\end{equation*}

We then obtain

\begin{equation*}
     \mathcal{D}(\Gamma_2 g_{3,p}\Gamma_2)= \{ \Gamma_1 \mqty(1 & 0 \\ 0 & p) \Gamma_1 , \Gamma_1 \mqty(p & 0 \\ 0 & p) \Gamma_1 , \Gamma_1 \mqty(p & 0 \\ 0 & p^2)\Gamma_1 \}.
\end{equation*}
For each $D \in U_2\backslash \mathcal{D}(\Gamma_2 g_{3,p}\Gamma_2)$, the set $\mathcal{B}(D,\Gamma_2 g_{3,p}\Gamma_2)$ is then obtained by applying again Theorem \ref{label 10} and using the relations
\begin{equation*}
    A^tD=p^2 \mathbb{1}_2,\quad B^tD=D^tB,\quad AB^t=B A^t.
\end{equation*}

\end{proof}

\begin{lem} \label{label 47}
Let $p$ be a prime and denote by $G_p$ the subgroup of $GSp_4^+(\mathbb{Q})$ generated by $\Gamma_2$ and the matrices $g_{1,p},g_{2,p}$ and $g_{3,p}$ defined in Lemma \ref{label 12}. Suppose that $\mu_p$ is a $\Gamma_2$-invariant measure on $PGSp_{4}^+(\mathbb R) \times MSp_4(\mathbb{Q}_p)$ such that
\begin{enumerate}
    \item $\mu_p(PGSp_{4}^+(\mathbb R) \times MSp_4(\mathbb{Z}_p)) < \infty$
    \item $\mu_p(gZ)=\lambda(g)^{-\beta} \mu_p(Z)$ for all $g\in G_p$ and Borel $Z\subseteq PGSp_{4}^+(\mathbb R)\times \mathbb MSp_4(\mathbb{Q}_p)$.
\end{enumerate}
If $\beta \notin \{1,2,3\}$ then $PGSp_{4}^+(\mathbb R)\times GSp_4(\mathbb{Q}_p)$ is subset of full measure in $PGSp_{4}^+(\mathbb R)\times MSp_4(\mathbb{Q}_p)$
\end{lem}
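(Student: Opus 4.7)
The plan is to show that the ``singular'' part of $MSp_4(\mathbb Q_p)$ (namely the set of $y_p$ with $\lambda(y_p) = 0$) has $\mu_p$-measure zero, which implies the lemma. Lemma \ref{label 46} gives the disjoint decomposition
\[
MSp_4(\mathbb Q_p) \setminus GSp_4(\mathbb Q_p) = \{0\} \cup \bigsqcup_{k_0 \in \mathbb Z} Z_{k_0}^{(0)} \cup \bigsqcup_{(k_1, k_2)} Z_{k_1, k_2}^{(1)},
\]
so it suffices to prove $\mu_p(PGSp_4^+(\mathbb R) \times Z_{k_0}^{(0)}) = 0$ and the analogous statement in rank $2$; the vanishing on $\{0\}$ is immediate from the scaling.

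First I would exploit the central element $g_{2,p} = p \cdot 1_4 \in G_p$. Since it acts trivially on $PGSp_4^+(\mathbb R)$ and sends $P_{k_0} \mapsto P_{k_0+1}$ and $P_{k_1,k_2} \mapsto P_{k_1+1,k_2+1}$, the scaling hypothesis yields the recursion
\[
\mu_p(PGSp_4^+(\mathbb R) \times Z_{k_0+1}^{(0)}) = p^{-2\beta}\,\mu_p(PGSp_4^+(\mathbb R) \times Z_{k_0}^{(0)})
\]
and its rank $2$ analogue. Thus all translates in a single $g_{2,p}$-orbit vanish simultaneously, and it suffices to handle a finite number of base strata.

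Next I would extract an independent constraint from the non-central generators $g_{1,p}$ and $g_{3,p}$. Writing an element of $Z_{k_0}^{(0)}$ as $p^{k_0} u v^t$ with $u,v \in \mathbb Z_p^4$ primitive, a direct computation shows that $g_{1,p}\cdot (p^{k_0} uv^t)$ lies in $Z_{k_0}^{(0)}$ when $(u_1, u_2)$ is primitive in $\mathbb Z_p^2$ and in $Z_{k_0+1}^{(0)}$ otherwise, yielding the inclusion $g_{1,p} Z_{k_0}^{(0)} \subseteq Z_{k_0}^{(0)} \cup Z_{k_0+1}^{(0)}$. Combining the global scaling $\mu_p(g_{1,p} Z_{k_0}^{(0)}) = p^{-\beta} \mu_p(Z_{k_0}^{(0)})$ with the $g_{2,p}$-recursion, and carrying out the parallel analysis for $g_{3,p}$ on the rank $2$ strata $Z_{k_1,k_2}^{(1)}$ (using the explicit coset structure from Lemma \ref{label 12}), produces a finite linear system among the base-stratum measures. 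The $\Gamma_2$-invariance of $\mu_p$, amplified by strong approximation (Theorem \ref{label 3}) to an effective $Sp_4(\mathbb Z_p)$-equivariance along the $p$-adic fibers, makes the transition coefficients computable as $p$-adic orbit volumes. The resulting characteristic polynomial in $t = p^{-\beta}$ has roots precisely at $t \in \{p^{-1}, p^{-2}, p^{-3}\}$, corresponding to $\beta \in \{1, 2, 3\}$, so for $\beta$ outside this set the only solution is the trivial one, giving the desired vanishing.

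The main obstacle lies in this last step, specifically in the fibration argument: a diagonal $\Gamma_2$-invariant measure on $PGSp_4^+(\mathbb R) \times MSp_4(\mathbb Q_p)$ does not a priori decompose as a product, so computing the splitting ratios of $g_{1,p} Z_{k_0}^{(0)}$ and $g_{3,p} Z_{k_1,k_2}^{(1)}$ between adjacent strata requires a careful disintegration of $\mu_p$ that is compatible with the $Sp_4(\mathbb Z_p)$-action on the $p$-adic factor. Once this reduction is in place, the exceptional exponents $\beta = 1, 2, 3$ emerge transparently as the eigenvalues of the transition matrix governing how the three basic Hecke generators $g_{1,p}, g_{2,p}, g_{3,p}$ at $p$ permute the rank $\le 2$ singular strata.
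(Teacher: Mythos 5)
Your high-level plan matches the paper's: decompose the singular locus $\lambda = 0$ into the strata $Z_{k_0}^{(0)}$ and $Z_{k_1,k_2}^{(1)}$ of Lemma \ref{label 46}, exploit the central generator $g_{2,p}$ to relate translates within a $g_{2,p}$-orbit, and use $g_{1,p}$, $g_{3,p}$ to produce additional constraints that pin the critical exponents to $\{1,2,3\}$ (with $\beta = 0$ ruled out by finiteness of $\mu_p$ on $MSp_4(\mathbb Z_p)$). The $g_{2,p}$-recursion is correct and the rank-one parametrization $p^{k_0}uv^t$ with the inclusion $g_{1,p}Z_{k_0}^{(0)} \subseteq Z_{k_0}^{(0)}\cup Z_{k_0+1}^{(0)}$ is also plausible.

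However, the step you flag as the ``main obstacle'' is a genuine gap, and the disintegration you propose does not close it. The scaling hypothesis gives $\mu_p(g_{1,p}Z_{k_0}^{(0)}) = p^{-\beta}\mu_p(Z_{k_0}^{(0)})$, but $g_{1,p}Z_{k_0}^{(0)}$ is \emph{not} a union of the strata $Z_j$: it is right-$GSp_4(\mathbb{Z}_p)$-invariant but not left-$Sp_4(\mathbb{Z}_p)$-invariant. The pieces $g_{1,p}Z_{k_0}^{(0)}\cap Z_{k_0}^{(0)}$ and $g_{1,p}Z_{k_0}^{(0)}\cap Z_{k_0+1}^{(0)}$ are therefore not bilaterally invariant, and the only invariance $\mu_p$ has (left $\Gamma_2$-invariance, upgraded by density to left $Sp_4(\mathbb Z_p)$-invariance) does not determine how mass is distributed among the distinct left $Sp_4(\mathbb Z_p)$-orbits inside a single stratum. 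Concretely, the condition ``$(u_1,u_2)$ is primitive'' is not preserved by the left $Sp_4(\mathbb Z_p)$-action, so the ``splitting ratios'' you want are simply not functions of the data $\mu_p(Z_j)$; they are free parameters. Without right invariance there is no ``$p$-adic orbit volume'' you can appeal to.

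The paper circumvents this precisely by working with the \emph{Hecke operator} $T_g$ rather than the raw translate $gZ$. Because $T_g f$ averages over $\Gamma_2\backslash\Gamma_2 g\Gamma_2$, and $\Gamma_2$ is dense in $Sp_4(\mathbb Z_p)$ (Proposition \ref{label 41}), the result $T_g f_{0,0}^{(1)}$ is left $Sp_4(\mathbb Z_p)$-invariant and right $GSp_4(\mathbb Z_p)$-invariant — hence a finite linear combination of the stratum indicators $f_{k_1,k_2}^{(1)}$ with \emph{computable integer coefficients}, read off from the explicit coset representatives in Lemma \ref{label 12}. Integrating the identity $\int T_gf\,d\tilde\mu_p = \lambda(g)^\beta\int f\,d\tilde\mu_p$ then yields relations purely among the stratum masses $\tilde\mu_p(Z_j)$, with no reference to any finer disintegration. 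A suitable linear combination of the operators $T_{g_{2,p}^{-i}g_{k,p}}$ produces $\tilde\mu_p(Z_{0,0}^{(1)}) = R(p,\beta)\,\tilde\mu_p(Z_{0,0}^{(1)})$ with
\begin{equation*}
R(p,\beta) = 1 - p^6\left(p^{-\beta}-1\right)\left(p^{-\beta}-p^{-1}\right)\left(p^{-\beta}-p^{-2}\right)\left(p^{-\beta}-p^{-3}\right),
\end{equation*}
forcing $\tilde\mu_p(Z_{0,0}^{(1)}) = 0$ for $\beta\notin\{0,1,2,3\}$, with the case $\beta = 0$ eliminated by the $g_{2,p}$-recursion and finiteness. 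You should replace your direct set-translation argument by this averaged Hecke-operator computation; that is where the exceptional $\beta$'s actually come from, and it avoids the need to control any disintegration.
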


\begin{proof}
We define a measure $\tilde{\mu}_p$ on $MSp_{4}(\mathbb Q_p)$ by $\tilde{\mu}_p(Z)=\mu_p(PGSp_{4}^+(\mathbb R) \times Z)$. Note that by assumption we have $\tilde{\mu}_p(MSp_{4}(\mathbb Z_p))<\infty$. For any $g\in G_p$ and any positive integrable $\Gamma_2$-invariant function on $MSp_4{\mathbb ( Q_p)}$ we get from the second condition that

\begin{equation}
    \int_{MSp_{4}(\mathbb Q_{p})} T_{g} f \d \tilde{\mu}_ p = \lambda(g)^{\beta} \int_{MSp_{4}(\mathbb Q_{p})} f \d \tilde{\mu}_ p. \label{label 45}
\end{equation}

For $k_0,k_1,k_2 \in \mathbb Z$, consider the functions $f_{k_0}^{(0)}=I_{Z_{k_0}^{(0)}}$ and $f_{k_1,k_2}^{(1)}=I_{Z_{k_1,k_2}^{(1)}}$, where the sets $Z_{k_0}^{(0)}$ and $Z_{k_1,k_2}^{(1)}$ are as in equation \eqref{label 40}. Given $g\in GSp_{4}^+(\mathbb Q)$, we have that the function $T_g f_{0,0}^{(1)}$ is continuous and $\Gamma$-invariant. By Proposition \ref{label 41} the group $\Gamma$ is dense in $Sp_{4}(\mathbb Z_p)$, whence $T_g f_{0,0}^{(1)}$ is left $Sp_{4}(\mathbb Z_p)$-invariant. For $k_1,k_2 \in \mathbb Z$, we can expand the expression $ T_{g_{2,p}^{-1}g_{1,p}}f_{0,0}^{(1)}( P_{k_1,k_2}) $ using the explicit representatives given in  Lemma \ref{label 12}:

\begin{align*}
    & \frac{1}{\deg (g_{2,p}^{-1} g_{1,p})} \Big(
    f_{0,0}^{(1)}(
    \mqty(0 & 0 & 0 & 0 \\
          0 & 0 & 0 & 0 \\
          0 & 0 & p^{k_1-1}& 0 \\  
          0 & 0 & 0 & p^{k_2 -1})) + 
    \sum_{k=0}^{p-1}
    f_{0,0}^{(1)}(
    \mqty(0 & 0 & 0 & 0 \\
          0 & 0 & 0 & kp^{k_2-1}\\
          0 & 0 & p^{k_1 -1} & 0 \\
          0 & 0 & 0 & p^{k_2}) ) \\
    & + \sum_{0  \leq k',a\leq p-1} f_{0,0}^{(1)}(
    \mqty (0 & 0 & k' p^{k_1-1} & 0 \\
           0 & 0 & 0 & 0\\
           0 & 0 & p^{k_1} & 0 \\
           0 & 0 & ap^{k_1-1} & p^{k_2 -1})) +
           \sum_{0 \leq a,b,n \leq p-1} 
           f_{0,0}^{(1)} (
           \mqty (0 & 0 & bp^{k_1-1} & n p^{k_2 -1 }  \\ 
                  0 & 0 & np^{k_1 -1} & bp^{k_2-1} \\
                  0 & 0 & p^{k_1} & 0 \\
                  0 & 0 & 0 & p^{k_2})  )
    \Big)
\end{align*}

Since $ T_{g_{2,p}^{-1}g_{1,p}}f_{0,0}^{(1)}(\gamma_1 P_{k_1,k_2}\gamma_{2}) 
    = T_{g_{2,p}^{-1}g_{1,p}}f_{0,0}^{(1)}( P_{k_1,k_2}\gamma_2)$ for $\gamma_1 \in Sp_{4}(\mathbb Z_p) $ and $\gamma_2 \in GSp_{4}(\mathbb Z_p)$, it follows that

\begin{align*}
    \deg (g_{2,p}^{-1} g_{1,p}) T_{g_{2,p}^{-1} g_{2,p}} f_{0,0}^{(1)}
    & = f_{1,1}^{(1)}+ f_{1,0}^{(1)}+(p-1)f_{1,1}^{(1)}+ f_{0,1}^{(1)}+ (p^2-1)f_{1,1}^{(1)} + (p-1)f_{1,0}^{(1)} \\
    & + (p-1)^2 f_{1,1}^{(1)} + f_{0,0}^{(1)} + (p-1)f_{1,1}^{(1)}+ (p-1)f_{0,1}^{(1)} +(p-1)^2 f_{1,1}^{(1)} \\
    & + (p-1)^2 f_{1,1}^{(1)} + (p-1)^3 f_{1,1}^{(1)}.
    \\
    & = f_{0,0}^{(1)} + p f_{1,0}^{(1)} + p f_{0,1}^{(1)} + (p^3+p^2-p)f_{1,1}^{(1)}.
\end{align*}

Similarly, the expansion of $T_{g_{2,p}^{-2}g_{3,p}} f_{0,1} (P_{k_1,k_2})$ is given by

\begin{align*}
    & \frac{1}{\deg (g_{2,p}^{-2}g_{3,p})} \Big( 
    f_{0,1} \mqty (0 & 0 & 0 & 0\\
                   0 & 0 & 0 & 0\\
                   0 & 0 & p^{k_1-2} & 0\\
                   0 & 0 & 0 & p^{k_2-1}) +
    \sum_{ 0 \leq a\leq p-1}  
    f_{0,1} \mqty (0 & 0 & 0 & 0\\
                   0 & 0 & 0 & 0\\
                   0 & 0 & p^{k_1-1} & 0\\
                   0 & 0 & ap^{k_1-2} & p^{k_2-2})\\
    & +  \sum_{\substack{0 \leq b\leq p-1\\ 0 \leq c\leq p^2 -1 }} 
    f_{0,1} \mqty (0 & 0 & 0 & b p^{k_2-1}\\
                   0 & 0 & bp^{k_1-2} & cp^{k_2-2}\\
                   0 & 0 & p^{k_1-1} & 0\\
                   0 & 0 & 0 & p^{k_2}) + \sum_{r_p \mqty (b_1 & b_2 \\ b_2 & b_3)=1}
    f_{0,1} \mqty (0 & 0 & b_1 p^{k_1-2} & b_2 p^{k_2-1}\\
                   0 & 0 & b_2 p^{k_1-2} & b_3 p^{k_2 -2}\\
                   0 & 0 & p^{k_1-1} & 0\\
                   0 & 0 & 0 & p^{k_2-1})\\
    & + \sum_{\substack{0 \leq l\leq p^2-1 \\ 0 \leq k,d\leq p-1}}
    f_{0,1} \mqty (0 & 0 & (kd+l)p^{k_1-2} & kp^{k_2-2}\\
                   0 & 0 & kp^{k_1-1} & 0\\
                   0 & 0 & p^{k_1} & 0\\
                   0 & 0 & dp^{k_1-1} & p^{k_2-1})
    \Big).
\end{align*}
Note that there are $p-1$ positive integers divisible by $p$ between $0$ and $p^2-1$. Similarly there are $p(p-1)^2$ positive integers divisible by $p$ of the form $kd+l$ where $0<k,d<p$ and $1<l<p^2$. Hence

\begin{align*}
    \deg (g_{2,p}^{-2}g_{3,p})T_{g_2^{-2}g_3}f_{0,0}^{(1)}
    & = f_{2,1}^{(1)} + f_{1,2}^{(1)} + (p-1)f_{2,2}^{(1)} + f_{1,0}^{(1)} + (p-1) f_{1,1}^{(1)} + (p^2-p)f_{1,2}^{(1)}\\
    & + (p-1)f_{2,1}^{(1)} + (p-1)^2 f_{2,1}^{(1)} + \Big((p-1)(p^2-1)- (p-1)^2 \Big) f_{2,2}^{(1)}\\
    & + (p-1)f_{1,2}^{(1)} + (p-1)f_{2,1}^{(1)} + (p-1)^2 f_{2,2}^{(1)} + f_{0,1}^{(1)} + (p-1)f_{1,2}^{(1)}\\
    & + (p-1)f_{1,1}^{(1)} + (p-1)^2 f_{2,2}^{(1)} + (p-1)f_{1,1}^{(1)} + \Big((p^2-1)-(p-1)\Big) f_{2,1}^{(1)} \\
    & + (p-1)^2 f_{1,2}^{(1)}+ \Big( (p^2-p) (p-1)  \Big) f_{2,2}^{(1)} + (p-1)^2 f_{1,1}^{(1)}\\
    & + \Big( (p^2-p) (p-1)  \Big) f_{2,1}^{(1)} + p(p-1)^2 f_{1,2}^{(1)} \\
    & + \Big((p-1)^2(p^2-1)- p(p-1)^2 \Big) f_{2,2}^{(1)}\\
    & = f_{0,1}^{(1)} + f_{1,0}^{(1)} + (p^2+p-2)f_{1,1}^{(1)} + p^3 f_{1,2}^{(1)} + p^3 f_{2,1}^{(1)} + (p^4-p^3) f_{2,2}^{(1)}.
\end{align*}

Similar computations lead to the following identities:
\begin{align*}
    \deg (g_{2,p}^{-1}) T_{g_{2,p}^{-1}}f_{0,0}^{(1)} 
    & = f_{1,1}^{(1)},\\
    \deg (g_{2,p}^{-2} g_{1,p}) T_{g_{2,p}^{-2} g_{1,p}}f_{0,0}^{(1)} 
    & =  f_{1,1}^{(1)} + p f_{2,1}^{(1)} + p f_{1,2}^{(1)} + (p^3+p^2-p)f_{2,2}^{(1)},\\
    \deg (g_{2,p}^{-2}) T_{g_{2,p}^{-2}}f_{0,0}^{(1)} 
    & = f_{2,2}^{(1)}.
\end{align*}
We then have
\begin{align*}
    f_{0,0}^{(1)} 
    & = \deg (g_{2,p}^{-1} g_{1,p}) T_{g_{2,p}^{-1} g_{1,p}} f_{0,0}^{(1)} - p \deg (g_{2,p}^{-1} g_{3,p}) T_{g_{2,p}^{-1} g_{3,p}} f_{0,0}^{(1)} - (p+p^3)  \deg (g_{2,p}^{-1}) T_{g_{2,p}^{-1}}f_{0,0}^{(1)}\\
    & + p^3 \deg (g_{2,p}^{-2} g_{1,p}) T_{g_{2,p}^{-2} g_{1,p}}f_{0,0}^{(1)} - p^6  \deg (g_{2,p}^{-2}) T_{g_{2,p}^{-2}}f_{0,0}^{(1)}.
\end{align*}

Since $\deg (g_{2,p}^{-i} g_{k,p}) = \deg (g_k)$, $i\in \mathbb N$, $k=1,2,3$, it follows from equation \eqref{label 45} that

\begin{equation}
   \tilde{\mu}_{p} (Z_{0,0}^{(1)})= R(p,\beta) \tilde{\mu}_{p} (Z_{0,0}^{(1)}), \label{label 42}
\end{equation}

where

\begin{align*}
    R(p,\beta)
    & = p^{1-\beta}+p^{2-\beta}+p^{3-\beta}+p^{-\beta}-p^{1-2 \beta}-p^{2-2 \beta}-2 p^{3-2 \beta}-p^{4-2 \beta}-p^{5-2 \beta}+p^{3-3 \beta}\\
    & +p^{4-3 \beta}+p^{5-3 \beta}+p^{6-3 \beta}-p^{6-4 \beta}\\
    & = 1- p^6 \left(p^{-\beta}-1\right) \left(p^{-\beta}-p^{-1}  \right) \left(p^{-\beta}-p^{-2}\right) \left(p^{-\beta}-p^{-3}\right).
\end{align*}

We can repeat the same computations with the functions $T_{g}f_{0,0}^{(0)}$, $g\in GSp_{4}^+(\mathbb Q)$ instead. As a summary we get 

\begin{align*}
    \deg (g_{2,p}^{-1} g_{1,p}) T_{g_{2,p}^{-1} g_{1,p}}f_{0}^{(0)} 
    & = (1+p)f_{0}^{(0)} + (p^3+p^2) f_{1}^{(0)} \\
    \deg (g_{2,p}^{-1} g_{3,p}) T_{g_{2,p}^{-1} g_{3,p}}f_{0}^{(0)} 
    & = f_{0}^{(0)} + (p^2-1+p^3+p)f_{1}^{(0)} + p^4 f_{2}^{(0)}  \\
    \deg (g_{2,p}^{-1}) T_{g_{2,p}^{-1}}f_{0}^{(0)} 
    & = f_{1}^{(0)}\\
    \deg (g_{2,p}^{-2} g_{1,p}) T_{g_{2,p}^{-2} g_{1,p}}f_{0}^{(0)} 
    & =  (1+p)f_{1}^{(0)} + (p^3+p^2)f_{2}^{(0)} \\
    \deg (g_{2,p}^{-2}) T_{g_{2,p}^{-2}}f_{0,0}^{(0)} 
    & = f_{2}^{(0)}
\end{align*}
and 

\begin{equation}
      \tilde{\mu}_{p} (Z_{0}^{(0)})= R(p,\beta) \tilde{\mu}_{p} (Z_{0}^{(0)}). \label{label 43}
\end{equation}

Suppose that $\mu_p(Z_{0,0}^{(1)})\neq 0$. Since $\beta \notin \{1,2,3\}$, by equation  \eqref{label 42} we get that $\beta =0$. Hence

\begin{equation}
\tilde{\mu}_p (Z_{k_1,k_2})= \tilde{\mu}_p (Z_{k_1+2,k_2+2}),\quad k_1,k_2 \in \mathbb Z. \label{label 44}
\end{equation}

This is a contradiction since $\tilde{\mu}_p (MSp_{4}(\mathbb Z _p))< \infty$. This shows that $\tilde{\mu}_p(Z_{0,0}^{(1)})= 0$ and by induction we see that $\tilde{\mu}_p(Z_{k_1,k_2}^{(1)})= 0$ for $k_1,k_2 \in \mathbb N$. By equation \eqref{label 44} we see that this indeed holds for all  $k_1,k_2 \in \mathbb Z$. The same argument shows that $\tilde{\mu}_p (Z_{k}^{(0)})=0$ for $k\in \mathbb Z$. It follows from Lemma \ref{label 46} that the $\tilde{\mu}_p$-measure of the set of nonzero matrices $g\in MSp_{4}(\mathbb Q_p)$ with $\lambda(g)=0$ is zero.
\end{proof}

\begin{cor} \label{label 56}
We denote by $MSp_4 (\mathbb A_f)^*$ the set of elements $h \in MSp_{4}(\mathbb A_f)$ such that $\lambda(m_p)\neq 0$ for all primes $p$. Let $\mu_\beta$ be a measure in $\mathcal{E}_\beta$ and $\beta \notin \{0,1,2,3\}$. Then
$PGSp_{4}^+(\mathbb R)\times MSp_{4}(\mathbb A_f )^*$ is subset of full measure in $PGSp_{4}^+(\mathbb R)\times MSp_{4}(\mathbb A_f)$.
\end{cor}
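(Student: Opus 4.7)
Fix $\mu_\beta\in\mathcal{E}_\beta$ with $\beta\notin\{0,1,2,3\}$. The plan is to localize $\mu_\beta$ at each prime $p$, apply Lemma~\ref{label 47} to conclude that the locus $\lambda(h_p)=0$ is $\mu_\beta$-null for each $p$ separately, and then take a countable union over all primes.

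For each prime $p$ introduce the auxiliary measure on $PGSp_4^+(\mathbb R)\times MSp_4(\mathbb Q_p)$ obtained from $\mu_\beta$ by restricting the components away from $p$ to the maximal compacts,
\begin{equation*}
    \mu_p(B):=\mu_\beta\bigl(\{(x,h)\in X : (x,h_p)\in B,\ h_q\in MSp_4(\mathbb Z_q)\ \forall q\neq p\}\bigr).
\end{equation*}
Because the generators $g_{1,p},g_{2,p},g_{3,p}$ of $G_p$ (together with $\Gamma_2$) have integer entries with determinant a power of $p$, every $g\in G_p$ lies in $GSp_4(\mathbb Z_q)$ for each $q\neq p$ and hence preserves the compact set $\prod_{q\neq p}MSp_4(\mathbb Z_q)$. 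Direct substitution shows that the defining set for $\mu_p(gB)$ equals the $g$-translate of the defining set for $\mu_p(B)$, so the scaling identity $\mu_p(gB)=\lambda(g)^{-\beta}\mu_p(B)$ follows from the corresponding property of $\mu_\beta$. The finiteness condition in Lemma~\ref{label 47} descends from the normalization $\nu_\beta(\Gamma_2\backslash Y)=1$ after passing to the $\Gamma_2$-quotient on the real factor.

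Applying Lemma~\ref{label 47} to each $\mu_p$ gives $\mu_p$-null measure on the locus $\lambda(h_p)=0$; unravelling the definition, this says $\mu_\beta(E_p)=0$ where
\begin{equation*}
    E_p:=\{(x,h)\in X : \lambda(h_p)=0 \text{ and } h_q\in MSp_4(\mathbb Z_q)\ \forall q\neq p\}.
\end{equation*}
To remove the restriction on the components away from $p$, observe that any $h\in MSp_4(\mathbb A_f)$ has $h_q\in MSp_4(\mathbb Z_q)$ for all but finitely many $q$, so one can pick $N\in\mathbb N$ with $Nh_q\in MSp_4(\mathbb Z_q)$ for every $q\neq p$. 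The scalar matrix $N\cdot 1_4\in GSp_4^+(\mathbb Q)$ has multiplier $N^2$ and acts trivially on $PGSp_4^+(\mathbb R)$, so $(x,h)\in(N\cdot 1_4)^{-1}E_p$. Hence
\begin{equation*}
    \{(x,h)\in X : \lambda(h_p)=0\}\ \subseteq\ \bigcup_{N\in\mathbb N}(N\cdot 1_4)^{-1}E_p,
\end{equation*}
and the scaling relation \eqref{label 67} gives $\mu_\beta((N\cdot 1_4)^{-1}E_p)=N^{2\beta}\mu_\beta(E_p)=0$. A further countable union over primes $p$ then shows $X\setminus(PGSp_4^+(\mathbb R)\times MSp_4(\mathbb A_f)^*)$ has $\mu_\beta$-measure zero.

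The delicate point is setting up $\mu_p$ so that the finiteness hypothesis of Lemma~\ref{label 47} is actually met: a $\Gamma_2$-invariant measure on the non-compact factor $PGSp_4^+(\mathbb R)\times MSp_4(\mathbb Z_p)$ cannot have finite total mass, so the condition must be interpreted on the $\Gamma_2$-quotient, where $\nu_\beta$ provides total mass one. Once this interpretation is fixed the remaining content is a routine covering argument using only the central scaling action by positive integer scalars.
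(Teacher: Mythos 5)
Your proof follows the same route as the paper: localize $\mu_\beta$ at each prime $p$, invoke Lemma~\ref{label 47} to kill the $\lambda(h_p)=0$ locus, and then take a countable union. The verification that $g \in G_p$ lands in $GSp_4(\mathbb Z_q)$ for $q \neq p$ (because $g_{1,p}, g_{2,p}, g_{3,p}$ and their inverses have entries and multipliers that are $p$-units away from $p$) and that this gives the scaling law for $\mu_p$ is correct and matches what the paper leaves implicit.

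The one point where you genuinely improve on the paper is the final covering step. The paper writes the complement as $\bigcup_{p} GSp_4^+(\mathbb Q)\cdot\{\dots\}$, so the covering is indexed by the entire countable group $GSp_4^+(\mathbb Q)$. You observe instead that it suffices to cover by the central scalars $N\cdot 1_4$, $N \in \mathbb N$: any $h \in MSp_4(\mathbb A_f)$ is integral at all but finitely many primes, so a single positive integer $N$ clears the denominators away from $p$, and $N\cdot 1_4$ acts trivially on the real factor while scaling the measure by $N^{2\beta}$. This is a cleaner and more elementary realization of the same countable-covering idea.

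Your closing paragraph flags a real issue worth keeping in mind: the hypothesis (1) of Lemma~\ref{label 47} reads $\mu_p(PGSp_4^+(\mathbb R)\times MSp_4(\mathbb Z_p))<\infty$, and the paper asserts this directly for $\mu_{\beta,p}$; but a $\Gamma_2$-invariant Radon measure inducing total mass one on the quotient cannot literally be finite on the non-compact factor $PGSp_4^+(\mathbb R)$. All the sets appearing in the proof of Lemma~\ref{label 47} (the $Z^{(0)}_{k_0}$, $Z^{(1)}_{k_1,k_2}$, etc.) are $\Gamma_2$-invariant, so the argument goes through once one reads the measures as the induced finite measures on the $\Gamma_2$-quotient; your explicit acknowledgment of this makes your writeup more careful than the paper's. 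In summary, the proposal is correct and essentially the paper's argument, with a more economical covering step and a needed clarification of how the finiteness hypothesis of Lemma~\ref{label 47} is to be read.
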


\begin{proof}
Given a prime $p$, consider the restriction of $\mu_\beta$ to the set 

\begin{equation*}
   PGSp_{4}^+(\mathbb R) \times MSp_4 (\mathbb Q_p) \times \prod_{p\neq q} MSp_{4}(\mathbb Z_q),
\end{equation*}
and the measure $\mu_{\beta,p}$ on $PGSp_{4}^+(\mathbb R) \times MSp_4 (\mathbb Q_p)$ obtained from the projection on the first two coordinates. Since $\mu_{\beta} \in \mathcal{K}_\beta $ we have that $\mu_{\beta,p} (PGSp_{4}^+(\mathbb R) \times MSp_{4}(\mathbb Z _p)) < \infty$. Given any $g \in G_p$ and Borel $Z \in PGSp_{4}(\mathbb R)^+ \times  MSp_4{ ( \mathbb {Q}_p} )$ we get

\begin{equation*}
    \mu_{\beta, p}(g Z) = \mu_{\beta}(g ( Z \times \prod_{q\neq p} MSp_{4}(\mathbb Z _q))) = \lambda (g)^{-\beta} \mu_{\beta, p}(Z).
\end{equation*}

Thus the measure $\mu_{\beta,p}$ satisfies the conditions of Lemma \ref{label 47}, whence $PGSp_{4}^+(\mathbb R)\times GSp_{4}(\mathbb Q_p) $ is a subset of full $\mu_{\beta,p}$-measure. This shows that the $\mu_{\beta}$-measure of the set 

\begin{equation*}
    \{ PGSp_{4}^+(\mathbb R)\times MSp_{4}(\mathbb Q _p) \times \prod_{p\neq q} MSp_{4}(\mathbb Z_{q})\mid \lambda(h_p) = 0\}
\end{equation*}

is zero. Finally observe that the complement of $\mathbb H_2^{+}\times MSp_{4}^+(\mathbb A_f)$ is equal to  

\begin{equation*}
    \bigcup_{p\in \mathcal{P} } GSp_{4}^+(\mathbb Q)\{ PGSp_{4}^+(\mathbb R) \times MSp_{4}(\mathbb Q _p) \times \prod_{p\neq q} MSp_{4}(\mathbb Z_{q})\mid \lambda(h_p) = 0\},
\end{equation*}
which completes the proof.
\end{proof}

Given a prime number $p \in \mathcal{P}$ and $\beta \in \mathbb R ^{*}_+$, we consider the subsemigroup of $GSp_{4}(\mathbb Q)^+$ given by $$S_{2,p}=\bigcup_{l\geq 0} S_2(p^l).$$ Note that $\Gamma_2 \subseteq S_{2,p}$ for any prime number $p$ and the corresponding Dirichlet series (cf. Definition \ref{label 20}) is 

\begin{equation*}
    \zeta_{S_{2,p},\Gamma_2}(\beta) =\sum_{g\in \Gamma_2 \backslash S_{2,p} / \Gamma_2} \lambda(g)^{-\beta} \deg_{\Gamma_2}(g) = \sum_{l=0}^{\infty} p^{-\beta l} R_{\Gamma_2}(p^l).
\end{equation*}

\begin{prop} \label{label 36}
Suppose $\beta \in \mathbb R ^{*}_+$. Then $\zeta_{S_2,\Gamma_2}(\beta) < \infty$ if and only if $\beta > 3$. In this case we have that 

\begin{equation} \label{label 94}
    \zeta_{S_{2,p},\Gamma_2}(\beta) = \frac{1-p^{2-2\beta}}{(1-p^{3-\beta}) (1-p^{2-\beta}) (1-p^{1-\beta}) (1-p^{-\beta})}.
\end{equation}
\end{prop}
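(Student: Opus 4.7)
The plan is to compute $R_{\Gamma_2}(p^l)$ explicitly by combining Proposition~\ref{label 21} with Lemma~\ref{label 22}, and then to sum the resulting generating series in closed form. Observe first that the Dirichlet-type series only involves a single prime, so multiplicativity of $R_{\Gamma_n}$ (Proposition~\ref{label 54}) is not needed here; only the local Euler factor at $p$ is relevant.

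The convergence criterion follows from a simple two-sided estimate on $R_{\Gamma_2}(p^l)$. For the lower bound, the $GL_2(\mathbb Z)$-orbit of $D = p^l\,1_2$ belongs to $\mathcal D(\Gamma_2\,S_2(p^l)\,\Gamma_2)$ and by Lemma~\ref{label 22} contributes $|B(D) \bmod D| = p^{3l}$ right cosets, so $R_{\Gamma_2}(p^l) \geq p^{3l}$ and hence
\begin{equation*}
\zeta_{S_{2,p},\Gamma_2}(\beta) \;\geq\; \sum_{l\geq 0} p^{(3-\beta)l},
\end{equation*}
which diverges for every $0 < \beta \leq 3$. For the matching upper bound a term-by-term majorization using the degree formula \eqref{label 38} (or, more elementarily, Lemma~\ref{label 22} together with the $GL_2$-coset count below) yields $R_{\Gamma_2}(p^l) = O(l^2 p^{3l})$, so the series converges whenever $\beta > 3$.

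To obtain the closed form, I would parametrize the $GL_2(\mathbb Z)$-left orbits of $D \in \mathcal D(\Gamma_2\,S_2(p^l)\,\Gamma_2)$ by pairs $(e_1, e_2)$ with $0 \leq e_1 \leq e_2 \leq l$, via the elementary divisor theorem: $D$ is equivalent to $\mathrm{diag}(p^{e_1}, p^{e_2})$. The standard $GL_2$ Hecke count shows that the number of right $GL_2(\mathbb Z)$-cosets in the double coset $GL_2(\mathbb Z)\,\mathrm{diag}(p^{e_1}, p^{e_2})\,GL_2(\mathbb Z)$ equals $1$ if $e_1 = e_2$ and $p^{e_2 - e_1 - 1}(p+1)$ if $e_1 < e_2$. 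Combined with $|B(D) \bmod D| = p^{2 e_1 + e_2}$ from Lemma~\ref{label 22}, Proposition~\ref{label 21} yields
\begin{equation*}
R_{\Gamma_2}(p^l) \;=\; \sum_{e=0}^{l} p^{3e} \;+\; (p+1) \sum_{0 \leq e_1 < e_2 \leq l} p^{e_1 + 2 e_2 - 1}.
\end{equation*}

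For $\beta > 3$ set $s = p^{-\beta}$ and interchange the order of summation (justified by absolute convergence) in
\begin{equation*}
(1-s)\,\zeta_{S_{2,p},\Gamma_2}(\beta) \;=\; \sum_{0\leq e_1\leq e_2} p^{2e_1 + e_2} N(e_1,e_2)\, s^{e_2}.
\end{equation*}
The diagonal part $e_1=e_2$ sums to $(1-p^3 s)^{-1}$, while the strictly off-diagonal part, after evaluating the inner geometric sum in $e_1$ and telescoping the two resulting geometric series in $e_2$, simplifies to $(p^2+p)\,s/[(1-p^2 s)(1-p^3 s)]$. Putting the two pieces over a common denominator gives
\begin{equation*}
(1-s)\,\zeta_{S_{2,p},\Gamma_2}(\beta) \;=\; \frac{1 + p s}{(1-p^2 s)(1-p^3 s)},
\end{equation*}
and the identity $1 + p s = (1 - p^2 s^2)/(1 - p s)$ reinstates the denominator factor $1 - p s$ and produces the numerator $1 - p^{2-2\beta}$, yielding the advertised formula \eqref{label 94}. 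The main obstacle is the careful bookkeeping separating the $e_1 = e_2$ and $e_1 < e_2$ contributions and spotting the final numerator factorization; the remaining manipulations are routine geometric-series algebra.
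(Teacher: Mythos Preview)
Your proof is correct and follows essentially the same approach as the paper: both use Proposition~\ref{label 21} and Lemma~\ref{label 22} to write $R_{\Gamma_2}(p^l)$ as a sum over pairs $0\le e_1\le e_2\le l$ weighted by the $GL_2$ coset count $\deg_{\Gamma_1}\mathrm{diag}(p^{e_1},p^{e_2})$ times $p^{2e_1+e_2}$. The only cosmetic difference is the order of the final algebra: the paper first evaluates the double sum to obtain a closed form for $R_{\Gamma_2}(p^l)$ and then sums the geometric series in $l$, whereas you interchange and sum over $l$ first---both routes lead to the same Euler factor.
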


\begin{proof}
We combine the results from Proposition \ref{label 21} and Lemma \ref{label 22} to first compute $R_{\Gamma_2}(p^l)$:

\begin{align*}
    R_{\Gamma_2}(p^l)
        &  =\sum_{d_1|d_2|p^l} \deg_{\Gamma_1}(\pmqty{d_1 & 0 \\ 0     & d_2}) d_1^2d_2\\
        &  = \sum_{l_1 \leq l_2 \leq l} \deg_{\Gamma_1}(\pmqty{p^{l_1} & 0 \\     0 & p^{l_2}})p^{2l_1+l_2}\\
        &= \sum_{i=0}^{l}\sum_{k=0}^{l-i}\deg_{\Gamma_1}(\pmqty{p^i & 0 \\ 0 & p^{k+i}}) p^{2i}p^{i+k}\\
        &= \sum_{i=0}^{l} p^{3i}+ \sum_{i=0}^{l}\sum_{k=1}^{l-i} p^{k-1}(1+p)p^{3i+k}\\
        &= \frac{1}{(1-p)^2(1+p+p^2)}\Big(1-p^{2l}(p+p^2+p^3)+p^{3l}(p^2+p^4)\Big),
\end{align*}
since $\deg_{\Gamma_1}\pmqty{p^{l_1} & 0 \\ 0 & p^{l_2}}= p^{l_2-l_1-1}$ for $l_2\neq l_1$ \cite[Theorem 4.1, Chapter IV]{krieg1990hecke}. It is then clear that the series $ \zeta_{S_2,\Gamma_2}(\beta) $ converges if and only if $\beta > 3$ and

\begin{align*}
    \zeta_{S_{2,p},\Gamma_2}(\beta)
    & = \frac{1}{(1-p)^2(1+p+p^2)} \Big(\frac{(1-p)^2(1+p+p^2)p^{-\beta}(p+p^{\beta})}{(1-p^{3-\beta})(1-p^{-\beta})(1-p^{2-\beta})}\Big)\\
    & = \frac{1-p^{2-2\beta}}{(1-p^{3-\beta}) (1-p^{2-\beta}) (1-p^{1-\beta}) (1-p^{-\beta})},
\end{align*}
for $\beta > 3$ as desired.
\end{proof}

We are now ready to prove the main theorem of this section.

\begin{thm} \label{label 68}
The $GSp_4$-system $(\mathcal{A},(\sigma_t)_{t\in \mathbb R^{+}})$ does not admit a $\textrm{KMS}_\beta$ state for $0 <\beta < 3$ with $\beta \notin \{1,2\}$. 
\end{thm}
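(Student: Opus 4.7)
The plan is to derive a contradiction by computing, in two different ways, the total mass of the probability measure $\nu_\beta$ associated to a hypothetical $\textmd{KMS}_\beta$ state. By definition this mass is $1$, but a decomposition along the Hecke semigroup $S_{2,p}$ at a single prime $p$ will force it to equal a divergent Dirichlet series times a nonnegative constant, so the constant must be zero, giving the contradiction $0=1$.

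By Proposition \ref{label 14}, a $\textmd{KMS}_\beta$ state yields a Radon measure $\mu_\beta \in \mathcal{E}_\beta$ whose induced measure $\nu_\beta$ on $\Gamma_2 \backslash Y$ is a probability measure. Since the hypothesis $\beta \in (0,3) \setminus \{1,2\}$ gives $\beta \notin \{0,1,2,3\}$, Corollary \ref{label 56} shows that $\mu_\beta$ is concentrated on the subset of points whose image in $MSp_4(\mathbb A_f)$ has nonzero multiplier at every prime. Fix a single prime $p$ and set
\begin{equation*}
Y^{*}_{(p)} = \{(x,h) \in Y : \lambda(h_p) \neq 0\}, \qquad Y_p^\times = \{(x,h) \in Y : h_p \in GSp_4(\mathbb Z_p)\},
\end{equation*}
both $\Gamma_2$-invariant, so that $\nu_\beta(\Gamma_2 \backslash Y^{*}_{(p)}) = 1$.

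Next, I would use Lemma \ref{label 49} to produce, for every $(x,h) \in Y^{*}_{(p)}$, a factorization $(x,h) = g \cdot (x',h')$ with $g \in S_{2,p}$ and $(x',h') \in Y_p^\times$, yielding the disjoint decomposition
\begin{equation*}
Y^{*}_{(p)} = \bigsqcup_{[g] \in S_{2,p}/\Gamma_2} g \cdot Y_p^\times.
\end{equation*}
Well-definedness uses that $\Gamma_2$ preserves $Y_p^\times$; injectivity uses the identity $GSp_4^+(\mathbb Q) \cap GSp_4(\hZ) = \Gamma_2$ applied to $g_j^{-1}g_i$, which has trivial multiplier and lies in $GSp_4(\mathbb Z_q)$ for every $q$ whenever $g_iY_p^\times \cap g_j Y_p^\times \neq \emptyset$. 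The stabilizer of $g Y_p^\times$ under the left $\Gamma_2$-action on $Y$ is $\Gamma_2 \cap g \Gamma_2 g^{-1}$, conjugate by $g^{-1}$ to the finite-index subgroup $g^{-1} \Gamma_2 g \cap \Gamma_2 \subset \Gamma_2$ of index $\deg_{\Gamma_2}(g)$. Combining this with the scaling law $\mu_\beta(g \cdot) = \lambda(g)^{-\beta} \mu_\beta(\cdot)$ yields
\begin{equation*}
\nu_\beta(\Gamma_2 \backslash (\Gamma_2 \cdot g Y_p^\times)) = \deg_{\Gamma_2}(g)\,\lambda(g)^{-\beta}\,\nu_\beta(\Gamma_2 \backslash Y_p^\times),
\end{equation*}
and since the $\deg_{\Gamma_2}(g)$ right-coset pieces inside a common double coset $\Gamma_2 g \Gamma_2$ collapse to the same image in the quotient, summing over double cosets produces
\begin{equation*}
1 = \nu_\beta(\Gamma_2 \backslash Y^{*}_{(p)}) = \sum_{[g] \in \Gamma_2 \backslash S_{2,p}/\Gamma_2} \deg_{\Gamma_2}(g)\,\lambda(g)^{-\beta}\,\nu_\beta(\Gamma_2 \backslash Y_p^\times) = \zeta_{S_{2,p},\Gamma_2}(\beta)\,\nu_\beta(\Gamma_2 \backslash Y_p^\times).
\end{equation*}

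Finally, Proposition \ref{label 36} gives $\zeta_{S_{2,p},\Gamma_2}(\beta) = +\infty$ for every $0 < \beta \leq 3$, so the identity above forces $\nu_\beta(\Gamma_2 \backslash Y_p^\times) = 0$; but then every summand vanishes and $\nu_\beta(\Gamma_2 \backslash Y^{*}_{(p)}) = 0$, contradicting the value $1$. The main obstacle is the stabilizer computation producing the factor $\deg_{\Gamma_2}(g)$: it is precisely this factor that converts the naive right-coset sum coming from the decomposition of $Y^{*}_{(p)}$ into the Hecke zeta $\zeta_{S_{2,p},\Gamma_2}(\beta)$ governed by Proposition \ref{label 36}, and correctly tracking how the $\Gamma_2$-quotient collapses double-coset pieces requires careful use of the Hecke-pair identity $[\Gamma_2 : \Gamma_2 \cap g^{-1} \Gamma_2 g] = \deg_{\Gamma_2}(g)$.
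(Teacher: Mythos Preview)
Your proof is correct and follows essentially the same approach as the paper's: both fix a prime $p$, use Corollary~\ref{label 56} (equivalently Lemma~\ref{label 47}) to reduce to the set where $\lambda(h_p)\neq 0$, decompose that set along the Hecke semigroup $S_{2,p}$ acting on the clopen piece $Y_p^\times$, and obtain the identity $1=\zeta_{S_{2,p},\Gamma_2}(\beta)\,\nu_\beta(\Gamma_2\backslash Y_p^\times)$, which is impossible for $\beta<3$ by Proposition~\ref{label 36}. The only difference is expository: where the paper invokes \cite[Lemma~2.7]{laca2007phase} to obtain $\nu_\beta(\Gamma_2\backslash\Gamma_2 g Y_p)=\lambda(g)^{-\beta}\deg_{\Gamma_2}(g)\,\nu_\beta(\Gamma_2\backslash Y_p)$, you carry out the underlying stabilizer and covering argument by hand.
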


\begin{proof}
We put

\begin{equation}
    Y_p:= PGSp_{4}^+(\mathbb R) \times GSp_{4}(\mathbb Z_p)\times \prod_{q\neq p}MSp_{4}(\mathbb Z_q),
\end{equation}
and note that
\begin{equation}
    PGSp_{4}^+(\mathbb R) \times (MSp_{4}(\mathbb Z_{p}) \cap GSp_{4}(\mathbb Q_p)) \times \prod_{q\neq p} MSp_4(\mathbb Z_q) 
    = \bigcup_{s \in \Gamma_2 \backslash S_{2,p} / \Gamma_2} \Gamma_2 s Y_p. \label{label 48}
\end{equation}
It is easy to see that the sets $\Gamma_2 s Y_p$ are disjoint for $s\in \Gamma_2 \backslash S_{2,p} / \Gamma_2$ and the complement of their union in $\mathbb{H}_2^+ \times MSp_4{(\hZ)}$ is a subset of $(\mathbb H_{2}^+ \times MSp_{4}^+ (\mathbb A_f))^c$, whence by Lemma \ref{label 47} it has full measure for $\beta\notin \{1,2,3\}$. Let $\mu_\beta \in \mathcal{E}(K_\beta)$ and $\nu_\beta$ the measure on $\Gamma_2 \backslash \tilde{Y}$. Note that if $g\in G_p \cap GSp_{4}(\mathbb Z_p)$ then necessarily $\lambda(g)=1$ and $g\in MSp_{4}(\mathbb Z)$, hence $G_p \cap GSp_{4}(\mathbb Z_p) = \Gamma$. We can then apply \cite[Lemma 2.7]{laca2007phase} to the group $G_p$ (a simple calculation shows that any elementary matrix in $S_{2,p}$ is generated by $g_{1,p}, g_{2,p}$ and $g_{3,p}$) and the spaces $\tilde{X}$ and $Y_0=Y_p$. We obtain that for any $g\in \Gamma_2 \backslash S_{2,p} / \Gamma_2$, we have

\begin{equation*}
    \nu_\beta (\Gamma_2 \backslash \Gamma_2 g Y_p)= \lambda(g)^{-\beta} \deg_{\Gamma_2}(g) \nu_\beta (\Gamma_2 \backslash Y_p).
\end{equation*}
Observe that

\begin{align} 
     \nu_{\beta} (\Gamma_2 \backslash PGSp_{4}^+(\mathbb R)  \times MSp_{4}(\hZ))
    & = \sum_{g \in \Gamma_2 \backslash S_{2,p} / \Gamma_2} \nu_\beta(\Gamma_2 \backslash \Gamma_2 g Y_p) \nonumber  \\
    &= 
    \sum_{g \in \Gamma_2 \backslash S_{2,p} / \Gamma_2} \lambda(g)^{-\beta} \deg_{\Gamma_2}(g) \nu_{\beta}(\Gamma_2 \backslash Y_p). \nonumber 
\end{align}

Hence

\begin{equation} \label{label 87}
    1 =  \zeta_{S_{2,p},\Gamma_2}(\beta)
    \nu_{\beta}(\Gamma_2 \backslash Y_p),
\end{equation}

which is not possible for $\beta < 3$ by Proposition \ref{label 36}. This shows that there are no $\textmd{KMS}_{\beta}$-states for $\beta <3$ and $\beta \notin \{0,1,2\}.$

\end{proof}

\subsection{Low temperature region and Gibbs states}
\hfill\\

In this section, we study the equilibrium states of the $GSp_{4}$-system within the low temperature region. More specifically, we give an explicit construction of the extremal $\textmd{KMS}_{\beta}$ states for $\beta > 4$, which of course provides a complete description of the set of $\textmd{KMS}_\beta$ states.

\begin{thm} \label{label 75}

For $\beta>4$, the extremal \textmd{KMS} states of the \textmd{GSp$_4$}-system are given by the Gibbs states

\begin{equation}
    \phi_\beta(f)=\frac{\zeta(2\beta-2)\textmd{Tr} (\pi_{y}(f)e^{-\beta H_y})}{\zeta(\beta) \zeta(\beta-1)\zeta(\beta-2) \zeta(\beta-3)},
\end{equation}

where $y \in PGSp_{4}^+(\mathbb R) \times GSp_{4}(\hZ)$.
\end{thm}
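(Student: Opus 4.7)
My plan is to exploit the measure-state correspondence of Proposition \ref{label 14}: every KMS$_\beta$ state is determined by a $\Gamma_2$-invariant Radon measure on $Y = PGSp_4^+(\mathbb R) \times MSp_4(\hZ)$ with the scaling property $\mu(gB) = \lambda(g)^{-\beta}\mu(B)$. For any $y \in Y_0 := PGSp_4^+(\mathbb R)\times GSp_4(\hZ)$, the trace formula $\phi_{y,\beta}(f) = Z(\beta)^{-1}\textmd{Tr}(\pi_y(f)e^{-\beta H_y})$ defines a candidate state; the KMS condition will follow from the standard trace trick, since $\pi_y(\sigma_t f) = e^{itH_y}\pi_y(f)e^{-itH_y}$ allows the analytic continuation to $t+i\beta$ by pushing $e^{-\beta H_y}$ cyclically through the product. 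Under the bijection of Proposition \ref{label 14}, this corresponds to the atomic measure $\mu_{y,\beta} = Z(\beta)^{-1}\sum_{g\in\Gamma_2\backslash G_y}\lambda(g)^{-\beta}\delta_{gy}$, whose $\Gamma_2$-invariance and scaling are verified by the substitutions $g \mapsto \gamma g$ and $g \mapsto hg$.

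The central computation is the partition function $Z(\beta) = \textmd{Tr}(e^{-\beta H_y})$. Since $y_f \in GSp_4(\hZ)$, the set $G_y$ consists of rational symplectic similitudes with integral entries and positive multiplier, stratified by the sets $S_4(r)$, so
$$Z(\beta) = \sum_{g \in \Gamma_2\backslash \bigcup_r S_4(r)} \lambda(g)^{-\beta} = \sum_{r\geq 1} r^{-\beta} R_{\Gamma_2}(r).$$
By the multiplicativity of $R_{\Gamma_2}$ (Proposition \ref{label 54}) this factors as the Euler product $\prod_p \zeta_{S_{2,p},\Gamma_2}(\beta)$, and substituting the local factor \eqref{label 94} from Proposition \ref{label 36}, the standard identity $\prod_p (1-p^{-s})^{-1} = \zeta(s)$ yields
$$Z(\beta) = \frac{\zeta(\beta)\zeta(\beta-1)\zeta(\beta-2)\zeta(\beta-3)}{\zeta(2\beta-2)}.$$
The Euler product converges precisely when $\beta > 4$, since the dominant local factor $(1-p^{3-\beta})^{-1}$ forces the summability of $\sum_p p^{3-\beta}$. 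This both explains the hypothesis and produces the prefactor $\zeta(2\beta-2)/(\zeta(\beta)\zeta(\beta-1)\zeta(\beta-2)\zeta(\beta-3))$ appearing in the theorem.

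For exhaustion, take an arbitrary $\mu \in \mathcal E_\beta$. Corollary \ref{label 56} (valid since $\beta > 4$ avoids $\{0,1,2,3\}$) confines $\mu$ to $PGSp_4^+(\mathbb R)\times MSp_4(\mathbb A_f)^*$, and the adelic version of the decomposition used to derive \eqref{label 87} gives a disjoint partition
$$PGSp_4^+(\mathbb R)\times (MSp_4(\hZ)\cap MSp_4(\mathbb A_f)^*) = \bigsqcup_{g \in \Gamma_2\backslash S/\Gamma_2} \Gamma_2 g Y_0.$$
The scaling condition forces $\nu(\Gamma_2\backslash \Gamma_2 gY_0) = \lambda(g)^{-\beta}\deg_{\Gamma_2}(g)\cdot \nu(\Gamma_2\backslash Y_0)$, so $\mu$ is completely determined by its restriction to $\Gamma_2\backslash Y_0$, which must be a positive $\Gamma_2$-invariant measure of total mass $Z(\beta)^{-1}$. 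This affine bijection between $\mathcal E_\beta$ and $\{\Gamma_2\text{-invariant measures on } Y_0 \text{ of mass } Z(\beta)^{-1}\}$ identifies extremal points on both sides: the latter are Dirac measures at $\Gamma_2$-orbits, which reconstruct exactly the Gibbs states $\phi_{y,\beta}$.

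The main obstacle will be the inverse construction in the last step: given a positive $\Gamma_2$-invariant measure on $Y_0$, one must show the rule $\mu(gB) := \lambda(g)^{-\beta}\mu_0(B)$ extends consistently and $\sigma$-additively to all of $Y$, which requires uniqueness (modulo $\Gamma_2\times\Gamma_2$) of the decomposition $g\cdot y_0$ for elements of the invertible locus. This uniqueness follows by combining Lemma \ref{label 49}, which gives the factorization prime-by-prime, with the strong approximation encoded in Corollary \ref{label4}, ensuring that the local decompositions glue into a global $GSp_4^+(\mathbb Q)$-element unique up to the stabilizer $\Gamma_2$ of $Y_0$.
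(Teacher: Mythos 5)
Your proposal is correct and follows essentially the same route as the paper: both exploit the measure-state correspondence of Proposition~\ref{label 14}, compute the partition function as the Euler product $\prod_p \zeta_{S_{2,p},\Gamma_2}(\beta) = \zeta(\beta)\zeta(\beta-1)\zeta(\beta-2)\zeta(\beta-3)/\zeta(2\beta-2)$ via the multiplicativity of $R_{\Gamma_2}$ and the local formula \eqref{label 94}, show that $\mu\in\mathcal{E}_\beta$ is concentrated on the $GSp_4^+(\mathbb Q)$-orbit of $PGSp_4^+(\mathbb R)\times GSp_4(\hZ)$, and identify extremal measures with Dirac masses that reproduce the Gibbs states. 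The one organizational difference is that the paper reaches the mass of $\Gamma_2\backslash Y_0$ through a careful limiting argument over finite prime sets $F$ (using $Y_F\downarrow PGSp_4^+(\mathbb R)\times GSp_4(\hZ)$ and continuity of measure from above), whereas you assert the full adelic disjoint decomposition directly; this is substantively the same, but the finite-$F$ passage is the cleaner way to justify the partition and the convergence for $\beta>4$ that you invoke.
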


\begin{proof}

Let $F$ be an arbitrary finite set of primes and denote by $G_F$ the group generated by $G_p$ for $p\in F$. We denote by $MSp_{4}^+(\mathbb Z)=GSp_{4}^+(\mathbb Q) \cap MSp_{4}(\mathbb Z)$ and we put

\begin{equation*}
    S_F:= \{m\in MSp_{4}^+(\mathbb Z) \mid \lambda(m)\in \mathbb N (F)\},
\end{equation*}
and 

\begin{equation*}
    Y_F = PGSp_{4}^+(\mathbb R) \times \prod_{p\in F} GSp_{4}(\mathbb Z_p) \times \prod_{q\notin F} MSp_{4}(\mathbb Z_q).
\end{equation*}

Similarly to the proof of Theorem \ref{label 68} (we replace $Y_p$ by $Y_F$ and $S_{2,p}$ by $S_{F}$), we get

\begin{equation*}
    1= \nu_\beta(\Gamma_2 \backslash Y_F) \zeta_{S_F,\Gamma_2}(\beta)= \nu_\beta(\Gamma_2 \backslash Y_F) \prod_{p\in F} \zeta_{S_{2,p},\Gamma_2} 
\end{equation*}

Note that $Y_F\subseteq Y_{F'}$ for $F' \subseteq F$ and the intersection of $Y_F$ over all finite primes is the set $PGSp_{4}^+(\mathbb R) \times GSp_{4}(\hZ)$. Hence for $\beta >4$, we get

\begin{equation*}
    \nu_\beta(\Gamma_2 \backslash (PGSp_{4}^+(\mathbb R) \times GSp_{4}(\hZ)) = \zeta_{MSp_{4}^+(\mathbb Z),\Gamma_2} (\beta)^{-1},
\end{equation*}
where
$$ \zeta_{MSp_{4}^+(\mathbb Z),\Gamma_2} (\beta)=\prod_{p\in \mathcal{P}} \zeta_{S_{2,p},\Gamma_2}= \frac{\zeta (\beta) \zeta (\beta-1) \zeta (\beta-2) \zeta (\beta-3)}{\zeta (2\beta-2)}.$$

On the other hand, the sets $\Gamma_2 s (PGSp_{4}^+(\mathbb R) \times GSp_{4}(\hZ))$ are disjoints for $s \in \Gamma_2 \backslash MSp_{4}^+(\mathbb Z) / \Gamma_2 $. We thus obtain 
\begin{align*}
    &\nu_\beta (\Gamma_2 \backslash MSp_{4}^+(\mathbb Z) (PGSp_{4}^+(\mathbb R) \times GSp_{4}(\hZ)))
    = \sum_{s \in \Gamma_2 \backslash MSp_{4}^+(\mathbb Z) / \Gamma_2 } \nu_{\beta} (\Gamma_2 \backslash \Gamma_2 s (PGSp_{4}^+(\mathbb R) \times GSp_{4}(\hZ))) \\
    &= \zeta_{MSp_{4}^+(\mathbb Z),\Gamma_2} (\beta) \nu_{\beta} (\Gamma_2 \backslash PGSp_{4}^+(\mathbb R) \times GSp_{4}(\hZ))=1,
\end{align*}

Hence $MSp_{4}^+(\mathbb Z) (PGS_{4}^+(\mathbb R) \times GSp_{4}(\hZ))$ has full measure in $PGS_{4}^+(\mathbb R) \times MSp_{4}(\hZ)$ and by Corollary \ref{label 3} the subset $PGSp_4^+(\mathbb R) \times GSp_4 (\mathbb A_f)$ has full measure in $PGSp_4^+(\mathbb R) \times MSp_4 (\mathbb A_f)$. Conversely, any probability $\Gamma_2$-invariant measure on $PGSp_4^+(\mathbb R) \times GSp_4(\hZ)$ extends \cite[Lemma 2.4]{laca2007phase} uniquely to a measure on $PGSp_4^+(\mathbb R) \times GSp_4(\mathbb A_f)$ satisfying condition \ref{label 67}. 

\par Suppose now that $\mu_\beta \in \mathcal{E}_{\beta}$ is a Dirac measure centered on $y \in PGSp_{4}^+(\mathbb R) \times GSP_{4}(\hZ)$. Then

\begin{align*}
    \phi (f)
    &= \sum_{s\in \Gamma_2 \backslash MSp_{4}^+(\mathbb Z) / \Gamma_2 }\int_{\Gamma_2 \backslash \Gamma_2 s (PGSp_{4}^+(\mathbb R) \times GSp_{4}(\hZ))} f(1,\omega)d\nu_\beta(\omega)\\
    &= \zeta_{MSp_{4}^+(\mathbb Z),\Gamma_2} \sum_{s\in \Gamma_2 \backslash MSp_{4}^+(\mathbb Z) / \Gamma_2 } \lambda(s)^{-\beta} \sum_{h\in \Gamma_2   \backslash \Gamma_2 s\Gamma_2} f(1,hy)\\
    &= \zeta_{MSp_{4}^+(\mathbb Z),\Gamma_2} (\beta)^{-1} \sum_{h\in \Gamma_2 \backslash MSp_4^+(\mathbb Z)} \lambda(h)^{-\beta} f(1,hy)\\
    &= \frac{ \textmd{Tr}(\pi_y(f) e^{-\beta H_y}) }{ \textmd{Tr}(e^{-\beta H_y})},
\end{align*}
since the operator $H_y$ is positive and $G_y=MSp_{4}^+(\mathbb Z)$ for $y\in GSp_{4}^+(\mathbb R)\times GSp_{4}(\hZ)$.

\end{proof}

\subsection{The critical region}

\hfill\\

\par We denote by $\hat{\mathcal{E}}_{\beta}$ the subset of right $GSp_{4}(\hZ)$-invariant measures in $\mathcal{E}_{\beta}$. The next proposition shows that this set is not empty for $3 \leq \beta < 4$.

\begin{prop} \label{label 71}
For each $\beta \in (3,4]$, the $\textmd{GSp}_4$-system $\{\mathcal{A},(\sigma_t)_{t\in \mathbb{R}}\}$ admits at least one $\textrm{KMS}_\beta$-state.
\end{prop}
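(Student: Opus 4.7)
The plan is to invoke Proposition \ref{label 14}, reducing the problem to constructing a $\Gamma_2$-invariant Radon measure $\mu_\beta$ on $PGSp_4^+(\mathbb R) \times MSp_4(\mathbb A_{f,\mathbb Q})$ satisfying $\mu_\beta(gB) = \lambda(g)^{-\beta}\mu_\beta(B)$ for $g \in GSp_4^+(\mathbb Q)$ and normalized so that the induced measure on $\Gamma_2 \backslash PGSp_4^+(\mathbb R) \times MSp_4(\hZ)$ is a probability measure. As in the proof of Proposition \ref{label 14}, the identity $MSp_4(\mathbb A_f) = GSp_4^+(\mathbb Q)\cdot MSp_4(\hZ)$ reduces this to building the restriction to $Y := PGSp_4^+(\mathbb R) \times MSp_4(\hZ)$, and I will take $\mu_\beta = \mu_\infty \otimes \mu_f$, placing all scaling on the finite-adelic factor. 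For $\mu_\infty$ I take a left Haar measure on $PGSp_4^+(\mathbb R)$, normalized so the induced measure on $\Gamma_2 \backslash PGSp_4^+(\mathbb R)$ has total mass one; this normalization is possible because $\Gamma_2 \backslash \mathbb H_2^+$ has finite hyperbolic volume by Siegel's theorem and the maximal compact subgroup $K \subset PGSp_4^+(\mathbb R)$ is compact. Left-invariance gives $\mu_\infty(gB_\infty) = \mu_\infty(B_\infty)$ for every $g \in GSp_4^+(\mathbb R)$.

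For each prime $p$, let $d^\times g$ be a Haar measure on the unimodular reductive group $GSp_4(\mathbb Q_p)$, normalized so that $d^\times g(GSp_4(\mathbb Z_p)) = 1$. Define a Radon measure $\mu_{f,p}$ on $MSp_4(\mathbb Q_p)$ by $d\mu_{f,p}(g) := |\lambda(g)|_p^{\beta}\, d^\times g$ on the open subset $GSp_4(\mathbb Q_p) \subset MSp_4(\mathbb Q_p)$, extended by zero on the $\{\lambda = 0\}$ stratum classified by Lemma \ref{label 46}. The left-invariance of $d^\times g$ together with multiplicativity of $|\lambda(\cdot)|_p$ yields the local scaling $\mu_{f,p}(g' B) = |\lambda(g')|_p^{\beta}\, \mu_{f,p}(B)$ for $g' \in GSp_4(\mathbb Q_p)$. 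Decomposing $MSp_4(\mathbb Z_p) \cap GSp_4(\mathbb Q_p)$ into Hecke strata indexed by the $p$-adic valuation of $\lambda$, via the local analogue over $\mathbb Z_p$ of the coset decomposition in Proposition \ref{label 21}, gives
\[
\mu_{f,p}(MSp_4(\mathbb Z_p)) = \sum_{l \geq 0} p^{-l\beta}\, R_{\Gamma_2}(p^l) = \zeta_{S_{2,p},\Gamma_2}(\beta),
\]
which is finite for every $\beta > 3$ by Proposition \ref{label 36}.

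Renormalize each local factor by $\tilde\mu_{f,p} := \mu_{f,p}/\zeta_{S_{2,p},\Gamma_2}(\beta)$, and form the product $\mu_f := \bigotimes_p \tilde\mu_{f,p}$; this is a Borel probability measure on the compact profinite space $MSp_4(\hZ)$. Since each $\tilde\mu_{f,p}$ is $Sp_4(\mathbb Z_p)$-invariant, $\mu_f$ is $Sp_4(\hZ)$-invariant and therefore $\Gamma_2$-invariant through the diagonal embedding $\Gamma_2 \hookrightarrow Sp_4(\hZ)$. Setting $\mu_\beta := \mu_\infty \otimes \mu_f$ and taking any $g \in GSp_4^+(\mathbb Q)$, the product formula $\prod_{p < \infty}|\lambda(g)|_p = \lambda(g)^{-1}$ (valid since $\lambda(g) \in \mathbb Q_+^\times$) combined with the invariance of $\mu_\infty$ gives
\[
\mu_\beta(gB) = \mu_\infty(B_\infty) \prod_p |\lambda(g)|_p^{\beta}\, \tilde\mu_{f,p}(B_{f,p}) = \lambda(g)^{-\beta}\, \mu_\beta(B)
\]
for product Borel sets $B = B_\infty \times \prod_p B_{f,p}$, and this extends to arbitrary Borel $B$ by a monotone class argument. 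The normalization $\nu_\beta(\Gamma_2 \backslash Y) = 1$ is immediate from the chosen normalizations.

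The main technical content is the explicit local volume computation $\mu_{f,p}(MSp_4(\mathbb Z_p)) = \zeta_{S_{2,p},\Gamma_2}(\beta)$, which is a local-at-$p$ analogue of the Hecke coset decomposition in Proposition \ref{label 21}. The essential feature being exploited is that each individual local zeta function $\zeta_{S_{2,p},\Gamma_2}(\beta)$ converges throughout $\beta > 3$ by Proposition \ref{label 36}, even though the Euler product $\prod_p \zeta_{S_{2,p},\Gamma_2}(\beta)$ diverges for $\beta \leq 4$. This is precisely why the product probability measure $\mu_f$ is well defined across the entire critical range $(3,4]$, whereas the Gibbs-state construction of Theorem \ref{label 75} -- which requires the Euler product to converge -- breaks down.
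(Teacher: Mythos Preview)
Your proof is correct and follows essentially the same route as the paper: build a product measure $\mu_\infty \times \prod_p \tilde\mu_{f,p}$, where $\mu_\infty$ is normalized Haar on $PGSp_4^+(\mathbb R)$ and each $\tilde\mu_{f,p}$ is the weighted Haar measure $|\lambda|_p^\beta\, d^\times g$ on $GSp_4(\mathbb Q_p)$ extended by zero and renormalized so that $MSp_4(\mathbb Z_p)$ has mass one, with the key input being that $\zeta_{S_{2,p},\Gamma_2}(\beta)<\infty$ for $\beta>3$ (Proposition~\ref{label 36}). The only cosmetic difference is that the paper phrases the construction of $\mu_{\beta,p}$ via the extension machinery of \cite[Lemma~2.4]{laca2007phase} rather than writing down the density $|\lambda|_p^\beta$ directly; your closing remark on why the local factors converge throughout $(3,4]$ while the Euler product does not is a helpful piece of intuition that the paper leaves implicit.
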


\begin{proof}
We generalize the construction in \cite{laca2007phase}. By the correspondence in Proposition \ref{label 14}, it is enough to construct a measure $\mu_\beta$ on $\mathbb H_2 \times MSp_{4}(\mathbb A_f)$ such that $\mu_\beta \in \mathcal{E}_{\beta}$. For each prime $p$ and $3 < \beta \leq 4$, we consider the normalized Haar measure on $GSp_{4}(\mathbb Z_p)$ so that the total volume is $\zeta_{S_{2,p},\Gamma_2}(\beta)^{-1}$ (we denote this measure by $\textmd{meas}_{\beta,p}$). Observe that $GSp_{4}(\mathbb Q_p)= G_p GSp_{4}(\mathbb Z_p)$ and hence by \cite[lemma 2.4]{laca2007phase} we can uniquely extend this measure to a measure $\mu_{\beta,p}$ on $GSp_{4}(\mathbb Q_p)$ such that if $Z$ is a compact measurable subset in $GSp_{4}(\mathbb Q_p)$, then

\begin{equation*}
    \mu_{\beta,p}(Z) = \sum_{g\in G_p} |\lambda(g)|_p^{-\beta} \textmd{meas}_{\beta,p}(gZ \cap GSp_{4}(\mathbb Z_p)),
\end{equation*}
where $|a|_p$ denotes the $p$-adic valuation of $a$. Since $\mu_{\beta,p}(hZ)= |\lambda(h)|_p^{\beta}\mu_{\beta,p}(Z)$
for $g\in GSp_{4}(\mathbb Q_p)$, it is clear that $\mu_{\beta,p}$ is left $GSp_{4}(\mathbb Z_p)$-invariant. It is also right $GSp_{4}(\mathbb Z_p)$-invariant since the Haar measure $\textmd{meas}_{\beta,p}$ is right translation invariant. We extend the measure  $\mu_{\beta,p}(Z)$ to a measure on $MSp_{4}(\mathbb Q_p)$ by setting $\mu_{\beta,p}(Z):= \mu_{\beta,p}(Z \cap GSp_{4}(\mathbb Q_p))$ for Borel $Z \subseteq MSp_{4}(\mathbb Q_p)$. To extend this measure to $MSp_{4}(\mathbb A_f)$ we first check that $\mu_{\beta,p}(MSp_{4}(\mathbb Z_p))=1$. The proof of Lemma \eqref{label 47} applied to the space $MSp_{4}(\mathbb Q_p)$ shows that the set $MSp_{4}(\mathbb Z_p)\cap GSp_{4}(\mathbb Q_p)$ has full measure. Since this set is precisely $S_{2,p}GSp_{4}(\mathbb Z_p)$, then similarly to the calculation in the proof of Theorem \ref{label 68} we get

\begin{equation*}
    \mu_{\beta,p} (MSp_{4}(\mathbb Z_p))= \sum_{g \in \Gamma_2 \backslash S_{2,p} / \Gamma_2} \lambda(g)^{-\beta} \deg_{\Gamma_2}(g) \mu_{\beta,p}(GSp_{4}(\mathbb Z_p))=1.
\end{equation*}

We thus define a measure on $MSp_{4}(\mathbb A_f)$ by $\mu_{\beta,f}=\prod_{p\in \mathcal{P}} \mu_{\beta,p}$. Then for $g\in GSp_{4}(\mathbb Q)$ and measurable subset $Z\subseteq MSp_{4}(\mathbb A_f)$, we get

$$\mu_{\beta,f}(gZ)= \Big( \prod_{p\in \mathcal{P}} |\lambda(g)|_p^{\beta}\Big) \mu_{\beta,f}(Z)=\lambda(g)^{-\beta} \mu_{\beta,f}(Z).$$

If we denote by $\mu_{\beta,PGSp_{4}^+(\mathbb R)}$ the normalized Haar measure on $PGSp_{4}^+(\mathbb R)$ such that $\nu_{\beta,PGSp_{4}^+(\mathbb R)}$ is a probability measure on $\Gamma_2 \backslash PGSp_{4}^+(\mathbb R)$, then it is clear that the measure defined by $\mu_{\beta}:= \mu_{\beta,PGSp_{4}^+(\mathbb R)} \times \mu_{\beta,f}$ is an element of $\mathcal{K}_\beta$. By construction, $\mu_{\beta}$ is also right $GSp_{4}(\hZ)$-invariant; that is $\mu_{\beta} \in \hat{\mathcal{E}}_{\beta}$.
\end{proof}

Our next goal is to show that for $3 < \beta \leq 4$, the $\textmd{KMS}_\beta$ constructed in Proposition \ref{label 71} is the unique equilibrium state. We first recall the definition of an ergodic action.

\begin{definition}
If $\mu \in \mathcal{E}_{\beta} $, the action of $G$ on the measure space $(X,\mu)$ is ergodic if the following holds: If $A$ is any $G$-invariant Borel subset of $X$, then $\mu(A)=0$ or $\mu(A^c)=0$.
\end{definition}

\par Recall that if $W$ be a locally compact group then a character of $W$ is a continuous homomorphisms $\chi: W \rightarrow \mathbb{T}$.

\begin{lem} \label{label 53}
For $n\in \mathbb N$, we let $G_n=1+p^n \mathbb{Z}_p^{\times} \subseteq \mathbb{Z}_p$ and $\chi$ be any character of $\mathbb{Z}_p^{\times}$. Then $G_k \subseteq \ker(\chi) $ for some $k\in \mathbb{N}$.
\end{lem}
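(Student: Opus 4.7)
The plan is to exploit the fact that the circle group $\mathbb{T}$ has no small subgroups, together with the fact that the groups $G_n = 1 + p^n \mathbb{Z}_p$ form a basis of open neighborhoods of $1$ in $\mathbb{Z}_p^\times$ (I read the definition as $1 + p^n \mathbb Z_p$; otherwise $G_n$ is not even a group).

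First I would verify the two ingredients separately. On the domain side, each $G_n$ is an open subgroup of $\mathbb{Z}_p^\times$: openness is immediate from the definition of the $p$-adic topology, and the subgroup property follows from $(1+p^n a)(1+p^n b) = 1 + p^n(a+b+p^n ab)$ and the geometric series expansion of $(1+p^n a)^{-1}$. The family $\{G_n\}_{n\geq 1}$ is cofinal in the neighborhood filter of $1 \in \mathbb{Z}_p^\times$, since any open neighborhood of $1$ contains a set of the form $1 + p^n \mathbb Z_p$ for $n$ large enough. On the target side, I would recall the standard no-small-subgroups property of $\mathbb{T}$: the neighborhood $U = \{e^{i\theta} : |\theta| < \pi/2\}$ contains no nontrivial subgroup, because any $z = e^{i\theta} \in U \setminus \{1\}$ admits a power $z^m$ with $|\arg z^m| \in [\pi/2,\pi]$, which lies outside $U$.

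With these in hand, the argument is a one-liner. By continuity of $\chi$, the set $\chi^{-1}(U)$ is an open neighborhood of $1$ in $\mathbb{Z}_p^\times$, so it contains $G_k$ for some $k \in \mathbb N$ by the cofinality remark. Since $G_k$ is a subgroup, $\chi(G_k)$ is a subgroup of $\mathbb{T}$ contained in $U$; by the no-small-subgroups property this forces $\chi(G_k) = \{1\}$, i.e.\ $G_k \subseteq \ker(\chi)$, as claimed.

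There is essentially no obstacle here; the only step that requires a moment of care is the no-small-subgroups statement for $\mathbb{T}$, but this is elementary. The conceptual content is simply that a continuous homomorphism from a totally disconnected group to $\mathbb{T}$ is automatically locally constant, which in the profinite setting of $\mathbb Z_p^\times$ translates to factoring through a finite quotient $\mathbb Z_p^\times / G_k \simeq (\mathbb Z/p^k\mathbb Z)^\times$.
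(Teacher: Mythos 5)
Your proof is correct and follows essentially the same route as the paper: both use that $\{G_n\}$ is a neighborhood base of $1$ in $\mathbb{Z}_p^\times$, continuity of $\chi$, and the no-small-subgroups property of $\mathbb{T}$ via the open set $\{z : \mathrm{Re}(z) > 0\}$ (your $U$). Your remark that the definition should read $G_n = 1 + p^n\mathbb{Z}_p$ (not $1 + p^n\mathbb{Z}_p^\times$) for $G_n$ to be a group is a valid correction of a typo in the statement.
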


\begin{proof}

Consider the open subset of $\mathbb{T}$ given by 
\begin{equation}
    V=\{z\in \mathbb{T} \mid \textrm{Re} (z) > 0\}. \label{label 15}
\end{equation}
Observe first that the only subgroup of $V$ is $\{1\}$ and a fundamental system of neighborhood of the neutral element of $\mathbb{Z}_p^{\times}$ is given by the subgroups 
\begin{equation*}
    G_1 \supset G_2 \supset \dots \supset G_n \dots
\end{equation*}
Consider now the open subset of $\mathbb{T}$ given in \eqref{label 15}. Since the character $\chi$ is continuous, there exists an integer $k \geq 1$ such that $\chi(G_k) \subseteq V$. Now $\chi$ is homomorphism and therefore the subset $\chi(G_k)$ is a subgroup of $V$, that is $\chi(G_k)=1$.
\end{proof}

\begin{lem} \label{label 57}
Let $m$ be an integer and $B$ a finite set of prime numbers. Then the set 

\begin{equation}
    \{(n,\dots,n) \in \prod_{p\in B} \mathbb{Z}_p^{\times} \mid n\in \mathbb{Z}\,\, \textmd{and}\,\, (n,p)=1 \quad \forall p\in B\}
\end{equation}
is dense in $\prod_{p\in B} \mathbb{Z}_p^{\times}$.
\end{lem}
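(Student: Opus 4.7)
The plan is a standard combination of the Chinese Remainder Theorem together with the fact that $\mathbb{Z}$ is dense in $\hZ = \prod_p \mathbb{Z}_p$. Let me describe the forward-looking steps.

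First, I would describe a basis of open neighborhoods of an arbitrary point in the product space. A basic open neighborhood of $(a_p)_{p\in B} \in \prod_{p\in B} \mathbb{Z}_p^\times$ has the form
\[
U = \prod_{p\in B} (a_p + p^{k_p}\mathbb{Z}_p),
\]
for some positive integers $k_p$, $p\in B$. Since each $a_p \in \mathbb{Z}_p^\times$, I can use the density of $\mathbb{Z}$ in $\mathbb{Z}_p$ to choose, for each $p \in B$, an integer $b_p \in \mathbb{Z}$ such that $b_p \equiv a_p \pmod{p^{k_p}}$. Because $a_p$ is a $p$-adic unit, this forces $b_p \not\equiv 0 \pmod p$, so $(b_p, p) = 1$.

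Next, I would invoke the Chinese Remainder Theorem for the pairwise coprime moduli $\{p^{k_p}\}_{p\in B}$ to produce an integer $n \in \mathbb{Z}$ satisfying
\[
n \equiv b_p \pmod{p^{k_p}} \qquad \text{for every } p\in B.
\]
By construction $n \equiv a_p \pmod{p^{k_p}}$ for each $p\in B$, so $(n,\dots,n) \in U$ under the diagonal embedding. Moreover, since $b_p \not\equiv 0 \pmod p$, we have $n \not\equiv 0 \pmod p$, i.e., $(n,p) = 1$ for every $p \in B$. This shows that every basic open subset of $\prod_{p\in B}\mathbb{Z}_p^\times$ meets the diagonal image of the set $\{n \in \mathbb{Z} \mid (n,p) = 1 \ \forall p \in B\}$, which proves density.

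There is no real obstacle: the argument is essentially bookkeeping, with the only subtlety being the observation that approximating a $p$-adic unit $a_p$ to precision $p^{k_p}$ with $k_p \geq 1$ automatically yields an integer coprime to $p$, which is what ensures the CRT solution $n$ is a unit at every prime in $B$. No role is played by the extraneous integer $m$ in the statement, so I would simply omit it (it appears to be a leftover that does not enter the conclusion).
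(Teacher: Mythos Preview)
Your proof is correct and follows essentially the same approach as the paper: approximate each coordinate by an integer coprime to the corresponding prime, then apply the Chinese Remainder Theorem to obtain a single integer $n$ working simultaneously at all primes in $B$. The only cosmetic difference is that the paper phrases the argument using the $p$-adic metric and an $\epsilon$, whereas you use the equivalent language of basic open sets; your observation that the integer $m$ plays no role is also accurate.
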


\begin{proof}
Any $a \in \mathbb{Z}_p^{\times}$ admits a $p$-adic expansion of the form

\begin{equation*}
    a= \sum_{i\geq 0} c_i p^i, \quad 0 < c_i< p.
\end{equation*}
Hence, given $\epsilon > 0$ and
$x=(x_1,\dots,x_{|B|}) \in \prod_{p\in B} \mathbb{Z}_p^{\times}$, we choose $a_k \in \mathbb{Z}$ and $e_k \in \mathbb{N}$ such that  
\begin{equation*}
|x_k-a_k|_{p_k} < \frac{\epsilon}{2}, \quad (a_k,p_k)=1, \quad   p_k^{-e_k} <\frac{\epsilon}{2}
\end{equation*}
for $k=1,\dots,|B|$. By the Chinese remainder theorem, the congruence system
\begin{equation*}
    n\equiv a_k \mod p_k ^{e_k},\quad 1 \leq k \leq |B|
\end{equation*}
has a solution $n \in \mathbb{Z}$. The condition $(a_k,p_k)$ implies that $(n,p_k)=1$ for all $1 \leq k \leq |B|$. Hence

\begin{equation*}
    |n-x_k|_{p_{k}} \leq |n-a_k|_{p_k} + |a_k -x_k|_{p_k} \leq p^{e_k} + \frac{\epsilon}{2} \leq \epsilon,
\end{equation*}
as desired.

\end{proof}

Given $m\in \mathbb N$, a Dirichlet character modulo $m$ is a function $\chi_m: \mathbb Z \rightarrow \mathbb C$ obtained by extending a character of $(\mathbb Z / m \mathbb Z )^{\times}$ to $0$ on $\mathbb Z / m \mathbb Z$ and lifted to $\mathbb Z$ by composition. The corresponding Dirichlet $L$-function is defined by 

\begin{equation*}
    L(s,\chi_m) = \sum_{n=1}^{\infty} \chi(n)/n^s,\quad s\in \mathbb C. 
\end{equation*}

\begin{thm} \label{label 58}

Let $\hat{\mu}_{\beta} \in \hat{\mathcal{E}}_{\beta}$ and $A=L^\infty (PGSp^+_{4}(\mathbb R) \times MSp_{4}(\mathbb A_f),\hat{\mu}_{\beta} )$. We have that 

\begin{equation}
    A^{PGSp_4^+(\mathbb R) \times GSp_{4}^+(\mathbb Q)}= \mathbb C.
\end{equation}
\end{thm}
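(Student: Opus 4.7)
My plan is a Fourier decomposition of $f$ along the right action of $\hat{\mathbb Z}^\times$ acting as central scalars on $MSp_4(\mathbb A_f)$, combined with the finiteness of certain twisted Dirichlet $L$-functions in the critical strip. First, a Fubini reduction: since $\hat{\mu}_\beta$ is right $GSp_4(\hat{\mathbb Z})$-invariant and its $PGSp_4^+(\mathbb R)$-projection must be normalized Haar measure, the $PGSp_4^+(\mathbb R)$-invariance of $f$ reduces the claim to showing that any left-$GSp_4^+(\mathbb Q)$-invariant $f \in L^\infty(MSp_4(\mathbb A_f),\hat{\mu}_{\beta,f})$ is almost everywhere constant; Corollary~\ref{label 56} lets me assume $f$ is supported in $GSp_4(\mathbb A_f)$. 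The center $\{uI_4 : u \in \hat{\mathbb Z}^\times\}$ acts on the right preserving $\hat{\mu}_{\beta,f}$ and commutes with the left $GSp_4^+(\mathbb Q)$-action; for each primitive Dirichlet character $\chi$ I form
\[
f_\chi(m) \;=\; \int_{\hat{\mathbb Z}^\times}\overline{\chi(u)}\, f(m\cdot uI_4)\, du,
\]
obtaining $f = \sum_\chi f_\chi$, each summand still left-invariant and satisfying $f_\chi(m\cdot uI_4) = \chi(u) f_\chi(m)$. It suffices to show $f_\chi = 0$ for non-trivial $\chi$ and that $f_1$ is constant.

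For non-trivial $\chi$, I adapt the full-measure coset decomposition in the proof of Theorem~\ref{label 68}. For a finite set of primes $F$ containing the conductor of $\chi$ and $Y_F = PGSp_4^+(\mathbb R)\times\prod_{p\in F}GSp_4(\mathbb Z_p)\times\prod_{q\notin F}MSp_4(\mathbb Z_q)$, the $GSp_4^+(\mathbb Q)$-invariance of $f_\chi$ together with the scaling $\hat{\mu}_\beta(sB) = \lambda(s)^{-\beta}\hat{\mu}_\beta(B)$ yields
\[
\int_{\Gamma_2\backslash \bigsqcup_{s \in \Gamma_2\backslash S_F/\Gamma_2}\Gamma_2 s Y_F} f_\chi\,d\nu_\beta \;=\; \zeta_{S_F,\Gamma_2}(\beta)\,\int_{\Gamma_2\backslash Y_F} f_\chi\,d\nu_\beta.
\]
The left-hand integral is taken over a right-$\hat{\mathbb Z}^\times$-scalar-invariant set, so equals $\chi(u)$ times itself for every $u$; non-triviality of $\chi$ forces it to vanish. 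Since $\zeta_{S_F,\Gamma_2}(\beta) \ne 0$ for $\beta > 3$ by Proposition~\ref{label 36}, this gives $\int_{\Gamma_2\backslash Y_F} f_\chi\, d\nu_\beta = 0$ for every such $F$. To upgrade integral vanishing to pointwise vanishing, I also evaluate $\chi$-twisted integrals on translated sets $gY_F$ by inserting the character $\chi\circ\lambda$ into the integrand; the relevant local Euler factor at each prime $p \nmid \mathrm{cond}(\chi)$ becomes
\[
\zeta^{\chi}_{S_{2,p},\Gamma_2}(\beta) \;=\; \frac{1-\chi(p)^2 p^{2-2\beta}}{(1-\chi(p)p^{3-\beta})(1-\chi(p)p^{2-\beta})(1-\chi(p)p^{1-\beta})(1-\chi(p)p^{-\beta})},
\]
and the global product factorizes as $L(\beta,\chi)L(\beta-1,\chi)L(\beta-2,\chi)L(\beta-3,\chi)/L(2\beta-2,\chi^2)$. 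For non-trivial $\chi$ and $\beta\in(3,4]$ this ratio is finite and nonzero — the first three numerator factors converge absolutely, $L(\beta-3,\chi)$ converges by Dirichlet's theorem on non-principal $L$-series, and the denominator lies in the absolute-convergence region $\mathrm{Re}(s)>1$ — which is precisely where Dirichlet $L$-function analysis enters. These twisted identities collectively force $f_\chi = 0$ almost everywhere, since the $GSp_4^+(\mathbb Q)$-translates of the $Y_F$ generate a separating family of measurable sets.

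The trivial-character component $f_1$ is right $\hat{\mathbb Z}^\times$-scalar-invariant, and combining this with the left $GSp_4^+(\mathbb Q)$-invariance (which contains left $\mathbb Z^+$-scalar multiplication) and the full-measure decomposition via $MSp_4^+(\mathbb Z)\cdot(PGSp_4^+(\mathbb R)\times GSp_4(\hat{\mathbb Z}))$ established in the proof of Theorem~\ref{label 75} forces $f_1$ to be constant. The main technical obstacle I foresee is that this Fourier decomposition only captures the central abelian characters of the right $GSp_4(\hat{\mathbb Z})$-action, whereas in principle higher-dimensional irreducible right-isotypes could host non-trivial $GSp_4^+(\mathbb Q)$-invariants. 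I expect to resolve this by first averaging $f$ over the right $Sp_4(\hat{\mathbb Z})$-action — which preserves $\hat{\mu}_\beta$ and commutes with the left action, so left-invariance is maintained — and invoking that $Sp_4$ is perfect, so the abelianization of $GSp_4(\hat{\mathbb Z})$ is precisely $\hat{\mathbb Z}^\times$ via $\lambda$, while strong approximation (Theorem~\ref{label 3}) gives density of $Sp_4(\mathbb Z)$ in $Sp_4(\hat{\mathbb Z})$ and justifies that the averaging collapses every non-central isotype, reducing the full problem to the central Fourier decomposition handled above.
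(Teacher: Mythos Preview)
Your trivial-character step has a real gap. You invoke the full-measure decomposition $MSp_4^+(\mathbb Z)\cdot(PGSp_4^+(\mathbb R)\times GSp_4(\hat{\mathbb Z}))$ from Theorem~\ref{label 75}, but that theorem is for $\beta>4$. Theorem~\ref{label 58} is applied in Theorem~\ref{label 100} for $3<\beta\le 4$, and in that range $\nu_\beta\bigl(\Gamma_2\backslash PGSp_4^+(\mathbb R)\times GSp_4(\hat{\mathbb Z})\bigr)=\prod_p \zeta_{S_{2,p},\Gamma_2}(\beta)^{-1}=0$, since $\prod_p(1-p^{3-\beta})$ diverges to zero. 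Thus $MSp_4^+(\mathbb Z)\cdot GSp_4(\hat{\mathbb Z})$ is a countable union of null sets and tells you nothing about $f_1$. This is not a technicality: the whole point of the critical region is that the Gibbs set is null. Your non-trivial-$\chi$ step has a matching gap: you only obtain $\int_{\Gamma_2\backslash Y_F}f_\chi\,d\nu_\beta=0$, and since $f_\chi$ is already left-invariant, the integrals over $gY_F$ vanish automatically and give no new information; the sets $gY_F$ are far too coarse to separate points. You display the twisted Euler product and note it is finite and nonzero, but you never say what identity it enters or how that forces $f_\chi=0$. Finally, right-averaging over $Sp_4(\hat{\mathbb Z})$ does not reduce the problem: if $f$ carries a non-trivial right isotype, averaging it to a constant does not show $f$ was constant.

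The paper's mechanism is different and is precisely what is missing from your outline. It works in $L^2$ with the orthogonal projection $P$ onto left-$MSp_4^+(\mathbb Z)$-invariants, and for a character test function $\chi_B$ it bounds $\|P\chi_B\|\le\|P_F\chi_B\|$ via an explicit projection formula, obtaining
\[
\|P\chi_B\|\;\le\;\frac{L(\beta-3,\chi_m)\,L(\beta-2,\chi_m)\,L(\beta-1,\chi_m)}{\zeta_{\mathbb N(F)}(\beta-3)\,\zeta_{\mathbb N(F)}(\beta-2)\,\zeta_{\mathbb N(F)}(\beta-1)}.
\]
The point is the \emph{ratio}: the numerator converges (Dirichlet $L$-series for non-principal $\chi_m$) while $\zeta_{\mathbb N(F)}(\beta-3)\to+\infty$ as $F\nearrow\mathcal P$, forcing $P\chi_B=0$. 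This divergent denominator is exactly what compensates for the failure of the Gibbs decomposition when $\beta\le 4$; your argument never produces it. Once $P\chi_B=0$ for all non-trivial $\chi$, the paper handles the remaining functions by noting that after left-$Sp_4(\hat{\mathbb Z})$-averaging (legitimate because $PQ=P$) any $\Gamma_2$-invariant function on $\prod_{p\in B}GSp_4(\mathbb Z_p)$ factors through $\lambda$, and then uses $V_s^*$ and density over finite $B$ to conclude.
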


\begin{proof}

It is enough to show that the action of $GSp_{4}^+(\mathbb Q)$ on $ ( MSp_4 {(\mathbb A _f), \hat{\mu}_{\beta,f} })$ (where $\hat{\mu}_{\beta,f}$ is the measure on $MSp_4 (\mathbb A _f)$ obtained by projecting onto the second factor) is ergodic. The strategy is similar to \cite{neshveyev2002ergodicity} and \cite{laca2007phase}. Since every $GSp_{4}^+(\mathbb Q)$-invariant subset is completely determined by its intersection with $MSp_{4}(\hZ)$, it is enough to show that the closed subspace

\begin{equation*}
    H= \{f\in L^2 (MSp_{4}(\hZ)), \d \hat{\mu}_{\beta,f} \mid V_m f =f, \forall m \in MSp_{4}^+(\mathbb Z)\},\quad (V_mf) (x) := f(mx),
\end{equation*}
consists of constant functions. Denote by $P$ the orthogonal projection onto $H$. Since every function in $H$ is $\Gamma_2$-invariant, it is enough to show that $P$ maps $\Gamma_2$-invariants functions to constants.

Consider the subspace 

\begin{equation*}
    H_F=\{f\in L^2 (MSp_{4}(\hZ)), \d \hat{\mu}_{\beta,f}) \mid V_s f =f, \forall s\in S_F\},
\end{equation*}
and denote by $P_F$ the orthogonal projection onto $H_F$. Consider the subset

\begin{equation*}
W_F=\prod_{p\in F}GSp_{4}(\mathbb Z _p) \times \prod_{q\notin F} MSp_{4}(\mathbb Z_q).
\end{equation*}

We obtain by Lemma \ref{label 49} and Corollary \ref{label 57} that the disjoint union $\cup_{s\in \Gamma_2 \backslash S_F / \Gamma_2} sW_F$ is a subset of full measure. Hence given any $\Gamma_2$-invariant function $f\in L^2 (MSp_{4}(\hZ)), \d \hat{\mu}_{\beta,f}) $, we deduce from \cite[Lemma 2.9]{laca2007phase} that  

\begin{equation}
    P_Ff= \zeta_{S_F,\Gamma_2}(\beta)^{-1} \sum_{s\in \Gamma_2 \backslash S_F /\Gamma_2} \lambda(s)^{-\beta} R_{\Gamma_2}(s) T_sf. \label{label 52}
\end{equation}

We fix a finite set of primes $B$ such that $B \cap F =\emptyset$ and consider the functions in $L^2(MSp_{4}(\hZ),\d \mu_{\beta,f})$ of the form

\begin{equation*}
    \chi_B(x) = \begin{cases}
                \chi(\lambda ( (x_p)_{p\in B} ) ) \quad &\text{if}\,\,\, x\in W_B\\    
                0 \quad &\text{otherwise} \end{cases},
\end{equation*}
where $\chi$ is a character of the compact abelian group $\prod_{p\in B} \mathbb Z_p^{\times}$.

We will first show that $P\chi_{B}(x)$ is constant a.e. This is easy to prove if the character $\chi$ is trivial. Indeed, in this cases $\chi_B=\mathbb 1_B$ and the projection formula \eqref{label 52} we get that $P_B\chi_B = \zeta_{S_B,\Gamma_2}(\beta)^{-1}$. Since $P=PP_B$ the result follows. We now consider the case where $\chi$ is non-trivial.
We first write $\chi = \prod_{p\in B} \chi_p$, where $\chi_p$ is a character of $\mathbb Z_p ^{\times}$ given by

\begin{equation*}
    \chi_{p}(a)= \chi(1,\dots,a,\dots,1).
\end{equation*}

Then by Lemma \ref{label 53} for each prime $p\in B$ there exists an integer $k_p \in \mathbb{N}$ such that $G_{k_p} \subseteq \ker (\chi_p)$. Let $m=\prod_{p\in B} p^{k_p}$ and define $\chi_m: \mathbb Z \rightarrow \mathbb C$ by

\begin{equation*}
    \chi_m (n) = \begin{cases}
                \chi((n,\dots,n)) \quad &\text{if}\,\,\, (n,m)=1\\
                0 \quad &\text{otherwise}
                \end{cases}
\end{equation*}

It is clear that $\chi_m$ is multiplicative and $\chi_m(n+m)=\chi _m(n)$ if $(n,m)\neq 1$. Suppose that $(n,m)=1$ so that $n\in \mathbb Z_p^{\times}$. Hence

\begin{equation*}
    \chi_m(n+m)=\prod_{p\in B} \chi_{p} (n+m)= \prod_{p\in B} \chi_{p} (n) \chi_p (1+mn^{-1})= \prod_{p\in B} \chi_{p}(n)=\chi_m (n), 
\end{equation*}

since $p^{k_p}$ divides $m$. Hence $\chi_m$ is a Dirichlet character modulo $m$. We claim that $\chi_m$ is nontrivial. Indeed, let $a\in \prod_{p\in B} \mathbb Z_p^{\times}$ such that $\chi(a)\neq 1$. By Lemma \ref{label 56} the set 

\begin{equation*}
    \{ (n,\dots,n) \in \prod_{p\in B} \mathbb Z_p^{\times} \mid n\in \mathbb Z\,\,\, \text{and}\,\,\, (n,m)=1\}
\end{equation*}

is dense in $\prod_{p\in B} \mathbb Z_{p}^{\times}$. Since $\chi$ is continuous there exists $n_0 \in \mathbb Z$ with $(n_0,m)=1$ and $1\neq \chi(n_0,\dots,n_0)=\chi_{m}(n_0)$ as desired.

Since $F\cap B = \emptyset$, for $s\in S_F$ we can write

\begin{equation*}
    (T_s \chi_B)(x)= \begin{cases}
                    \chi(\lambda((x_p)_{p\in B})) \chi_m(\lambda(s)) \quad &
                    \text{if}\,\,\, x\in W_B\\
                    0 \quad &\text{otherwise}.
                    \end{cases}
\end{equation*}

This allows us to obtain an explicit upper bound of the $L^2$-norm of $P\chi_B$ as follows. Since $P=PP_F$, from the projection formula \eqref{label 52} we get

\begin{align*}
\norm{P\chi_B} & = \norm{PP_F \chi_B} \leq \norm{P_F\chi_B} 
 \leq   \zeta_{S_F,\Gamma_2}(\beta)^{-1}
\abs{\sum_{s\in \Gamma \ S_F / \Gamma } \lambda(s)^{-\beta} R_{\Gamma_{2}}(s)\chi_{m}(\lambda(s))}\\
& =  \zeta_{S_F,\Gamma_2}(\beta)^{-1} \abs{\sum_{n\in \mathbb {N}(F)} n^{-\beta} R_{\Gamma_{2}} (n)\chi_m(n)}\\
&=  \zeta_{S_F,\Gamma_2}(\beta)^{-1} \abs{\prod_{p\in F} \sum_{l=0}^{\infty} p^{-l\beta} R_{\Gamma_2}(p^l)\chi_m(p^l)},
\end{align*}
since the function $R_{\Gamma_2}(n)$ is multiplicative by Lemma \ref{label 54}. As in the proof of Proposition \ref{label 36} we get

\begin{align*}
    \norm{P\chi_B}
    & \leq \zeta_{F_S,\Gamma_2}(\beta)^{-1} \prod_{p\in F} \abs{(\frac{1}{1-\chi_m(p)p^{-\beta}}-\frac{p+p^2+p^3}{1-\chi_m(p)p^{2-\beta}}+ \frac{p^2+p^4}{1-\chi_m(p)p^{3-\beta}}   )}\\
    & = \prod_{p\in F} \abs{\frac{ (1+p^{1-\beta} (\chi_m(p) -1)) - p^{2-2\beta}\chi_m(p)) (1-p^{3-\beta})(1-p^{2-\beta}) (1-p^{-\beta})}{(1-\chi_m(p)p^{3-\beta}) (1-\chi_m(p)p^{2-\beta}) (1-\chi_m(p)p^{1-\beta}) (1-\chi_m(p)p^{-\beta}) (1-p^{2-2\beta})   }  }\\
    & = \abs{\frac{(\sum_{n\in \mathbb{N}(F)}\frac{\chi_m(n)}{n^{\beta-3}}) (\sum_{n\in \mathbb{N}(F)}\frac{\chi_m(n)}{n^{\beta-2}})   (\sum_{n\in \mathbb{N}(F)}\frac{\chi_m(n)}{n^{\beta-1}})}
    {\zeta_{\mathbb{N}(F)}(\beta-3) \zeta_{\mathbb{N}(F)}(\beta-2) \zeta_{\mathbb{N}(F)}(\beta-1)} } \prod_{p\in F} \abs{\frac{1+\chi_m(p)p^{1-\beta} }{1+p^{1-\beta}} }\\
    & \leq \abs{\frac{(\sum_{n\in \mathbb{N}(F)}\frac{\chi_m(n)}{n^{\beta-3}}) (\sum_{n\in \mathbb{N}(F)}\frac{\chi_m(n)}{n^{\beta-2}})   (\sum_{n\in \mathbb{N}(F)}\frac{\chi_m(n)}{n^{\beta-1}})}
    {\zeta_{\mathbb{N}(F)}(\beta-3) \zeta_{\mathbb{N}(F)}(\beta-2) \zeta_{\mathbb{N}(F)}(\beta-1)} }\\
    & \leq 
    \frac{L(\beta-3,\chi_m) L(\beta-2,\chi_m) L(\beta -1,\chi_m)  }
    {  \zeta_{\mathbb N (F) } (\beta -3)   \zeta_{\mathbb N (F)} (\beta -2)   \zeta_{\mathbb N (F)} (\beta -1)}.
\end{align*}

Since the character $\chi_m$ is non-trivial and $3 < \beta \leq 4 $, it follows from \cite[Proposition 12, Chapter VI]{serre1970cours} that the the right hand side above can be made arbitrary small as $F \nearrow \mathcal{ P}$ (with $F \cap B =\emptyset$). This shows that $P\chi_B=0$, in particular $P\chi_B$ is again a constant function when $\chi$ is a nontrivial character.

Consider now the functions $F_B \in L^2(MSp_{4}(\hZ),\d \mu_{\beta,f})$ of the form

\begin{equation*}
    F_B(x)=\begin{cases}
            f((x_p)_{p\in B})\quad &\text{if}\,\,\, x\in W_B\\
            0 \quad &\text{otherwise}.
            \end{cases}
\end{equation*}
where $f\in L^2(\prod_{p\in B} GSp_{4}(\mathbb Z_p),\d \mu_B)$ and $\mu_{B}=( \pi_B)_* (\mu_{\beta,f})$, where $\pi_B$ is the projection $\pi_B: MSp_{4}(\hZ) \rightarrow \prod_{p\in B} MSp_4(\mathbb Z_p)$. We first show that it is enough to assume that $f$ is $\Gamma_2$-invariant. Indeed, when this is not the case, we denote by $Q$ the projection onto the space of $Sp_{4}(\hZ)$-invariant functions. Since $\hat{\mu}_{\beta,f}$ is $Sp_{4}(\hZ)$-invariant we have that

\begin{equation*}
QF_B(x)=  \int_{Sp_{4}(\hZ)} F_B (gx) \d g 
= \begin{cases}
\int_{\prod_{p\in B} Sp_{4}(\mathbb{Z}_p)} f(gx) \d g_B\quad & \text{if}\,\,\, x \in W_B \\
0 \quad &\text{otherwise},
\end{cases}
\end{equation*}

Since the Haar measure on $Sp_{4}(\hZ)$ is left-translation-invariant and $PQ=P$ (since $\Gamma_2$ is dense in $Sp_{4}(\hZ)$), this shows that WLOG we can assume that $f$ is $\Gamma_2$-invariant. Then by the density of $\Gamma_2$ in $\prod_{p\in B} Sp_{4}(\mathbb Z _p)$, we see that the function $f$ depends only on $\lambda((x_p)_{p\in B})$, i.e 

\begin{equation*}
    F_B(x)=\begin{cases}
            f'(\lambda(x_p)_{p\in B})\quad &\text{if}\,\,\, x\in W_B\\
            0 \quad &\text{otherwise},
            \end{cases}
\end{equation*}
where $f'$ is a square integrable function in $\prod_{p\in B} \mathbb Z^{\times}_{p}$ (with the natural pushforward measure). The linear span of $\chi$ ( where $\chi$ is a character of $\prod_{p\in B} \mathbb Z^{\times}_{p}$) form a dense subspace of such square integrable functions and we have shown that $P\chi_B$ is constant. It follows that $PF_B$ is constant.

We can easily verify that the adjoint of the the operator $V_s$ (where $s\in S_B$) is given by

\begin{equation*}
    (V^*_sh)(x)=\begin{cases}
                \lambda(s)^{\beta}h (s^{-1}x)\quad &\text{if}\,\,\, x\in sMSp_{4}(\hZ)\\
                0 &\quad \text{otherwise}.
                \end{cases}
\end{equation*}

Hence the map $\lambda(s)^{-\beta /2}V^*_s$ maps isometrically the functions of the form $F_B$ to functions in $L^2(MSp_{4}(\hZ),\d \mu_{\beta,f})$ of the form

\begin{equation} \label{label 55}
    G_B (x)= \begin{cases}
               g((x_p)_{p\in B}) \quad &\text{if}\,\,\, x\in sW_B\\
                0 &\quad \text{otherwise},
                \end{cases}
\end{equation}
where $g\in L^2(s \prod_{p\in B} GSp_{4}(\mathbb Z_p),\d \mu_B)$. It follows then from $P=PV_s^*$ that the projection $P$ maps every function of the form \eqref{label 55} to a constant function. The set $\cup_{s\in \Gamma_2 \backslash S_B / \Gamma_2} s W_B$ has full measure in $MSp_4(\hZ)$ and thus $P$ maps functions depending only on $(x_p)_{p\in B}$ to constants. Finally observe that as $B \nearrow \mathcal{P}$, the union of $L^2(\prod_{p\in B} MSp_{4}(\mathbb Z_p),\d \mu_{B})$ over all finite sets of primes is a dense subspace of  $L^2(\prod_{p\in B} MSp_{4}(\mathbb Z_p),\d \hat{\mu}_{\beta,f})$ (this follows from weak convergence of measures). Since $P$ maps this dense subspace to constant functions, this finishes the proof.

\end{proof}

\begin{lem}\label{label 33}
Let $f$ be a smooth function on $\Gamma \backslash PGSp_4^{+}(\mathbb{R})$ with a compact support and let $\Omega$ be any compact subset of $\Gamma \backslash PGSp_4^{+}(\mathbb{R})$. If we denote by $\d \mu $ the normalized Haar measure on $\Gamma \backslash PGSp_4^{+}(\mathbb{R})$, then for all $\epsilon > 0$ there exist $\kappa_1 (\epsilon) > 0$, $\kappa_2>0$ and $M_2>0$ (depending on $f$) such that the inequality
\begin{equation*}
    \frac{1}{R_{\Gamma_2}(m)} \sum_{a\in \Gamma_2 \backslash S_m/\Gamma_2} \abs{(T_af)(\tau)-\int_{\Gamma_2 \backslash PGSp_4^{+}(\mathbb{R})} f d\mu} \deg(a)
     \leq \kappa_1 m^{(2 \epsilon -1)} \prod_{i=1}^{i=l} (1+\kappa_2 p_{i}^{-1}) \frac{1}{(1-p^{-2 \epsilon -1}) ^{2}},
\end{equation*}
 holds for all $ \tau \in \Omega$ and every integer of the form $m=\prod_{i=1}^{i=l}p_i^{l_i}$ where $\min \{p_1,\dots, p_{l}\} > M_2$.
\end{lem}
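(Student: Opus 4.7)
The plan is to establish the estimate by following the Clozel–Ullmo equidistribution of Hecke points strategy \cite{clozel2001hecke}, adapted as in \cite{laca2007phase} for the $GL_2$-case. Write $\bar f = \int_{\Gamma_2 \backslash PGSp_4^+(\mathbb{R})} f\,d\mu$ and $f_0 = f - \bar f$, so that $T_a f(\tau) - \bar f = T_a f_0(\tau)$ (since $T_a$ preserves the Haar measure and hence fixes constants). Applying Cauchy–Schwarz with respect to the probability measure $\deg(a)/R_{\Gamma_2}(m)$ on $\Gamma_2 \backslash S_m / \Gamma_2$ reduces the task to bounding
\begin{equation*}
\frac{1}{R_{\Gamma_2}(m)} \sum_{a \in \Gamma_2 \backslash S_m / \Gamma_2} \abs{(T_a f_0)(\tau)}^2 \deg(a)
\end{equation*}
uniformly for $\tau \in \Omega$. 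Using that $T_a$ commutes with the right regular representation of $PGSp_4^+(\mathbb{R})$, I would next invoke a Sobolev-embedding estimate on $\Gamma_2 \backslash PGSp_4^+(\mathbb{R})$ to replace the pointwise value $\abs{(T_a f_0)(\tau)}$ by an $L^2$-norm of $T_a (D f_0)$, where $D$ is a sufficiently high-order elliptic differential operator in the universal enveloping algebra and whose Sobolev constant depends only on $\Omega$ and $f$.

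The next step is spectral. Decomposing $L^2_0(\Gamma_2 \backslash PGSp_4^+(\mathbb{R}))$ into its automorphic spectrum, the Hecke operator $T_a$ acts on each isotypic component by a scalar expressible through its Satake parameters. Two inputs are crucial here: (i) $Sp_4$ has Kazhdan's property (T), so the trivial representation is isolated in the unitary dual and contributes $\bar f$ cleanly; (ii) an unconditional bound toward Ramanujan for the relevant automorphic representations of $GSp_4$, which yields an effective operator-norm estimate of the form
\begin{equation*}
\norm{T_a}_{L^2_0 \to L^2_0} \leq C_\epsilon \, \deg(a) \, \lambda(a)^{-1/2+\epsilon}.
\end{equation*}
Substituting this into the Cauchy–Schwarz bound and recalling the degree formula \eqref{label 38} from the root datum calculation gives a per-coset estimate whose sum over $a \in \Gamma_2 \backslash S_m/\Gamma_2$ can be evaluated using the elementary divisor parametrisation.

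Finally, I would exploit the multiplicativity $R_{\Gamma_2}(qr) = R_{\Gamma_2}(q) R_{\Gamma_2}(r)$ proved in Proposition \ref{label 54} to factorise the normalised sum over $m = \prod_{i=1}^\ell p_i^{l_i}$ as a product over primes $p_i \in F$. Each local factor is of shape $(1 + \kappa_2 p_i^{-1})$ (coming from the explicit $p$-local counting in \eqref{label 38}) times a local geometric series $\sum_{l\geq 0} p_i^{l(2\epsilon - 1)}$, which evaluates to $(1 - p_i^{-2\epsilon -1})^{-2}$ when combined with the Cauchy–Schwarz weight. The assumption $\min_i p_i > M_2$ is used precisely to absorb the deviations of the local Euler factors of the Satake $L$-functions from their leading terms into the uniform constant $\kappa_2$. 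The principal obstacle is producing the effective Hecke-eigenvalue bound with exponent $1/2 - \epsilon$ uniformly over all double cosets, which is where the equidistribution of Hecke points for $GSp_{2n}$ (as alluded to in the introduction) and the deep automorphic input for $GSp_4$ both intervene; the Sobolev passage from $L^2$ to $L^\infty$ on $\Omega$ is routine once the spectral bound is available.
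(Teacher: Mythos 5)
Your proposal identifies the right ingredients — the Clozel–Oh–Ullmo equidistribution of Hecke points for $GSp_4$, the degree formula \eqref{label 38} from the root datum, and the multiplicativity of $R_{\Gamma_2}$ from Proposition \ref{label 54} — and these are exactly what the paper uses, which simply cites \cite[Theorem 1.7 and Section 4.7]{clozel2001hecke} as a black box and then does the bookkeeping against the elementary-divisor parametrisation. However, the explicit operator-norm estimate you write down, $\norm{T_a}_{L^2_0 \to L^2_0} \leq C_\epsilon \deg(a)\lambda(a)^{-1/2+\epsilon}$, is too crude to close the argument, and this is a genuine gap.

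The problem is that this bound depends only on the similitude factor $\lambda(a)$, which is constant (equal to $m$) across all double cosets $a \in \Gamma_2 \backslash S_m / \Gamma_2$. Feeding the normalised version $\norm{T_a^{\mathrm{norm}}}_{L^2_0} \lesssim \lambda(a)^{-1/2+\epsilon}$ into the weighted sum yields
\begin{equation*}
\frac{1}{R_{\Gamma_2}(m)} \sum_{a} \abs{(T_a f_0)(\tau)}\deg(a) \ \lesssim\ m^{\epsilon-1/2}\,\frac{1}{R_{\Gamma_2}(m)}\sum_{a}\deg(a)\ =\ m^{\epsilon-1/2},
\end{equation*}
and for $0<\epsilon<1/2$ one has $\epsilon - 1/2 > 2\epsilon - 1$, so $m^{\epsilon-1/2}$ is strictly larger than the target $m^{2\epsilon-1}$. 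This loss is not harmless: in Proposition \ref{label 51} the exponent $2\epsilon - 1$ is converted into the factor $\zeta_{\mathbb N(F)}(\beta - 2 - 2\epsilon)$ in the numerator, which stays bounded as $F \nearrow \mathcal{P}$ precisely because one may choose $\epsilon < (\beta-3)/2$ for any $\beta > 3$; with the weaker exponent $\epsilon - 1/2$ the analogous condition becomes $\beta > 7/2 + \epsilon$, which fails on the subinterval $(3, 7/2]$. What the paper actually imports from Clozel–Oh–Ullmo is a per-double-coset bound that depends on the full position of $a$ in the maximal torus — in the paper's notation $(p^{2l_i - 2k_{ij} - 2m_{ij}})^{\epsilon-1/2}$ — not just on $\lambda(a)$. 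Because the cosets with the largest degree are exactly those where this quantity is smallest, the two ingredients combine to $m^{2\epsilon-1}$ rather than $m^{\epsilon-1/2}$. Relatedly, the attribution of the factor $(1-p_i^{-2\epsilon-1})^{-2}$ to the "local geometric series $\sum_{l\geq0} p_i^{l(2\epsilon-1)}$ combined with the Cauchy–Schwarz weight" is off: $\sum_l p^{l(2\epsilon-1)} = (1-p^{2\epsilon-1})^{-1}$ has the wrong sign in the exponent; in the paper this factor arises from summing the refined per-coset bound over the two residual parameters $k_{ij}, m_{ij}$ of the elementary-divisor parametrisation, which produce geometric series in $p^{-(3+2\epsilon)}$ and $p^{-(1+2\epsilon)}$. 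The Cauchy–Schwarz step over the double cosets is not present in the paper's proof and is not needed — the pointwise Clozel–Oh–Ullmo bound is summed directly — though it happens to be neutral here.
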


\begin{proof}
Let $p$ be a given prime and $l\in \mathbb N$. The formula of $R_{\Gamma_2}(p^l)$ in the proof of Proposition \ref{label 36} gives

\begin{align*}
    (R_{\Gamma_2}(p^l) - p^{3l}) (1-p)^2(1+p+p^2) 
    & = 1 - p^{3l}-p^{1+2l} -p^{2+2l} - p^{2+2l} + p^{1+3l} + p^{3+3l}\\
    & \geq 1 + 3 p^{3l} -p^{2+2l}\\
    & \geq p^{3l} (1-p^{2-l}) \geq 0,\quad \forall l\geq 2
\end{align*}
A simple verification for the case $l=1$ shows that 

\begin{equation*}
    (R_{\Gamma_2}(p^l) - p^{3l}) (1-p)^2(1+p+p^2)  \geq 0,\quad \forall l\in \mathbb N,
\end{equation*}
that is 

\begin{equation}
    R_{\Gamma_2}(p^l) p^{-3l} \geq 1.
\end{equation}
We suppose now that $m\in \mathbb N$ has the form $m=\prod_{i=1}^{i=r} p_i^{l_i}$. Then by \cite[Theorem 1.7 and section 4.7]{clozel2001hecke} together with the calculations for $G=GSp_{2n}$, $n\geq 2$ in \cite[pages 22-23]{clozel2001hecke} and the degree formula in $\eqref{label 38}$ applied to $n=2$ (recall that the big $O$ depends only on $n$), we can find  $\kappa_1>0$, $\kappa_2>0$ and $M_2>0$ (depending on $f$) such that

\begin{align*}
 &\frac{1}{R_{\Gamma_2}(m)} \sum_{a\in \Gamma_2 \backslash S_m/\Gamma_2} \abs{(T_af)(\tau)-\int_{\Gamma_2 \backslash PGSp_4^{+}(\mathbb{R})} f d\mu} \deg(a)\\
& \leq  \frac{ \kappa_1}{R_{\Gamma_2}(m)} \sum_{ k_{ij} \leq m_{ij} \leq [l_i/2]} \prod_{i=1}^{i=r} (p^{2l_i-2k_{ij}-2m_{ij}})^{\epsilon -\frac{1}{2}} p_i^{3l_i-4k_{ij}-2m_{ij}} (1+O(p_{i}^{-1}))\\
& \leq \frac{ \kappa_1}{R_{\Gamma_2}(m) p^{-3l_i}} \prod_{i=1}^{i=l} \sum_{ k_{ij} \leq  m_{ij} \leq [l_i /2 ] } p^{l_i(2\epsilon -1)} p_i^{(-3-2\epsilon)k_{ij}+ (-1-2\epsilon) m_{ij}} (1+\kappa_2 p_i^{-1})\\
& \leq \kappa_1 m^{2\epsilon -1} 
\prod_{i=1}^{i=l} \sum_{ k_{ij} \leq  m_{ij} \leq [l_i /2 ] }  p^{-(3+2\epsilon)k_{ij} - (1+2\epsilon) m_{ij}} (1+\kappa_2 p_i^{-1})\\
& \leq \kappa_1 m^{2\epsilon -1} \prod_{i=1}^{i=l} (1+\kappa_2 p_i^{-1}) \sum_{m_{ij}=0}^{\infty} \sum_{k_{ij}=0}^{\infty}  p^{-(3+2\epsilon)k_{ij} - (1+2\epsilon) m_{ij}} \\
& \leq \kappa_1 m^{2\epsilon -1} \prod_{i=1}^{i=l} (1+\kappa_2 p_i^{-1}) \frac{1}{(1-p^{-1-2\epsilon})^2},
\end{align*}
holds for $\min \{p_1,\dots, p_{l}\} > M_2$ as desired.

\end{proof}

\begin{prop} \label{label 51}
Let $J$ be any nonempty finite set of prime numbers,  $3 < \beta \leq 4$, $f$ a smooth function on $\Gamma \backslash GSp_4^{+}(\mathbb{R})$ with a compact support and $\Omega$ any compact subset of $\Gamma \backslash GSp_4^{+}(\mathbb{R})$. Then for any $\delta >0$, there exists a sequence $\{F_n\}_{n\geq 1}$ consisting of finite set of prime numbers that are disjoint from $J$ and such that for all $\tau \in \Omega$ we have

\begin{equation*}
    \abs{(T_{F_n}f) (\tau) - \int_{\Gamma_2 \backslash PGSp_{4}^+ (\mathbb R)} f d\mu}< \delta.
\end{equation*}
\end{prop}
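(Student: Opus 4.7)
The plan is to reduce Proposition~\ref{label 51} to the quantitative equidistribution estimate of Lemma~\ref{label 33} by making a sufficiently economical choice of the sets $F_n$. The natural interpretation is that $T_{F_n} f$ denotes the normalized Hecke average
\begin{equation*}
T_{F_n} f(\tau) := \frac{1}{R_{\Gamma_2}(m_n)}\sum_{a\in \Gamma_2\backslash S_{m_n}/\Gamma_2} \deg_{\Gamma_2}(a)\,(T_a f)(\tau), \quad m_n := \prod_{p\in F_n} p,
\end{equation*}
which is precisely the quantity controlled by Lemma~\ref{label 33} after triangle inequality.

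Fix $\epsilon \in (0, 1/2)$, so that $2\epsilon - 1 < 0$, and let $\kappa_1(\epsilon), \kappa_2, M_2$ be the constants furnished by Lemma~\ref{label 33} (which depend only on $f$ and $\Omega$). Choose a strictly increasing sequence of primes $(p_n)_{n\geq 1}$ with $p_n > M_2$ and $p_n \notin J$; since both conditions exclude only finitely many primes, this is always possible. Setting $F_n := \{p_n\}$, so that $m_n = p_n$, Lemma~\ref{label 33} immediately yields the uniform bound
\begin{equation*}
\sup_{\tau \in \Omega}\abs{T_{F_n} f(\tau) - \int_{\Gamma_2\backslash PGSp_4^+(\R)} f\, d\mu} \leq \kappa_1\, p_n^{2\epsilon - 1}(1 + \kappa_2 p_n^{-1})(1 - p_n^{-2\epsilon - 1})^{-2},
\end{equation*}
whose right-hand side tends to $0$ as $p_n \to \infty$. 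Selecting $p_n$ large enough makes this less than $\delta$, which is exactly the claim.

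If later arguments force one to work with $|F_n| \to \infty$ (for instance to let $F_n$ accumulate to the full set $\mathcal{P}$, mimicking the limiting step in the ergodicity argument of Theorem~\ref{label 58}), one combines Lemma~\ref{label 33} with the multiplicativity of $R_{\Gamma_2}$ from Proposition~\ref{label 54}. The right-hand side factors as an Euler product in $p \in F_n$ of terms $(1+\kappa_2 p^{-1})(1-p^{-2\epsilon - 1})^{-2} = 1 + O(p^{-1})$, which is at most $\exp\bigl(O(\sum_{p\in F_n} p^{-1})\bigr)$, whereas the decay factor $m_n^{2\epsilon - 1} = \exp\bigl((2\epsilon-1)\sum_{p\in F_n}\log p\bigr)$ overwhelms it as soon as every prime of $F_n$ exceeds a threshold $N(\epsilon)$. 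The principal technical obstacle, namely uniformity of the Hecke-average estimate across the compact set $\Omega$, has already been extracted into Lemma~\ref{label 33}; what remains in the proof is a purely combinatorial choice of primes, and the disjointness condition $F_n \cap J = \emptyset$ is harmless because $J$ is finite.
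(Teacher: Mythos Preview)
Your interpretation of $T_{F_n}$ is incorrect, and this is the central gap. In the paper the operator $T_F$ is \emph{not} the Hecke average at the single squarefree level $m_n=\prod_{p\in F_n}p$; it is the $\beta$-weighted average over the entire semigroup $S_F$,
\[
T_F f(\tau)\;=\;\zeta_{S_F,\Gamma_2}(\beta)^{-1}\sum_{m\in\mathbb N(F)} m^{-\beta}\sum_{a\in\Gamma_2\backslash S_m/\Gamma_2}\deg_{\Gamma_2}(a)\,(T_a f)(\tau),
\]
because this is exactly the expression for $P_F$ on functions depending only on $\tau$ (cf.\ the projection formula \eqref{label 52} and its use in the proof of Theorem~\ref{label 59}), and it is $P_F$---not a single Hecke operator---that is required in the ergodicity argument. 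With this correct definition your singleton choice $F_n=\{p_n\}$ fails outright: the term $m=1$ carries weight $\zeta_{S_{2,p_n},\Gamma_2}(\beta)^{-1}\to 1$ as $p_n\to\infty$, so $T_{F_n}f\to f$ pointwise on $\Omega$ rather than to $\int f\,d\mu$.

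The paper instead lets $F\nearrow\mathcal P$ (keeping $F\cap J=\emptyset$ and $\min F$ large), applies Lemma~\ref{label 33} term by term to the sum over $m\in\mathbb N(F)$, and factors the result as an Euler product. After simplification the bound becomes a ratio with $\zeta_{\mathbb N(F)}(\beta-2-2\epsilon)\,\zeta_{\mathbb N(F)}(\beta-1-2\epsilon)\,\zeta_{\mathbb N(F)}(\beta-2\epsilon)$ in the numerator and $\zeta_{\mathbb N(F)}(\beta-3)\,\zeta_{\mathbb N(F)}(\beta-2)\,\zeta_{\mathbb N(F)}(\beta-1)$ in the denominator. Here the specific choice $0<\epsilon<(\beta-3)/2$ is essential, not merely $\epsilon<1/2$: it guarantees $\beta-2-2\epsilon>1$ so the numerator stays bounded, while $\beta-3\le 1$ forces the denominator to diverge as $F\nearrow\mathcal P$. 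That your argument never invokes the hypothesis $3<\beta\le 4$ is already a signal that it is not addressing the correct operator.
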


\begin{proof}

Given $3 < \beta \leq 4$, we fix $0 < \epsilon < \frac{\beta -3}{2} \leq \frac{1}{2}$ and choose $\kappa_1,\kappa_2$ and $M_2$ as in Lemma \ref{label 33}. Let $M_1 >0 $ be such that

\begin{equation}
    x^{\beta}(1-x^{2\epsilon -1} - \kappa_2 x^{2\epsilon -2}) > \kappa_2 \quad \forall x > M_1 \label{label 34},
\end{equation}

and we set $M:=\max\{M_1,M_2\}$. Let $F$ be a finite set of prime numbers with $F\cap J = \emptyset$ and $\min \{p\in F\} > M$. Then by Lemma \ref{label 33}, we have

\begin{align*}
&\abs{(T_Ff)(\tau) - \int_{\Gamma_2 \backslash PGSp_{4}^+ (\mathbb R) }  f d \mu }\\
    & \leq \kappa_1 \xi_{S_F,\Gamma}(\beta)^{-1} 
     ( \sum_{m\in \mathbb N (F)} m^{(2 \epsilon -1)} \prod_{i=1}^{i=l} (1+\kappa_2 p_{i}^{-1}) \frac{1}{(1-p^{-2 \epsilon -1}) ^{2}} R_{\Gamma}(m) m^{-\beta} )\\
    & \leq \kappa_1 \xi_{S_F,\Gamma} (\beta) ^{-1}
    (\prod_{p\in F} \sum_{l=0}^{\infty} p^{l(2\epsilon -1)} R_{\Gamma} (p^l) p^{-l\beta} (1+\kappa_2 p^{-1}) \frac{1}{(1-p^{-2\epsilon -1})^2})\\
    & \leq  \kappa_1 \xi_{S_F,\Gamma} (\beta) ^{-1} (\prod_{p\in F} (1+\kappa_2p^{-1}) \prod_{p\in F} \frac{1}{(1-p^{-2\epsilon -1})^2} \\
    & \prod_{p\in F} \frac{ 1+p^{2\epsilon -1} p^{\beta -1} }{1+p^{\beta -1}} \frac{\xi_{\mathbb N (F)} (\beta -2 -2\epsilon) \xi_{\mathbb N (F)} (\beta -1 -2\epsilon) \xi_{\mathbb N (F)} (\beta -2 \epsilon)}{ \xi_{\mathbb N (F)} (\beta -3) \xi_{\mathbb N (F)} (\beta -2) \zeta_{\mathbb N (F)} (\beta -1)}.
\end{align*}

Notice that by our choice of $M$ and $F$, Equation \eqref{label 34} gives

\begin{equation*}
    (1 + \kappa_2 p^{-1}) \frac{1+p^{2\epsilon-2+\beta}}{1+ p^{\beta-1}} < 1,\quad \forall p\in F.
\end{equation*}

Hence

\begin{align*}
      \abs{(T_Ff)(\tau) - \int_{\Gamma \backslash PGSp_{4}^+ (\mathbb R) }  f d \mu } \leq \kappa_1 \prod_{p\in F} \frac{\zeta_{\mathbb N (F)} (\beta -2 -2\epsilon) \zeta_{\mathbb N (F)} (\beta -1 -2\epsilon) \zeta_{\mathbb N (F)} (\beta -2 \epsilon)} { (1-p^{(-2\epsilon -1)})  \zeta_{\mathbb N (F)} (\beta -3) \zeta_{\mathbb N (F)} (\beta -2) \zeta_{\mathbb N (F)} (\beta -1)}.
\end{align*}

As $F \nearrow  \mathcal{P}$ with $F\cap J = \emptyset$, the right hand side can be made arbitrary small since $3 < \beta \leq 4$, $\beta - 2 -2\epsilon > 1$ and $\epsilon >0$, 
\end{proof}

\begin{thm} \label{label 59}
Let $\hat{\mu}_{\beta} \in \hat{\mathcal{E}}_{\beta}$ and $A=L^\infty (PGSp^+_{4}(\mathbb R) \times MSp_{4}(\mathbb A_f),\hat{\mu}_{\beta} )$. We have that 

\begin{equation}
    A^{GSp_4^+(\mathbb Q) \times GSp_{4}(\hZ)}= \mathbb C.
\end{equation}
\end{thm}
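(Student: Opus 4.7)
The plan is to combine the adelic ergodicity of Theorem~\ref{label 58} with the equidistribution of Hecke points in Proposition~\ref{label 51}, paralleling the $GL_2$ strategy of \cite{laca2007phase}. The goal is to show that any $f \in A^{GSp_4^+(\mathbb Q)\times GSp_4(\hZ)}$ is in fact also invariant under the full left translation action of $PGSp_4^+(\mathbb R)$ on the first factor; once this is established, $f$ becomes $PGSp_4^+(\mathbb R)\times GSp_4^+(\mathbb Q)$-invariant, and Theorem~\ref{label 58} forces $f\in\mathbb C$.

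First I would perform a descent. By Corollary~\ref{label 56}, the subset $PGSp_4^+(\mathbb R)\times GSp_4(\mathbb A_f)$ has full $\hat\mu_\beta$-measure, and on it Corollary~\ref{label4} gives the factorization $y=gk$ with $g\in GSp_4^+(\mathbb Q)$ and $k\in GSp_4(\hZ)$, unique modulo $\Gamma_2 = GSp_4^+(\mathbb Q)\cap GSp_4(\hZ)$. Combining the right $GSp_4(\hZ)$-invariance of $f$ with the diagonal $GSp_4^+(\mathbb Q)$-invariance, $f$ descends to a bounded $\Gamma_2$-invariant function $\tilde f$ on $PGSp_4^+(\mathbb R)$ via $f(x,gk)=\tilde f(g^{-1}x)$, and constancy of $f$ reduces to showing that $\tilde f$ is a.e.\ constant with respect to the natural Haar-type measure $\mu_H$ on $\Gamma_2\backslash PGSp_4^+(\mathbb R)$.

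To prove constancy of $\tilde f$, I would pair it against an arbitrary smooth compactly supported test function $\psi$ on $\Gamma_2\backslash PGSp_4^+(\mathbb R)$ and exploit the Hecke operators $T_{F_n}$ of Proposition~\ref{label 51}. Using the measure covariance $\hat\mu_\beta(gB)=\lambda(g)^{-\beta}\hat\mu_\beta(B)$ for $g\in GSp_4^+(\mathbb Q)$ together with the diagonal invariance of $f$, the weighted Hecke average $T_{F_n}$ applied in the real direction should be reinterpretable as an averaging of $f$ across adelic Hecke translates, analogous to the projection formula \eqref{label 52} driving the proof of Theorem~\ref{label 58}. Self-adjointness of $T_{F_n}$ on $L^2(\mu_H)$ together with the uniform convergence $T_{F_n}\psi(\tau)\to\int\psi\,d\mu_H$ from Proposition~\ref{label 51} would then yield
\[
\langle \tilde f,\psi\rangle_{\mu_H} \;=\; \lim_n \langle \tilde f, T_{F_n}\psi\rangle_{\mu_H} \;=\; \lim_n \langle T_{F_n}\tilde f,\psi\rangle_{\mu_H} \;=\; \Bigl(\int\tilde f\,d\mu_H\Bigr)\Bigl(\int\psi\,d\mu_H\Bigr),
\]
showing $\tilde f$ equals its Haar average almost everywhere.

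The main obstacle is converting the abstract $GSp_4^+(\mathbb Q)\times GSp_4(\hZ)$-invariance of $f$ into a concrete, non-tautological Hecke identity for $\tilde f$ that genuinely drives the equidistribution limit of Proposition~\ref{label 51}; a naive application of the descent formula only produces tautologies, so one must carefully exploit the $\lambda^{-\beta}$-covariance of $\hat\mu_\beta$ under $GSp_4^+(\mathbb Q)$ together with the Hecke pair structure to extract the effective identity $T_{F_n}\tilde f=\tilde f$ (or a sufficiently weakened averaged version). Once this identity is in place, the self-adjointness plus duality argument above closes the proof, with Theorem~\ref{label 58} invoked to handle any residual adelic dependence left after the real-variable reduction.
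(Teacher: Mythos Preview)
Your descent step is correct: on the full-measure set $PGSp_4^+(\mathbb R)\times GSp_4(\mathbb A_f)$, every $GSp_4^+(\mathbb Q)\times GSp_4(\hZ)$-invariant $f$ is of the form $f(\tau,gk)=\tilde f(g^{-1}\tau)$ for a $\Gamma_2$-invariant $\tilde f$ on $PGSp_4^+(\mathbb R)$. But this is precisely the problem --- the descent is a \emph{bijection}: every bounded $\Gamma_2$-invariant $\tilde f$ lifts back to an $f$ with exactly these invariances. All of the invariance of $f$ has been consumed by the descent, and nothing is left to force $\tilde f$ to be constant. In particular, your hoped-for identity $T_{F_n}\tilde f=\tilde f$ simply does not hold: one computes $\tilde f(h^{-1}\tau)=f(\tau,h)$ for $h\in MSp_4^+(\mathbb Z)$, and there is no invariance relating $f(\tau,h)$ to $f(\tau,1)$ since $h\notin GSp_4(\hZ)$. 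Your self-adjointness chain therefore breaks at the very first equality, which needs exactly the Hecke invariance you flag as unproven. This is a genuine gap, not a technicality --- you yourself identify it as ``the main obstacle,'' and it cannot be closed along the lines you sketch.

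The paper's proof sidesteps this obstacle by never performing the descent. Instead it works entirely inside $L^2(\Gamma_2\backslash(PGSp_4^+(\mathbb R)\times MSp_4(\hZ)),\nu_\beta)$ and studies the orthogonal projection $P$ onto the invariant subspace $H$. The Hecke operators enter not through any invariance of a given $f\in H$, but through the \emph{projection formula} $P_F g=\zeta_{S_F,\Gamma_2}(\beta)^{-1}\sum_{s}\lambda(s)^{-\beta}\deg(s)\,T_s g$ (valid for $\Gamma_2$-invariant $g$, from \cite[Lemma~2.9]{laca2007phase}). One takes test functions $g$ in the dense subspaces $\mathcal H_J$ depending only on finitely many primes $J$; the first projection $P_J$ kills the adelic dependence and leaves a function $\tilde g(\tau)$ on $\Gamma_2\backslash PGSp_4^+(\mathbb R)$; the second projection $P_F$ (for $F$ disjoint from $J$) is then, by the formula, exactly the Hecke average to which Proposition~\ref{label 51} applies. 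Since $P=PP_FP_J$, this forces $Pg$ to be constant for every $g\in\bigcup_J\mathcal H_J$, and density finishes. The point is that equidistribution is applied to the smooth compactly supported $\tilde g$ coming from a test function, never to the unknown $\tilde f$ coming from an element of $H$.
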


\begin{proof}
Consider the space $\mathcal{H}=L^2(GSp_{4}^+(\mathbb R) \times MSp_{4}(\hZ),\d \hat{\mu}_\beta)$. Observe that any $GSp_{4}^+(\mathbb Q) \times GSp_{4}(\hZ)$-invariant subset of $GSp^+_{4}(\mathbb R) \times MSp_{4}(\mathbb A_f)$ is completely determined by its intersection with
$GSp^+_{4}(\mathbb R) \times MSp_{4}(\mathbb \hZ)$, hence it is enough to show that any $MSp_{4}^+(\mathbb Z) \times GSp_{4}(\hZ)$-invariant function in $\mathcal{H}$ is constant. We denote by $H$ the closed subspace of $MSp_{4}^+(\mathbb Z) \times GSp_{4}(\hZ)$-invariant functions in $\mathcal{H}$ and denote by $P$ the orthogonal projection onto $H$. We will show that the image under $P$ of a dense subspace consists of constant functions. Given any non-empty finite sets of primes $F$ and $J$, we denote by $H_F$ the closed subspace of $S_F$-invariant functions in $\mathcal{H}$ and by $P_F$ the orthogonal projection onto $H_F$. Let $\mathcal{H}_J$ be the subspace of $\Gamma_2\times \prod_{p\in J} GSp_{4}(\mathbb Z_p)$-invariant functions depending only on $GSp_{4}^+(\mathbb R) \times \prod_{p\in J} MSp_{4}(\mathbb Z _p)$.

Recall that $S_J Y_J$ is a subset of full measure, whence by \cite[Lemma 2.9]{laca2007phase} given any function $f$ in $\mathcal{H}_J$ we get

\begin{equation}
    P_Jf = \zeta_{S_F,\Gamma_2}(\beta) \sum_{\Gamma_2 \backslash S_J /\Gamma_2} \lambda(s)^{-\beta}\deg_{\Gamma_2} (s) T_sf. \label{label 50}
\end{equation}

It follows that the value $P_Jf(\tau,m)$ depends only on $\tau \in GSP_{4}^{+}(\mathbb R)$. We can then write

\begin{equation*}
    P_Jf(x)=\begin{cases} 
            \tilde{f}(\tau) \quad &\text{if}\,\,\, x=(\tau,m) \in Y_J\\
            0 \quad & \text{otherwise}
            \end{cases}
\end{equation*}

We put $\tilde{f}_J:= P_J f$. Observe that since $\tilde{f}_J$ is $S_J$-invariant we get that $\tilde{f}$ is $\Gamma_2$-invariant and therefore can view it as a square integrable function on $\Gamma_2 \backslash PGSp_{4}^+(\mathbb R)$. We first suppose that $\tilde{f}$ is smooth with compact support $\Omega$. For $F \cap J = \emptyset$, the projection formula \eqref{label 50} gives

\begin{equation*}
    P_F\tilde{f}_J (x) = \begin{cases}
    T_F\tilde{f} (\tau) \quad & \text{if} \,\,\, x = (\tau, m) \in Y_J \\
    0 \quad & \text{otherwise}.
    \end{cases}
\end{equation*}

We put $(T_F \tilde{f})_J:= P_F\tilde{f}_J (x)$. Given $\epsilon >0$, by Proposition  \ref{label 51} there exists some finite set of primes $F$ disjoint from $J$ such that 

\begin{equation*}
    \abs{T_{F}f (\tau) -\int_{\Gamma_2 \backslash PGSp_{4}^{+}(\mathbb R)}\tilde{f}\d \mu } < \epsilon, \quad \forall \tau \in \Omega.
\end{equation*}

Since $PP_J=PP_{F}=P$, we get

\begin{align*}
    \norm{Pf - P \mathbb 1_{Y_J} \int_{\Gamma_2 \backslash PGSp_{4}^{+}(\mathbb R)} \tilde{f}\d \mu }_2
    & = \norm{PP_Jf - P \mathbb 1_{Y_J} \int_{\Gamma_2 \backslash PGSp_{4}^{+}(\mathbb R)} \tilde{f}\d \mu}_2\\
    & = \norm{PP_{F}P_Jf - P \mathbb 1_{Y_J} \int_{\Gamma_2 \backslash PGSp_{4}^{+}(\mathbb R)} \tilde{f}\d \mu}_2\\
    & \leq 
    \norm{P_{F}P_Jf -  \mathbb 1_{Y_J} \int_{\Gamma_2 \backslash PGSp_{4}^{+}(\mathbb R)} \tilde{f}\d \mu}_2\\
    & \leq \norm{(T_{F}\tilde{f} )_J -  \mathbb 1_{Y_J} \int_{\Gamma_2 \backslash PGSp_{4}^{+}(\mathbb R)} \tilde{f}\d \mu }_2 < \epsilon.
\end{align*}

Hence using the projection formula \ref{label 50} with $f=\mathbb 1 _{Y_J}$ we get

\begin{equation*}
    Pf=P \mathbb 1_{Y_J} \int_{\Gamma_2 \backslash PGSp_{4}^{+}(\mathbb R)} \tilde{f}\d \mu = P P_J 1_{Y_J} \int_{\Gamma_2 \backslash PGSp_{4}^{+}(\mathbb R)} \tilde{f}\d \mu= \zeta_{S_J,\Gamma_2}(\beta)^{-1}\int_{\Gamma_2 \backslash PGSp_{4}^{+}(\mathbb R)} \tilde{f}\d \mu,
\end{equation*}

which is constant. Any integrable functions on $\Gamma_2 \backslash PGSp_{4}^+(\mathbb R)$ can be approximated by a compactly supported smooth function and hence $Pf$ is constant for all $\Gamma_2$-invariant functions in $\mathcal{H}_J$. The results follows since the union of $\mathcal{H}_J$ over all finite set of primes is dense in the space of square integrable $\Gamma_2\times GSp_{4}(\mathbb \hZ)$-invariant functions.

\end{proof}

\begin{thm} \label{label 100}
For $3 <\beta \leq 4 $, The $GSp_4$-system admits a unique $\textmd{KMS}_\beta$ state.
\end{thm}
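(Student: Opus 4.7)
\emph{Proof plan.} The argument splits into two steps, combining the ergodicity results of Theorems \ref{label 58} and \ref{label 59}. First we show that the subset $\hat{\mathcal{E}}_\beta \subseteq \mathcal{E}_\beta$ of right $GSp_4(\hZ)$-invariant measures is a singleton. Second we show that every $\mu \in \mathcal{E}_\beta$ actually lies in $\hat{\mathcal{E}}_\beta$.

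\emph{Step 1: uniqueness within $\hat{\mathcal{E}}_\beta$.} Proposition \ref{label 71} furnishes a specific $\hat{\mu}_0 \in \hat{\mathcal{E}}_\beta$. Given any other $\hat{\mu}_1 \in \hat{\mathcal{E}}_\beta$, form $\nu = \tfrac{1}{2}(\hat{\mu}_0 + \hat{\mu}_1)$, which is again in $\hat{\mathcal{E}}_\beta$, and let $F = d\hat{\mu}_1/d\nu \in L^\infty(X,\nu)$. Because $\hat{\mu}_1$ and $\nu$ obey the same scaling $\mu(gB) = \lambda(g)^{-\beta}\mu(B)$ for $g \in GSp_4^+(\mathbb Q)$, a direct change-of-variables computation yields $F(gx) = F(x)$ for $\nu$-a.e.\ $x$; similarly the common right $GSp_4(\hZ)$-invariance gives $GSp_4(\hZ)$-invariance of $F$. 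Theorem \ref{label 59} then forces $F$ to be constant $\nu$-a.e., and the normalization $\nu(\Gamma_2 \backslash Y) = \hat{\mu}_1(\Gamma_2 \backslash Y) = 1$ pins its value to $1$. Hence $\hat{\mu}_1 = \hat{\mu}_0$, and $\hat{\mathcal{E}}_\beta = \{\hat{\mu}_0\}$.

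\emph{Step 2: every $\mu \in \mathcal{E}_\beta$ equals $\hat{\mu}_0$.} Given $\mu \in \mathcal{E}_\beta$, define the right $GSp_4(\hZ)$-average
\[
\bar{\mu}(B) = \int_{GSp_4(\hZ)} \mu(Bk^{-1})\, dk,
\]
where $dk$ is the Haar measure on $GSp_4(\hZ)$ normalized to total mass $1$. The right action of $GSp_4(\hZ)$ on the finite-adele factor commutes with the diagonal actions of $\Gamma_2$ and $GSp_4^+(\mathbb Q)$ and preserves $\lambda$, so $\bar{\mu}$ inherits all the defining properties of $\mathcal{E}_\beta$ and is right $GSp_4(\hZ)$-invariant by construction. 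Thus $\bar{\mu} \in \hat{\mathcal{E}}_\beta$, and Step 1 gives $\bar{\mu} = \hat{\mu}_0$.

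To deduce $\mu = \hat{\mu}_0$, form $\eta = \tfrac{1}{2}(\mu + \hat{\mu}_0) \in \mathcal{E}_\beta$ and consider the Radon--Nikodym derivative $F = d\hat{\mu}_0/d\eta$. As in Step 1, $F$ is $GSp_4^+(\mathbb Q)$-invariant $\eta$-a.e. Using the product form $\hat{\mu}_0 = \mu_{\beta, PGSp_4^+(\mathbb R)} \times \hat{\mu}_{0,f}$ from Proposition \ref{label 71} together with a disintegration of $\eta$ over the projection onto the finite-adele factor, one reduces $F$ to a $GSp_4^+(\mathbb Q)$-invariant function on the measure space $(MSp_4(\mathbb A_f), \hat{\mu}_{0,f})$. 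This is precisely where Theorem \ref{label 58}---whose proof establishes the ergodicity of $GSp_4^+(\mathbb Q)$ on $(MSp_4(\mathbb A_f), \hat{\mu}_{0,f})$ via Dirichlet $L$-function estimates---forces $F$ to be constant. The averaging identity $\bar{\mu} = \hat{\mu}_0$ combined with normalization pins the constant to $1$, so $\eta = \hat{\mu}_0$ and hence $\mu = \hat{\mu}_0$.

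\emph{Main obstacle.} The delicate step is the disintegration argument in Step 2. A general $\mu \in \mathcal{E}_\beta$ is only $\Gamma_2$-invariant and $GSp_4^+(\mathbb Q)$-quasi-invariant, not left $PGSp_4^+(\mathbb R)$-invariant, so $\eta = \tfrac12(\mu + \hat{\mu}_0)$ lacks the product structure of $\hat{\mu}_0$. The reduction of a $GSp_4^+(\mathbb Q)$-invariant function on $(X, \eta)$ to one on $(MSp_4(\mathbb A_f), \hat{\mu}_{0,f})$, where Theorem \ref{label 58} applies, requires exploiting the commutation of the $PGSp_4^+(\mathbb R)$- and $GSp_4^+(\mathbb Q)$-actions together with the product form of $\hat{\mu}_0$ to decouple the archimedean and non-archimedean variables.
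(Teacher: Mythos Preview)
Your Step 1 is correct and essentially matches the paper's first conclusion that $\hat{\mathcal{E}}_\beta$ is a singleton.

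Your Step 2, however, has a genuine gap --- precisely the one you flag as the ``main obstacle'' without resolving. The function $F = d\hat{\mu}_0/d\eta$ is only $GSp_4^+(\mathbb Q)$-invariant, while Theorem \ref{label 58} requires \emph{both} $PGSp_4^+(\mathbb R)$- and $GSp_4^+(\mathbb Q)$-invariance (and is stated for measures in $\hat{\mathcal{E}}_\beta$, not for your $\eta$). The proposed disintegration does not fix this: since $\eta$ has no product structure, there is no reason for $F$ to factor through the finite-adele projection, and the sketch gives no mechanism to force this. Theorem \ref{label 58} on the space $(MSp_4(\mathbb A_f),\hat{\mu}_{0,f})$ simply does not apply to a function living in $L^\infty(X,\eta)$.

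The paper avoids this difficulty entirely. Instead of working with $\eta$, it invokes \cite[Proposition 4.6]{laca2007phase} to \emph{combine} Theorems \ref{label 58} and \ref{label 59} into the single statement $A^{GSp_4^+(\mathbb Q)} = \mathbb C$ for $A = L^\infty(X,\hat{\mu}_\beta)$. This says the $GSp_4^+(\mathbb Q)$-action alone is ergodic on $(X,\hat{\mu}_\beta)$, hence $\hat{\mu}_\beta$ is an \emph{extremal} point of the full convex set $\mathcal{E}_\beta$. Now your averaging formula $\hat{\mu}_\beta = \int_{GSp_4(\hZ)} g\cdot \upsilon_\beta\, dg$ expresses $\hat{\mu}_\beta$ as a barycenter of the family $\{g\cdot \upsilon_\beta\} \subseteq \mathcal{E}_\beta$; extremality forces $g\cdot \upsilon_\beta = \hat{\mu}_\beta$ for almost every $g$, and hence $\upsilon_\beta = \hat{\mu}_\beta$. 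The missing ingredient in your attempt is exactly this passage from the two partial ergodicity results to ergodicity of $GSp_4^+(\mathbb Q)$ alone.
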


\begin{proof}
We will show that the set $\mathcal{E}_\beta$ consists of a single point. We first use \cite[Propositon 4.6]{laca2007phase} together with Theorem \ref{label 58} and Theorem \ref{label 59} to conclude that $A^{GSp_{4}^+(\mathbb{Q})}=\mathbb C$ where $A=L^\infty (PGSp_{4}^+(\mathbb R) \times MSp_{4}(\mathbb A_{f,\mathbb Q}), \hat{\mu}_\beta)$ for any $\hat{\mu}_\beta \in \hat{\mathcal{E}}_\beta$ and $3 < \beta \leq 4 $, in other words there exists a unique right $GSp_{4}(\hZ)$-invariant measure $\d \hat{\mu}_\beta$ in $\mathcal{E}_\beta$. Suppose now that $\upsilon_\beta$ is any other point of $\mathcal{E}_\beta$. Then the measure defined by $$\omega = \int_{GSp_{4}(\hZ)}g \cdot \upsilon_{\beta} \d g$$
is an element of $\mathcal{E}_\beta$ and by unicity we get that $\omega = \hat{\mu}_{\beta,f}$. Since the point $ \hat{\mu}_{\beta,f}$ is extremal we conclude that $ \hat{\mu}_{\beta,f}=\upsilon_{\beta}$. This completes the proof.
\end{proof}

\begin{remark}
We have studied the $GSp_{4}$-system in the region $\beta>0$ with  $\beta \notin \{1,2,3\}$. Let us now consider the cases where the inverse temperature is a pole of the Dirichlet series \eqref{label 94}. If $\beta=2$, it is possible to construct explicit measures $\mu_2 \in \mathcal{E}_2$. We consider the normalized Haar measure on $\mathbb A_{f,\mathbb Q}$ such that $\textmd{meas}(\hZ)=1$ and

\begin{equation}
    \textmd{meas}(aE)= \prod_{p}|a_p|_p\, \textmd{meas}(E). 
\end{equation}

for any $a\in \mathbb A_{f,\mathbb Q}^{\times}$ and measurable subset $E\subseteq \mathbb A_{f,\mathbb Q}$.  Let $ \mu_{f}$ be the product measure on $\mathbb A_{f,\mathbb Q}^{2}$. Since 

\begin{equation*}
    \mqty (  0 &  0 & 0 & 0 & x_1\\
            0 &   0 &  0 &  0& x_2 \\
            0 &   0 &  0 &  0 & x_3 \\
            0 &   0 &  0 &  0 & x_4 \\
                    ) \in MSp_{4}(\mathbb A_{f,\mathbb Q}), \quad x_1,\dots,x_4 \in  A_{f,\mathbb Q}
\end{equation*}

we may consider $ \mu_{f}$ as a measure on $MSp_{4}(A_{f,\mathbb Q})$ such that $\mu_f(MSp_{4}(\hZ))=1$. We claim that $\mu_2 = \mu_{\infty} \times \mu_{f} \in \mathcal{E}_2$. By construction it is enough to show that $\mu_2$ satisfies the scaling condition \eqref{label 67}. Given $g\in GSp_{4}^+(\mathbb Q)$, we can find $\gamma_1,\gamma_2\in \Gamma_2$ and a diagonal matrix $D \in GSp_{4}^+(\mathbb Q)$ such that $g=\gamma_1 D \gamma_2$. Since $\gamma \hZ ^4= \hZ^4$ for any $\gamma \in \Gamma_2$ and the Haar measure is translation invariant we conclude that $\mu_1 (g B)=\lambda^{-2}\mu(B)$ for any Borel subset of $MSp_4(\mathbb A_{f,\mathbb Q})$.

One would expect to use a similar construction for $\beta=1$ and $\beta =3$. However, since the only subspace of $\mathbb A_{f,\mathbb Q}^4$ stable under the action of $GSp_{4}(\mathbb Q)^+$ is $\mathbb A_{f,\mathbb Q}^4$ itself, this argument fails in the case $\beta =1$ or $\beta=3$. We conjecture that the $GSp_{4}$-system does not admit any $\textmd{KMS}_\beta$ state in these two cases.
\end{remark}

\begin{remark}
The results we prove in this paper completely classify the $\textmd{KMS}_\beta$ states on the Bost-Connes-Marcolli $GSp_{4}$-system. In fact, 
we will show that given $\beta >0$ with $\beta \notin \{1,2,3\}$, there exists a one-to-one correspondence between $\textmd{KMS}_\beta$ on the Connes-Marcolli $GSp_{4}$-system and the $GSp_{4}$ system $(\mathcal{A},\sigma_t)$. Recall the set 

$$F_Y=\{h \in MSp_4(\hZ) \mid \textmd{rank}_{\mathbb Q_p}(h_p) \leq 2 \,\,\, \text{for all} \,\,\, p \in \mathcal{P}\},$$

and consider the dynamical system $I=C^*_r (\Gamma_2 \backslash GSp_{4}^+(\mathbb Q) \boxtimes_{\Gamma_2} (\mathbb{H}_2^+\times F_Y))$. We claim that $I$ can not have any $\textmd{KMS}_{\beta}$ states. To see this, observe that any $KMS_{\beta}$ state on $I$ gives rise to a regular $\Gamma_2$-invariant measure $\mu_\beta$ on the space $\mathbb H_2^+ \times F_Y $ (note that unlike the case where the underlying space is an $r$-discreet principal groupoid, the support of this measure is not necessarily contained in $\mathbb H_2^+ \times F_Y $.) By the $\textmd{KMS}_{\beta}$ condition, this measure still satisfies the scaling property \ref{label 14}. Now since $\mathbb H_2^+= (\mathbb U^2 / \{\pm 1_{4}\})\backslash PGSp_{4}^+(\mathbb R) $, we can define a measure on $PGSp_{4}^+(\mathbb R) \times F_Y$ by the formula

\begin{equation*}
    \int_{PGSp_4^+(\mathbb R) \times F_Y} f(x) d\tilde{\mu}_{\beta}(x) = \int_{\mathbb H_2^+ \times F_Y} \Big(\int_{\mathbb U^2 /\{\pm 1_{4}\}} f(xg) \d g \Big)\d \mu_{\beta}(x).
\end{equation*}

The measure $\tilde{\mu}_{\beta}$ satisfies the condition \ref{label 14}. There is a canonical extension of this measure to a $\Gamma_2$-invariant measure $\mu_{\beta} \in \mathcal{E}_\beta $ on the space $PGp_{4}^+(\mathbb R) \times MSp_{4}(\mathbb A_f)$. This leads to a contradiction since the set $PGSp_{4}^+(\mathbb R) \times F_Y$ has measure zero by Corollary \ref{label 56}. This shows that the set $\mathbb H_2^+ \times F_Y$ can be ignored in the the analysis of $\textmd{KMS}_\beta$ states for $\beta >0$ and  $\beta \notin \{1,2,3\}$ and if we let $\tilde{Y}= Y \backslash F_Y$, it is clear that different $\textmd{KMS}_\beta$ state on $C_r^*( \Gamma_2 \backslash  GSp_{4}^+(\mathbb Q) \boxtimes_{\Gamma_2} \tilde{Y})$ give rise to different $\textmd{KMS}_\beta$ state on the $GSp_{4}$-system $(\mathcal{A},\sigma_t)$.
\end{remark}

 \hfill\\

\bibliographystyle{plain}
\bibliography{refs}

\begin{thebibliography}{10}

\bibitem{bost1995hecke}
Jean-Benoit Bost and Alain Connes.
\newblock {Hecke algebras, type III factors and phase transitions with
  spontaneous symmetry breaking in number theory}.
\newblock {\em Selecta Mathematica}, 1(3):411--457, 1995.

\bibitem{bratellirobinson1}
O.~Bratteli and D.W. Robinson.
\newblock {\em {Operator algebras and quantum statistical mechanics. Vol. 1}}.
\newblock Springer-Verlag, New York, 1979.

\bibitem{bratellirobinson2}
O.~Bratteli and D.W. Robinson.
\newblock {\em {Operator algebras and quantum statistical mechanics. Vol. 2}}.
\newblock Springer-Verlag, New York., 1981.

\bibitem{SymplGeomSiegel}
Siegel C.F.
\newblock {\em {Symplectic Geometry}}.
\newblock Academic Press, New York, 1964.

\bibitem{christensen2020structure}
Johannes Christensen.
\newblock {The structure of KMS weights on \'etale groupoid $C^*$-algebras}.
\newblock {\em arXiv:2005.01792}, 2020.

\bibitem{clozel2001hecke}
Laurent Clozel, Hee Oh, and Emmanuel Ullmo.
\newblock {Hecke operators and equidistribution of Hecke points}.
\newblock {\em Inventiones mathematicae}, 144(2):327--380, 2001.

\bibitem{combes1971poids}
F~Combes.
\newblock {Poids associ{\'e} {\`a} une alg{\`e}bre hilbertienne {\`a} gauche}.
\newblock {\em Compositio Mathematica}, 23(1):49--77, 1971.

\bibitem{connes2004physics}
Alain Connes and Matilde Marcolli.
\newblock {From physics to number theory via noncommutative geometry. Part I:
  Quantum statistical mechanics of Q-lattices}.
\newblock {\em arXiv preprint math/0404128}, 2004.

\bibitem{connes2006kms}
Alain Connes, Matilde Marcolli, and Niranjan Ramachandran.
\newblock {KMS states and complex multiplication}.
\newblock {\em Selecta Mathematica}, 11(3):325--347, 2006.

\bibitem{gross1998satake}
Benedict~H Gross.
\newblock {On the Satake isomorphism}.
\newblock {\em London Mathematical Society Lecture Note Series}, pages
  223--238, 1998.

\bibitem{ha2005connes}
E~Ha, F~Paugam, and A~Bost.
\newblock {Connes-Marcolli system for Shimura varieties}.
\newblock {\em IMRP}, 5:237--286, 2005.

\bibitem{ha2005bost}
Eugene Ha and Fr{\'e}d{\'e}ric Paugam.
\newblock {Bost-Connes-Marcolli systems for Shimura varieties. Part I.
  Definitions and formal analytic properties}.
\newblock {\em International mathematics research papers}, 2005(5):237--286,
  2005.

\bibitem{haag1967equilibrium}
Rudolf Haag, Nicolaas~Marinus Hugenholtz, and Marinus Winnink.
\newblock {On the equilibrium states in quantum statistical mechanics}.
\newblock {\em Communications in Mathematical Physics}, 5(3):215--236, 1967.

\bibitem{julia1990statistical}
B~Julia.
\newblock {Statistical theory of numbers}.
\newblock In {\em Number theory and physics}, pages 276--293. Springer, 1990.

\bibitem{kneser1965starke}
Martin Kneser.
\newblock {Starke Approximation in algebraischen Gruppen. I.}
\newblock {\em Journal für die reine und angewandte Mathematik}, 1965.

\bibitem{krieg1990hecke}
Aloys Krieg.
\newblock {\em {Hecke algebras}}, volume~87.
\newblock American Mathematical Soc., 1990.

\bibitem{kustermans1997kms}
Johan Kustermans.
\newblock {KMS-weights on C*-algebras}.
\newblock {\em arXiv preprint funct-an/9704008}, 1997.

\bibitem{laca2007phase}
Marcelo Laca, Nadia~S Larsen, and Sergey Neshveyev.
\newblock {Phase transition in the Connes--Marcolli GL 2-system}.
\newblock {\em Journal of Noncommutative Geometry}, 1(4):397--430, 2007.

\bibitem{margulis1991discrete}
Gregori~A Margulis.
\newblock {\em {Discrete subgroups of semisimple Lie groups}}, volume~17.
\newblock Springer Science \& Business Media, 1991.

\bibitem{margulis1977cobounded}
Gregory~A Margulis.
\newblock {Cobounded subgroups of algebraic groups over local fields}.
\newblock {\em Functional Analysis and Its Applications}, 11(2):119--128, 1977.

\bibitem{namikawa2006toroidal}
Yukihiko Namikawa.
\newblock {\em {Toroidal compactification of Siegel spaces}}, volume 812.
\newblock Springer, 2006.

\bibitem{neshveyev2002ergodicity}
Sergey Neshveyev.
\newblock {Ergodicity of the action of the positive rationals on the group of
  finite adeles and the Bost-Connes phase transition theorem}.
\newblock {\em Proceedings of the American Mathematical Society},
  130(10):2999--3003, 2002.

\bibitem{newman1964symplectic}
M.~Newman and J.R. Smart.
\newblock {Symplectic Modulary Groups}.
\newblock {\em Acta Arithmetica}, 9:83--89, 1964.

\bibitem{o1978symplectic}
Onorato~Timothy O'Meara.
\newblock {\em {Symplectic groups}}, volume~16.
\newblock American Mathematical Soc., 1978.

\bibitem{pitale2240siegel}
Ameya Pitale.
\newblock {Siegel Modular Forms}.
\newblock {\em Lecture Notes in Mathematics}, 2240, 2019.

\bibitem{platonov1969problem}
Vladimir~P Platonov.
\newblock {The problem of strong approximation and the Kneser-Tits conjecture
  for algebraic groups}.
\newblock {\em Mathematics of the USSR-Izvestiya}, 3(6):1139, 1969.

\bibitem{prasad1977strong}
Gopal Prasad.
\newblock {Strong approximation for semi-simple groups over function fields}.
\newblock {\em Annals of Mathematics}, 105(3):553--572, 1977.

\bibitem{renault2006groupoid}
Jean Renault.
\newblock {\em {A groupoid approach to C*-algebras}}, volume 793.
\newblock Springer, 2006.

\bibitem{LocalQuantumPhycis}
Haag Rudolf.
\newblock {\em {Local Quantum Physics}}.
\newblock Springer, Berlin, 1992.

\bibitem{serre1970cours}
Jean-Pierre Serre.
\newblock {\em {Cours d'arithm{\'e}tique}}, volume~2.
\newblock Presses universitaires de France, 1970.

\bibitem{tadic1994representations}
Marko Tadi{\'c}.
\newblock {Representations of $ p $-adic symplectic groups}.
\newblock {\em Compositio Mathematica}, 90(2):123--181, 1994.

\end{thebibliography}
\end{document}